\documentclass[11pt]{article}             % Use with LaTeX2e
\usepackage{osid}                         %

%%%%%%TOSH
\usepackage{amsmath,amssymb,amscd,amsthm,mathrsfs}
\usepackage{graphicx}
\usepackage{color}

\usepackage[bookmarks=false,pdfstartview={FitH}]{hyperref}

\newcommand{\Adr}{\operatorname{Ad}}

\newcommand{\ad}{\mathrm{ad}}
\newcommand{\trace}{\operatorname{tr}}

\newcommand{\clo}[2]{\ensuremath{\overline{#2}^{#1}}{}}
\newcommand{\clos}[1]{\clo{s}{#1}}
\newcommand{\clou}[1]{\clo{u}{#1}}
\newcommand{\clon}[1]{\clo{1}{#1}}

\newcommand{\uu}{\mathfrak{u}}
\newcommand{\su}{\mathfrak{su}}

\newcommand{\fk}{\mathfrak{k}}

\newcommand{\bK}{\mathbf{K}}

\newcommand{\R}{\mathbb{R}}

\newcommand{\reach}{\mathfrak{reach}}

\newcommand{\hGA}{\mathbf{\Gamma}}

\newcommand{\hH}{\mathbf {H}}

\newcommand{\hU}{\mathbf {U}}

\newcommand{\thmskip}{\medskip\smallskip} %mimics the vertical space between text and a lemma/theorem/...
%%%%%%%TOSH
%%%%%%%%%%%%%%%%%

 %Footnotes are now of the form '1a' so there is no overlap
%LATER setcounter manually in each section

%\counterwithout{footnote}{section}  %Just so footnote counter does not reset within every section

\theoremstyle{plain}% Theorem-like structures provided by amsthm.sty
\newtheorem{lemma}{Lemma}
\newtheorem{thm}{Theorem}
\newtheorem{corollary}{Corollary}
\newtheorem{proposition}{Proposition}

\theoremstyle{definition}
\newtheorem{defi}{Definition}

\theoremstyle{remark}
\newtheorem{remark}{Remark}

%%%%%%%%%%%%%%%%%%%% Custom hyperref pointers

\newcounter{app}
\renewcommand*{\theapp}{\Alph{app}}

\newcommand*\app[1]{%
  \refstepcounter{app}\label{app_#1}\hypertarget{app:#1}{\theapp}}
\newcommand*\appref[1]{\hyperlink{app:#1}{\ref*{app_#1}}}

%%%%%%%%%%%%%%%%%FvE

\title{Reachability in Infinite Dimensional Unital Open Quantum Systems with Switchable GKS-Lindblad Generators} %: The Compact Case}
\author{Frederik vom Ende%\thanks{Supported by ...} 
	\\[1mm]{\footnotesize\it Technische Universit{\"a}t M{\"u}nchen, 
	Dept. Chem., Lichtenbergstra{\ss}e 4, 85747 Garching and\\
   Munich Centre for Quantum Science and Technology (MCQST),  Schellingstra{\ss}e~4, 80799~M{\"u}nchen, Germany \& {frederik.vom-ende@tum.de}}\\[2ex]
        Gunther Dirr
        \\[1mm]{\footnotesize\it Universit{\"a}t W{\"u}rzburg, 
        Institut f{\"u}r Mathematik, Emil-Fischer-Stra{\ss}e 40,\\
	97074 W{\"u}rzburg, Germany \& dirr@mathematik.uni-wuerzburg.de}\\[2ex] 
	Michael Keyl\\[1mm]{\footnotesize\it Freie Universit\"at Berlin, Dahlem Center for Complex Quantum Systems, Arnimallee 14,\\ 14195 Berlin, Germany \& {michael.keyl@tum.de}}\\[2ex]
Thomas Schulte-Herbr{\"u}ggen%\thanks{Supported by ...} 
	\\[1mm]{\footnotesize\it Technische Universit{\"a}t M{\"u}nchen, 
	Dept. Chem., Lichtenbergstra{\ss}e 4, 85747 Garching and\\
   Munich Centre for Quantum Science and Technology (MCQST),  Schellingstra{\ss}e~4,\\ 80799~M{\"u}nchen, Germany \& {tosh@tum.de}}
}

\AtBeginDocument{
  \label{CorrectFirstPageLabel}
  
}

\begin{document}

\maketitle
\begin{abstract}
In quantum systems theory one of the fundamental problems boils down to: given an initial state, which
final states can be reached by the dynamic system in question. Here we consider infinite dimensional 
open quantum dynamical systems following a unital Kossakowski-Lindblad master equation extended
by controls. More precisely, their time evolution shall be governed by an inevitable potentially unbounded
Hamiltonian drift term $H_0$, finitely many bounded control Hamiltonians $H_j$ allowing for (at least) 
piecewise constant control amplitudes $u_j(t)\in{\mathbb R}$ plus a bang-bang (i.e.\ on-off) switchable noise 
term $\hGA_V$ in Kossakowski-Lindblad form. Generalizing standard majorization results from finite to infinite 
dimensions, we show that such bilinear quantum control systems allow to approximately {\em reach
any target state majorized by the initial one} as up to now only has been known in finite dimensional
analogues.---The proof of the result is currently limited to the control Hamiltonians $ H_j$ being bounded and 
noise terms $\hGA_V$ with compact normal $V$.
\end{abstract}

\section{Introduction and Overview}

\subsection{Markovian Bilinear Quantum Control}\label{section_open_bil}

The Kossakowski-Lindblad equation~\cite{Koss72,Koss72b,Lind76,GKS76} plays a central role in quantum
dynamics since it characterizes the infinitesimal generators of the semigroup of 
all (invertible\footnote{Here invertibility only means invertible as linear map, not necessarily 
as quantum map.}) Markovian quantum maps. 

%Assume for the moment that all operators involved are bounded. 
As in \cite{Wolf08a} a quantum map ({\sc cptp} map\footnote{ {\sc cptp} maps are linear 
\underline{c}ompletely \underline{p}ositive and \underline{t}race-\underline{p}reserving.}) 
is called 
{\em (time-dependent) Markovian}, if it is the solution of a (time-dependent) Markovian master equation
\begin{equation}\label{eq:master-map}
\dot F(t) = -\big(i\, \hH(t) + \hGA(t)\big) F(t), \quad F(0) = \mathbf 1\; ,
\end{equation}
where Markovianity and {\sc cptp} property are guaranteed by the Kossakowski-Lindblad form of $\hGA(t)$ in Eq.~\eqref{eq:GKSL}.
Then for finite dimensional  Hamiltonians and noise terms one can show that those 
% such quantum maps are infinitesimal divisible into factors whose norm difference to the identity is 
% arbitrarily small and which are {\sc cptp} maps themselves~\cite{Wolf08a}. 
Markovian quantum maps (including both, time-dependent and time-independent ones) are infinitesimal
divisible into products of exponentials of Kossakowski-Lindblad generators~\cite{Wolf08a} hence leading to 
{\em Lie semigroup} structure \cite{DHKS08}. 
In contrast, {\em non-Markovian} quantum maps (existing even arbitrarily close
to the identity map) are Kraus maps~\cite{Kraus83} that are {\em not} solutions of a Kossakowski-Lindblad master equation
and hence the set of {\em all}\/ invertible quantum maps (including Markovian and non-Markovian ones) has {\em no} Lie-semigroup structure
(details in~\cite{Wolf08a,DHKS08,OSID17}).

For most of the work, we focus on the corresponding induced system $\Sigma$ acting on the state space 
$\mathbb D(\mathcal H)$ of all density operators $\rho$ %(i.e.~positive semi-definite operators of trace one) 
and following the time-dependent Kossakowski-Lindblad master equation of the form
\begin{eqnarray}\label{eq:GA}
\dot\rho(t) = -\big(i\, \hH(t) + \hGA(t)\big)(\rho(t))\,,
\end{eqnarray}
where $\hH(t)$ denotes the adjoint action of some time-dependent Hamiltonian $H(t)$ on
$\mathbb D(\mathcal H)$, i.e.
$
\hH(t)(\rho) := \ad_{H(t)}(\rho) = [H(t),\rho]\,.
$
For Markovianity take the noise term $\hGA(t)$ in the usual Kossakowski-Lindblad form
\begin{equation}\label{eq:GKSL}
\hGA(t) := \sum_k \gamma_k(t) \hGA_{V_k} \quad\text{with}\quad
\hGA_{V_k}(\rho) := \tfrac{1}{2} (V_k^\dagger V_k \rho + \rho V_k^\dagger V_k) - V_k\rho V_k^\dagger\,.
\end{equation}
The time dependence of $H(t)$ is brought about by adding to the (usually inevitable,
possibly unbounded) system Hamiltonian $H_0$, bounded control Hamiltonians of the type $u_j(t)H_j$ to 
give $H(t):= H_0 + \sum_{j=1}^m u_j(t) H_j$, where the control amplitudes $u_j(t)\in\mathbb{R}$ are typically
modulated in a manner at least allowing for piecewise constant controls. 

In the finite-dimensional case, an unambiguous 
separation of the dissipative part and the coherent part results by choosing the
$V_k$ traceless---as described by Kossakowski, Gorini and Sudarshan in the 
celebrated work of~\cite{GKS76}. In infinite dimensions, this separation is a bit delicate yet not 
crucial for the sequel. More important is the restriction to compact
noise terms $V_k$. 
In our case of interest, the noise terms can be individually switched on and off (as \/`bang-bang controls\/'), so
it suffices to study a single noise term\footnote{Clearly, collectively switched noise in the sense
of $\gamma(t)=\gamma_k(t)$ for all $k$ is more subtle.} 
\begin{equation}\label{eq:def_gamma_V}
\hGA(t) \big(\rho\big) := \gamma(t)\hGA_V 
% = \gamma(t) \big(\tfrac{1}{2} (V^\dagger V \rho + \rho V^\dagger V) - V \rho V^\dagger \big)
\end{equation}
with $\gamma(t)\in\{0,\gamma_*\}$ and $\gamma_*>0$---w.l.o.g.~we always assume $\gamma_*=1$. 
With these stipulations, we refer to the master equation 
\eqref{eq:GA} as {\sc gksl}-equation henceforth.
In the limiting case of $\gamma(t)=0$ for all times, the control system of Eq.~\eqref{eq:GA} turns into a 
closed Hamiltonian system referred to as $\Sigma_0$, while for switchable noise with a {\em single}
$V$-term the system will be labelled $\Sigma_V$.

Note that with the identifications $A:=i\,\hH_0$, $B_j:=i\, \hH_j$, $B_{0}:= \hGA_V$ and 
$X(t)=\rho(t)$ one formally gets a standard {\em bilinear control system} 
\cite{Sontag,Elliott09} %\cite{ControlHB,Sontag,Elliott09,DiHeGAMM08}
\begin{equation*}
\dot X(t) = -(A + \sum\limits_{j=0}^{m} u_j(t) B_j) X(t)\quad\text{with}\quad X(0)=X_0\,,
\end{equation*}
also identifying $u_{0}(t)=\gamma(t)$. This covers a broad class of quantum control problems including 
coherent and incoherent feedback~\cite{mirrahimi2004controllability,DongPetersen2010,Haroche11,Haroche13}. 
Accessibility of such bilinear Markovian quantum systems (in finite dimensions)
was analysed i.a.~in terms of symmetries in previous work~\cite{OSID17}.

%\noindent
In the following, we are interested in characterising the reachable sets of $\Sigma_V$ which take the form
of a {\em semigroup orbit}
\begin{equation}\label{eq:reach_def}
\reach _{\Sigma_V}(\rho_0) := \mathcal{S}_{\Sigma_V} \cdot \rho_0 := \{ F(\rho_0) \,|\, F \in \mathcal{S}_{\Sigma_V}\}\;,
\end{equation}
where $\rho_0\in\mathbb D(\mathcal H)$ denotes an arbitrary initial density operator and $\mathcal{S}_{\Sigma_V}$ is
the semigroup generated by the one-parameter semigroups 
\begin{equation*}
\big(\mathrm{e}^{-t(i\,\ad_{H_0} + i\sum\limits_{j=1}^m u_j \ad_{H_j} + u_0 \hGA_V)}\big)_{t \in \mathbb R_+}
\quad\text{with}\; u_1,\dots,u_m \in {\mathbb R}\,, \quad u_0 \in \{0,1\}\,.
\end{equation*}
If all involved operators are bounded, $\mathrm{e}^{-t(i\,\ad_{H_0} +i \sum_{j=1}^m u_j \ad_{H_j} + u_0 \hGA_V)}$
is given by the exponential series and $\reach _{\Sigma_V}(\rho_0)$ can alternatively be defined as
the collection of all endpoints $\rho(T)$, $T \geq 0$ of trajectories of \eqref{eq:GA} for piecewise
constant controls and initial value $\rho(0) = \rho_0$. 
For the general case ($H_0$ %being 
unbounded), defining $\reach _{\Sigma_V}(\rho_0)$ via trajectories is problematic since 
Eq.~\eqref{eq:GA} allows classical solutions only on a dense domain of initial states. Yet,
Eq.~\eqref{eq:reach_def} also works for unbounded $H_0$, as
$-(i\,\ad_{H_0} +i \sum_{j=1}^m u_j \ad_{H_j} + u_0 \hGA_V)$ does generate a unique strongly continuous semigroup
(details in Appendix \appref{D}).\vspace{4pt}
% In the operator lift of Eq.~\eqref{eq:GA} to quantum maps $F_t$ starting with $F_0=\unity$,
% $\reach(\unity)$ gives the reachable set of {\em maps}, from whence the reachable set of {\em states}
% $\reach(\rho_0)$ obviously results by taking the action of all maps in $\reach(\unity)$ on 
% $\rho_0$%---hence the term {\em operator} controllability. 
% In the general case ($H_0$ being unbounded) we will refer to the reachable set in the operator lift, 
% because Eq.~\eqref{eq:GA} would be solvable only on a dense domain of states, 
% while $(F_t)_{t\in\mathbb R_+}$ is a well-defined semigroup on the whole trace class.

We start the discussion by $\Sigma_0$, assuming for the moment that the noise is switched off, 
i.e.~$\gamma(t)=0$. In finite dimensions such a system is {\em fully unitarily controllable} if 
it satisfies the Lie-algebra rank condition~\cite{SJ72,JS72,Bro72,Bro73,DiHeGAMM08}
%\marginpar{\small\color{blue} alternativ: $\langle \rangle_{\rm Lie} \supset \su(\mathcal H)$}
\begin{equation}\label{eqn:LARC}
\langle iH_0, iH_j \,|\,j=1,2,\dots,m\rangle_{\rm Lie}=\su(\mathcal H) \quad (\text{or} \; = \uu(\mathcal H) )\,.
\end{equation}
%%%%%%%%%%%%%%%
Then reachable sets are unitary group orbits of the respective
initial states
\begin{equation*}
\reach _{\Sigma_0}(\rho_0) = \{ U\rho_0\, U^\dagger\,|\, U\in \mathcal U(\mathcal H) \}\,.
\end{equation*}
If the Lie closure $\fk := \langle iH_0, iH_j \,|\,j=1,2,\dots,m\rangle_{\rm Lie}$ in Eq.~\eqref{eqn:LARC}
is but a proper compact subalgebra $\fk \subsetneq \su(\mathcal H)$, one likewise gets a subgroup orbit 
now by limiting $U$ to elements of $\bK := \exp\fk\subsetneq \mathcal U(\mathcal H)$, see, e.g., 
\cite{DiHeGAMM08,OSID17}.

Yet already in open finite dimensional quantum systems $\Sigma_V$, it is more intricate to characterise
reachable sets: In the unital case, i.e.~for $\hGA(t)(\mathbf{1}) = 0$, 
% (where multiples of $\unity$ are in the fixed-point sets) 
one finds by the seminal work of \cite{Uhlm71,AlbUhlm82} and \cite{Ando89} on majorization the inclusion
\begin{equation*}
\overline{\reach_{\Sigma}(\rho_0)} \subseteq \{\rho\in\mathbb D(\mathcal H)\,|\,\rho\,{\prec}\,\rho_0\}
\end{equation*}
as used in \cite{Yuan10}. Henceforth, $\overline{\reach_{\Sigma}(\rho_0)}$ denotes 
the closure\footnote{In both, finite and infinite dimensions, there is a canonical choice for the topology 
on $\mathbb D(\mathcal H)$---we will come back to this point later.} of $\reach_{\Sigma}(\rho_0)$.
% Here $\mathbb D(\mathcal H)$ denotes the set of all density operators (positive operators of trace one).
In the special case $\hGA(t) = \gamma(t)\hGA_V \neq 0$ (where unitality of $\hGA(t)$ boils
down to normality of $V$) one can obtain equality even for {\em unswitchable} noise if there
are no bounds on the coherent controls $u_k(t)$ and already the control Hamiltonians (without the drift
$iH_0$) satisfy
$
\langle  iH_j \,|\,j=1,2,\dots,m\rangle_{\rm Lie} =\su(\mathcal H),
$
a scenario we called {\em Hamiltonian controllable} (fully\/ $H$-controllable) \cite{DHKS08,ODS11}.

For many experiments this is hopelessly idealising unless one can switch off the noise---a scenario
studied below---because then one is allowed to ``use'' also the drift Hamiltonian $H_0$ for controlling 
the system in the course of noise-free evolution. However, for all physical scenarios (requiring the 
drift Hamiltonian $H_0$ for full controllability of its Hamiltonian part) with sizeable constant noise, 
% in the sense $\langle{iH_0}, iH_j \,|\,j=1,2,\dots,m\rangle_{\rm Lie} =\su(\mathcal H),$
the above inclusion is far from being tight and---even worse---the overestimation of the reachable set 
increases with system size.
In these cases, {\em Lie-semigroup} techniques help to estimate the
reachable set \cite{DHKS08,ODS11}.

Yet there are indeed instances of unitarily controllable systems of the type $\Sigma_V$ in which the 
noise can be switched as bang-bang control. An important experimental incarnation are 
superconducting qubits coupled to an open transmission line~\cite{Mart14}. Then, for normal $V$, one
can saturate the above inclusion to get 
$\overline{\reach(\rho_0)} = \{\rho\in\mathbb D(\mathcal H)\,|\,\rho\,{\prec}\,\rho_0\}$ 
as shown in~\cite{BSH16,OSID17}. In another extreme, $\hGA_V$ models coupling the system to a bath of temperature zero 
(entailing $V$ is the nilpotent matrix $\sigma_-$). In this case 
$\overline{\reach(\rho_0)} = \mathbb D(\mathbb C^n)$ \cite{CDC19}.

\bigskip
Here the goal is to transfer the former result (with normal $V$) from finite to infinite-dimensional   
systems on separable complex Hilbert spaces $\mathcal H$.

\subsection{Main Result}\label{sec:overview_main}

\noindent
In infinite dimensions, establishing unitary controllability for $\Sigma_0$ is 
more intricate. One of the most general results currently known 
%(embracing notions of~\cite{boscain2012weak}) 
is the following~\cite{keyl18InfLie}:

\bigskip
\noindent
Let $H_0, . . . ,H_m$ be selfadjoint operators on a separable Hilbert space $\mathcal H$.
Further assume that
\begin{enumerate}
\item[(1)] $H_0$ is bounded or unbounded, but has only pure point spectrum. The eigenvalues
$x_k, k \in \mathbb N$ are non-degenerate and rationally independent.
\item[(2)] The operators $H_1, . . . ,H_m$ are bounded and the set $\{H_1, . . . ,H_m\}$ 
is connected\footnote{This means that the associated graph (which roughly
speaking indicates where a transition from energy level $k$ to $l$ is possible) has to be
connected, cf.~\cite{keyl18InfLie}.} with respect to the complete set of eigenvectors $\phi_k \in \mathcal H$,
$k \in \mathbb N$ of $H_0$.
% with $P_k:=\ketbra{\phi_k}{\phi_k}$ as respective projectors.
\end{enumerate}
Then the unitary system
\begin{equation}\label{eq:unitary}
\dot U(t) = - iH(t)U(t)\quad\text{with}\quad U(0) = \mathbf{1}\,,
\end{equation}
is {\em strongly approximately operator controllable} in the following sense:
%\marginpar{Terminologieproblem ``strongly''}
%
\begin{defi}\label{def:strong_cont}
The unitary control system (\ref{eq:unitary}) is called {\em strongly approximately operator controllable}, 
if the strong closure (in $\mathcal U(\mathcal H)$) of the reachable set $\reach(\mathbf{1})$ co{\"i}ncides with $\mathcal U(\mathcal H)$. 
\end{defi}

%\bigskip

The result can be generalized to eigenvalues $x_k$, $k \in \mathbb{N}$ with finite multiplicities, but 
this requires more technical conditions on the control Hamiltonians: We have to ensure that trace-free 
%\marginpar{\color{green} ``real'' weglassen -- bisher war alles real!} 
finite rank operators commuting with all eigenprojections of $H_0$ are contained in the strong closure of
the Lie algebra generated by the $H_j$, $j=1,\dots,m$. More challenging are drift
Hamiltonians with rationally dependent eigenvalues. However, they can be studied in terms of certain
non-Abelian von Neumann algebras; cf.~\cite{keyl18InfLie} for details. Similar results were derived
earlier in terms of Galerkin approximations %(but with much more difficult proofs) 
in \cite{boscain2012weak} and were refined more recently in \cite{caponigro2018exact}.

If all Hamiltonians (including $H_0$) are bounded, we use
approximate versions of the Lie algebra rank condition, the most
straightforward one being
\begin{equation}\label{eq:strong_cont}
 \clos{\langle iH_0, iH_j, i\mathbf{1}\,|\,j=1,2,\dots,m\rangle_{\rm Lie}} =  \uu(\mathcal H)\,.
\end{equation}
Using the continuity of the exponential map in the strong topology
\cite{keyl18InfLie} it is easy to see that this condition is
sufficient for strong operator controllability of
(\ref{eq:unitary}). Our conjecture is that it is not necessary, but
counter examples are not known (their construction is subject of
current research). Stronger types of convergence can be achieved if
all the Hamiltonians $H_j$, $j=0,\dots,m$ are even compact. Since the
strong closure of the algebra $\mathcal{K}(\mathcal{H})$ of compact
operators is $\mathcal{B}(\mathcal{H})$, it is clear that
Eq. (\ref{eq:strong_cont}) is implied by
\begin{equation*}%\label{eq:strong_cont_u}
  \clou{\langle iH_0, iH_j\,|\,j=1,2,\dots,m\rangle_{\rm Lie}} = \uu\bigl(\mathcal{K}(\mathcal{H})\bigr)\,,
\end{equation*}
where the closure is taken in the uniform (operator norm) topology. If the
other implication also holds is still unclear, but 
unlikely. Note that a compact operator can be the strong limit of
a sequence in $\mathcal{K}(\mathcal{H})$ without being the uniform
limit. 

Let us fix some final notations with regard to the {\sc gksl}-equation: $\mathcal B(\mathcal H)$ and $\mathcal B^1(\mathcal H)$
denote the spaces of all bounded and trace-class operators on $\mathcal H$, respectively. Thus 
$\mathbb D(\mathcal H) \subset \mathcal B^1(\mathcal H)$ is precisely the set of all positive semi-definite 
(selfadjoint) trace-class operators with trace~$1$. Moreover, $\|\cdot\|_1$ stands for the trace norm
on $\mathcal B^1(\mathcal H)$ (see Appendix \appref{A} for more detail on the trace class). 
To begin with, all Hamiltonians $H_0, H_j$ are assumed to be taken from $\mathcal B(\mathcal H)$, while
later $H_0$ may be any unbounded selfadjoint operator. 

In this setting, the operator solutions of \eqref{eq:master-map} are globally well-defined (with respect to 
$t \in \R$) for arbitrary piecewise continuous controls (even more irregular controls are admissible)
and for each fixed $t\in\mathbb R^+$ the corresponding map is ultraweakly continuous (cf.~footnote \ref{footnote_uw}) and {\sc cptp}.  
In particular for constant controls they form uniformly continuous semigroups of ultraweakly 
continuous {\sc cptp}-maps, \cite[Thm.~1 \& 2]{Lind76}.

\bigskip
\noindent
With these notions and notations 
and taking majorization from finite to infinite dimensions by way of
sequence spaces as introduced by Gohberg and Markus~\cite{gohberg66} (see Sec.~\ref{sec:major_fin_infin}), 
our main result for $\hGA(t) =\gamma(t) \hGA_V(t)$ reads:

\begin{thm}\label{thm_normal_V}
Given the Markovian control system $\Sigma_V$ %, i.e.~Eq.~\eqref{eq:GA} 
\begin{equation*}%\label{eq:normal_V}
\dot\rho(t) = - i \Big[H_0 + \sum_{j=1}^mu_j(t)H_j,\rho\Big] - 
\gamma(t) \big(\tfrac{1}{2}(V^\dagger V \rho + \rho V^\dagger V) - V\rho V^\dagger \big)\,,\;\text{where}
\end{equation*}
%where
\begin{itemize}
\item[(1)] the drift $H_0$ is selfadjoint and the controls $H_1,\ldots,H_m$ are selfadjoint and bounded,
\item[(2)] the Hamiltonian part $\Sigma_0$ is strongly (approximately) operator controllable 
in the sense of Def.~\ref{def:strong_cont},%\eqref{eq:strong_cont} 
\item[(3)] the noise term $V\in\mathcal K(\mathcal H)\setminus\lbrace 0\rbrace$ is compact, normal and switchable by $\gamma(t) \in\{0,1\}$.
\end{itemize}
Then the $||\cdot||_1$-closure of the reachable set of any initial state \mbox{$\rho(0)=\rho_0\in\mathbb D(\mathcal H)$} under 
the system $\Sigma_V$ 
% (extending $\Sigma_0$ by switchable noise $\gamma(t)\cdot\hGA$ with $\gamma(t)\in\{0,\gamma_*\}$
% and $\gamma_*>0$) generated by a normal compact $V\in\mathcal K(\mathcal H)\setminus\lbrace 0\rbrace$ 
exhausts all states majorized by the initial state $\rho_0$ %, i.e.
\begin{align*}%\label{eq:thm_normal_V}
\clon{\reach_{\Sigma_V}(\rho_0)}=\lbrace \rho\in\mathbb D(\mathcal H)\,|\,\rho\prec\rho_0\rbrace\,.
\end{align*}
\end{thm}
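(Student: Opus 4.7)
My plan is to treat the two inclusions separately. For the inclusion $\clon{\reach_{\Sigma_V}(\rho_0)}\subseteq\{\rho\in\mathbb D(\mathcal H)\,|\,\rho\prec\rho_0\}$, observe that every element of $\reach_{\Sigma_V}(\rho_0)$ is of the form $T(\rho_0)$, where $T$ is a finite composition of unitary conjugations $U(\cdot)U^\dagger$ coming from the Hamiltonian segments and semigroup elements $\mathrm{e}^{-t\hGA_V}$ coming from the bang-bang noise segments. Normality of $V$ yields $\hGA_V(\mathbf 1)=0$, so each $\mathrm{e}^{-t\hGA_V}$ is unital and {\sc cptp}, and the composition $T$ inherits these properties. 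The infinite-dimensional Uhlmann--Ando--Hiai majorization theorem then gives $T(\rho_0)\prec\rho_0$. Since the majorization set is trace-norm closed (by Weyl-type continuity of the decreasing eigenvalue sequence), taking closures preserves the inclusion.

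For the non-trivial reverse inclusion, I would reduce to the finite-dimensional saturation result of \cite{BSH16,OSID17} through three ingredients. First, compactness and normality of $V$ provide a spectral decomposition $V=\sum_k v_k\ketbra{\psi_k}{\psi_k}$, and in the basis $\{\psi_k\}$ the generator $\hGA_V$ preserves diagonal states and acts on their diagonals as the generator of a classical doubly-stochastic semigroup. Second, $\rho\prec\rho_0$ is equivalent, in the Gohberg--Markus sequence-space sense, to the existence of a doubly stochastic infinite matrix sending the decreasing eigenvalues of $\rho_0$ to those of $\rho$, and such a matrix is $\ell^1$-approximable by finite convex combinations of permutations. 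Third, strong approximate operator controllability of $\Sigma_0$ together with the classical fact that a strong-operator convergent net of unitaries yields trace-norm convergent conjugations on any fixed trace-class operator, lets me realise approximately the unitaries demanded by the finite-dimensional recipe.

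The construction then runs as follows. Given $\rho\prec\rho_0$ and $\varepsilon>0$, choose a shared $N$-dimensional subspace carrying finite-rank truncations $\rho_0^{(N)}$, $\rho^{(N)}$ with $\rho^{(N)}\prec\rho_0^{(N)}$ on the block and trace-norm errors below $\varepsilon/4$. Apply the finite-dimensional theorem on this block to produce a concrete piecewise-constant schedule of idealised $N$-dimensional unitaries and switched $\hGA_V$-evolutions steering $\rho_0^{(N)}$ within $\varepsilon/4$ of $\rho^{(N)}$. Finally, replace each idealised unitary by a strong-operator approximant from the strong closure of $\langle iH_0,iH_j\rangle_{\rm Lie}$ and bound the cumulative error using trace-norm contractivity of both $\mathrm{e}^{-t\hGA_V}$ and unitary conjugation.

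The principal obstacle is the compatibility of the finite-dimensional recipe with the ambient infinite-dimensional system. The noise generator $\hGA_V$ acts on \emph{all} modes $\psi_k$, not only those in the chosen block, so the leakage of the dissipative dynamics into the block complement has to be quantified and shown to vanish as $N\to\infty$; this is precisely where compactness of $V$ (hence $v_k\to 0$) is essential. Moreover, because only strong-operator approximation of unitaries is available, the many alternations of approximate unitaries with $\hGA_V$-semigroups must be shown to still yield trace-norm convergent outputs, relying on continuity of left composition with trace-class states and on the strong continuity of $(\mathrm{e}^{-t\hGA_V})$ established in Appendix \appref{D}.
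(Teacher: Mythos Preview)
Your overall architecture (two inclusions, bi-stochasticity for $\subseteq$, controllability plus noise for $\supseteq$) is right, but both directions contain inaccuracies, and the $\supseteq$ argument has a genuine gap.

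For $\subseteq$: the reachable states are \emph{not} compositions of unitary conjugations and pure-noise pieces $\mathrm{e}^{-t\hGA_V}$. When $\gamma=1$ the drift $H_0$ is still on, so the relevant semigroup is $\mathrm{e}^{-t(i\hH+\hGA_V)}$, which does not factor. What you actually need is that this combined semigroup is bi-stochastic, which follows directly from $(i\hH+\hGA_V)(\mathbf 1)=0$ for normal $V$; for unbounded $H_0$ the paper establishes this via the Trotter formula (Appendix~\appref{D}), a step you do not mention.

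For $\supseteq$: two issues. First, the step ``choose truncations $\rho_0^{(N)},\rho^{(N)}$ with $\rho^{(N)}\prec\rho_0^{(N)}$'' is not free. Truncating the decreasing eigenvalue sequences to length $N$ preserves the partial-sum inequalities but in general destroys the trace equality, so majorization fails on the block; the paper spends the final page of its proof repairing exactly this by padding the shorter truncated sequence. Second, your picture of the noise is off: in the eigenbasis $\{\psi_k\}$ of $V$ the generator $\hGA_V$ does \emph{not} mix diagonals---it annihilates diagonal states and merely damps off-diagonal entries (Lemma~\ref{lemma_normal_generator}). Hence there is no ``classical doubly-stochastic semigroup'' on the diagonal doing the work, and there is also no leakage out of a block chosen in the $V$-eigenbasis (the compactness of $V$ is used to obtain the eigenbasis, not to suppress leakage). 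Because the noise is pure dephasing, the mixing of eigenvalues has to be manufactured by a unitary, and this is where the paper's route diverges from yours: rather than invoking the finite-dimensional theorem as a black box, the paper uses the infinite-dimensional Schur--Horn/Gohberg--Markus result (Prop.~\ref{prop_schur_horn_gohberg}) to produce a single unitary $U$ with $\operatorname{diag}\big(U\operatorname{diag}(y)U^\dagger\big)=x$, then applies the dephasing noise $\mathrm{e}^{-t\hGA_V}$ (approximated via the Trotter trick, Lemma~\ref{rem_trotter}, and with finitely many permutation channels to handle repeated eigenvalues of $V$) to erase the off-diagonal part. Your reduction-to-finite-dimensions scheme can in principle be made to work, but only after you (i) place the block in the $V$-eigenbasis and observe that $\hGA_V$ then leaves block-supported states invariant, and (ii) fix the truncated-majorization problem---neither of which your proposal does.
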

%
%\noindent
%\BLUE{NB: This Thm.\ also holds if $H_0$ is relaxed to any unbounded selfadjoint operator as long as the $H_j$ are still bounded 
%and $V$ is still compact and normal. (See Appendix \appref{D} for few adjustments to the proof to relax $H_0$.)}
%%While most of the proof is left unchanged, the few adjustments required to relax $H_0$ 
%as well as the reasoning why the base problem \eqref{eq:GA} is even well-defined 
%can be found in Appendix \appref{D}.}

\bigskip
\bigskip
\noindent
In order to arrive at this result, the paper is organised as follows:
Section~\ref{sec:maj_setconv} first takes majorization from finite to infinite dimensions in~\ref{sec:major_fin_infin}
%before developping tools of set convergence~\ref{sec:set_conv}.
before combining ideas of von Neumann with \mbox{$C$-numerical} ranges for majorization in infinite dimensions~\ref{sec:set_conv}
Section~\ref{sec:main_res} then presents the idea of the main theorem, the proof details themselves being relegated
to the Appendix. Appendix \appref{A} contains technical basics, while Appendix \appref{B} gives the proofs to Section~\ref{sec:set_conv}
Finally Appendix \appref{C} provides the proof of the main theorem %of Section~\ref{sec:main_res} 
for bounded $H_0$, while Appendix \appref{D} relaxes it to unbounded $H_0$.

\section{From Majorization via $C$-Numerical Range to Reachability}\label{sec:maj_setconv}
\subsection{Majorization in Finite and Infinite Dimensions}\label{sec:major_fin_infin}

Generalizing majorization to infinite dimensions is somewhat delicate. Following \cite{gohberg66}, one may 
define majorization first on the space of all real null sequences~$c_0(\mathbb N)$ and then on the space of all 
absolutely summable sequences~$\ell^1(\mathbb N)$. As we %only 
need a concept of majorization on density 
operators, for our purposes it suffices to introduce majorization solely on 
the summable sequences of non-negative numbers~$\ell^1_+(\mathbb N)$, 
which is rather intuitive.

In the notation of \cite{Ando89,MarshallOlkin},
take a real vector $x=(x_1,\ldots,x_n)^T\in\mathbb R^n$ and let $x^\downarrow=( x_1^\downarrow,\ldots,x_n^\downarrow)^T$ 
denote its decreasing re-arrangement
$
x_1^\downarrow\geq x_2^\downarrow\geq \ldots\geq x_n^\downarrow\,.
$
For two vectors $x,y\in\mathbb R^n$, we say $x$ is \textit{majorized} by $y$ (written $x\prec y$) if $\sum_{j=1}^k x_j^\downarrow\leq\sum_{j=1}^k y_j^\downarrow$ for all $k=1,\ldots,n-1$ and $\sum_{j=1}^n x_j=\sum_{j=1}^n y_j$. By definition $x\prec y$ 
depends only on the entries of $x$ and $y$ but not on their
initial arrangement, so $\prec$ is permutation invariant. 

Now for sequences $x\in\ell^1_+(\mathbb N)$, this re-arrangement procedure works just the same way,
and all the non-zero entries of $x$ are again contained within the rearranged sequence $x^\downarrow$. However, be aware that $x$ and $x^\downarrow$ may differ in the number of their zero entries. %Similarly, for 
%$x\in\ell^1_-(\mathbb N)$ one can define an increasing re-arrangement $x^\downarrow$.

\begin{defi}\label{defi_maj}
Consider $x,y\in\ell^1_+(\mathbb N)$ and $\rho,\omega\in\mathbb D(\mathcal H)$.
\begin{itemize}
\item[(a)] We say that $x$ is majorized by $y$, denoted by $x\prec y$, if the sum inequalities $\sum_{j=1}^k x_j^\downarrow \leq\sum_{j=1}^k y_j^\downarrow$ 
hold for all $k\in\mathbb N$, and $\sum_{j=1}^\infty x_j=\sum_{j=1}^\infty y_j$.\vspace{4pt}
\item[(b)] $\omega$ majorizes $\rho$, denoted by $\rho\prec\omega$, 
if $\lambda^\downarrow(\rho)\prec\lambda^\downarrow(\omega)$ where $\lambda^\downarrow(\cdot)\in\ell^1_+(\mathbb N)$ denotes the (non-modified) 
eigenvalue sequence\footnote{Usually, the 
eigenvalue sequence $\lambda^\downarrow(T)$ of a compact operator $T$ on $\mathcal H$ is obtained by 
arranging its non-zero eigenvalues in the decreasing order of their magnitudes and each eigenvalue 
is repeated as many times as its (necessarily finite) algebraic multiplicity. If the spectrum of $T$ is finite 
itself, then the sequence is filled with zeros, cf. \cite[Ch.~15]{MeiseVogt97en}. However, in order to get 
the result of Lemma \ref{lemma_berb_4_6} with respect to an orthonormal basis (and not just an 
orthonormal system), and also to properly define the $C$-spectrum of $T$ later on, a modified 
eigenvalue sequence has to be introduced, as in \cite[Ch.~3.2]{dirr_ve}.

If the range of $T$ is infinite-dimensional and the kernel of $T$ finite-dimensional
then put $\operatorname{dim}(\operatorname{ker}T)$ zeros at the beginning of the eigenvalue
sequence of $T$. If the range and the kernel of $T$ are infinite-dimensional, mix infinitely many zeros into the
eigenvalue sequence of $T$ (since for the $C$-spectrum arbitrary permutations
will be applied to the modified eigenvalue sequence, we need not specify this mixing 
procedure further). If the range of $T$ is finite-dimensional, leave the eigenvalue sequence of $T$ unchanged.\label{footnote_1}} of the respective state.
\end{itemize}
\end{defi}
\begin{remark}\label{rem_maj}
In Definition \ref{defi_maj} (b) it does not matter whether one considers the usual (non-modified) or the modified 
eigenvalue sequence (for the purpose of this remark denoted by $\lambda^\downarrow$ and $\lambda_m$, respectively). % as they only differ by finitely or infinitely many zeros.
More precisely, these sequences by construction share the same non-zero entries so $\lambda^\downarrow={\lambda^\downarrow_m},$.

\end{remark}

As in finite dimensions, majorization in infinite dimensions has a number of
different characterizations, the following two being particularly advantageous
for our purposes.
\begin{lemma}[\cite{Li13}, Thm.~3.3]\label{lemma_li}
For $\rho,\omega\in\mathbb D(\mathcal H)$ the following %statements 
are equivalent:
\begin{itemize}
\item[(a)] $\rho\prec\omega$.
\item[(b)] There exists a bi-stochastic quantum map $T\in \mathbb S(\mathcal H)$ (cf.~Def.~\ref{def1} in Appendix \appref{A}) such that $T(\omega)=\rho$. 
\end{itemize}
\end{lemma}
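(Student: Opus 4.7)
The plan is to reduce both implications to classical majorization on eigenvalue sequences in $\ell^1_+(\mathbb N)$ combined with a spectral decomposition argument for $\rho,\omega$.

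For $(b)\Rightarrow(a)$, I would start from the Ky Fan variational formula
\[
\sum_{j=1}^k\lambda_j^\downarrow(\sigma)=\sup\{\operatorname{tr}(P\sigma)\,:\,P=P^*=P^2,\ \operatorname{rank}P=k\}
\]
valid for any $\sigma\in\mathbb D(\mathcal H)$ (the supremum being attained by the spectral projector onto the top $k$ eigenvectors). Passing to the Heisenberg dual,
$\operatorname{tr}(P\rho)=\operatorname{tr}(PT(\omega))=\operatorname{tr}(T^*(P)\,\omega)$,
where bi-stochasticity of $T$ forces $0\leq T^*(P)\leq\mathbf{1}$ and $\operatorname{tr}(T^*(P))=\operatorname{tr}(P)=k$. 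It then suffices to establish the infinite-dimensional Ky Fan inequality
\[
0\leq Q\leq\mathbf{1},\ \operatorname{tr}(Q)=k\implies \operatorname{tr}(Q\omega)\leq\sum_{j=1}^k\lambda_j^\downarrow(\omega),
\]
which I would prove by writing $a_i:=\langle e_i,Qe_i\rangle$ in the eigenbasis $(e_i)$ of $\omega$ (so $a_i\in[0,1]$ and $\sum_i a_i=k$) and noting $\sum_i a_i\omega_i\leq\sum_{i=1}^k\omega_i$ by a pigeonhole argument against the decreasing sequence $(\omega_i)$. Trace preservation of $T$ supplies the equality $\sum_j\lambda_j(\rho)=\sum_j\lambda_j(\omega)=1$.

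For $(a)\Rightarrow(b)$, fix spectral decompositions $\omega=\sum_j\omega_j\ketbra{e_j}{e_j}$ and $\rho=\sum_i\rho_i\ketbra{f_i}{f_i}$ with eigenvalues in decreasing order. The hypothesis reads $\lambda^\downarrow(\rho)\prec\lambda^\downarrow(\omega)$ in $\ell^1_+(\mathbb N)$. I would then invoke the infinite-dimensional Hardy--Littlewood--P\'olya theorem to obtain a doubly stochastic infinite matrix $D=(d_{ij})_{i,j\in\mathbb N}$ with $\rho_i=\sum_j d_{ij}\,\omega_j$ for every $i$. Setting
\[
T(\sigma):=\sum_{i,j}d_{ij}\,\langle e_j,\sigma e_j\rangle\,\ketbra{f_i}{f_i}
\]
(complete dephasing in the $\{e_j\}$-basis followed by a classical doubly stochastic transport into the $\{f_i\}$-basis), positivity of $d_{ij}$ together with $\sum_i d_{ij}=1$ yields complete positivity and trace preservation, while $\sum_j d_{ij}=1$ makes the Heisenberg dual unital, so $T\in\mathbb S(\mathcal H)$ is bi-stochastic and satisfies $T(\omega)=\rho$.

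The main obstacle is the HLP step in infinite dimensions. In finite dimensions it reduces, via Birkhoff, to a convex combination of permutation matrices, but in $\ell^1_+(\mathbb N)$ the doubly stochastic matrices form a strictly larger class than the closed convex hull of permutations, so a different argument is needed---either citing Markus's explicit construction from \cite{gohberg66}, or building $D$ as a countably infinite composition of $T$-transforms while controlling $\ell^1$-convergence and verifying that row- and column-normalisation survive in the limit. Once $D$ is in hand, verifying that the induced $T$ is a well-defined $\|\cdot\|_1$-continuous quantum channel is routine.
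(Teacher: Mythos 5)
The paper does not actually prove this lemma---it is imported verbatim from Li and Busch \cite{Li13}---so your sketch is by necessity an independent route; note also that only the implication (b)$\Rightarrow$(a) is ever used in the paper. That implication is the part of your proposal that works: the Ky Fan variational argument is correct, and the rearrangement estimate $\sum_i a_i\omega_i\le\sum_{i\le k}\omega_i$ for $a_i\in[0,1]$, $\sum_i a_i=k$ is sound. The one point to tighten is the claim $\operatorname{tr}(T^*(P))=k$: since $\mathbf 1$ is not trace class you cannot pair against it directly; instead use ultraweak continuity and unitality of $T$ to get $\sum_i\langle e_i,T^*(P)e_i\rangle=\lim_n\operatorname{tr}\bigl(P\,T(\sum_{i\le n}\ketbra{e_i}{e_i})\bigr)=\operatorname{tr}(P)=k$, which is all your pigeonhole argument needs.

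The genuine gap is in (a)$\Rightarrow$(b), and it is not only the infinite-dimensional HLP step you flag. Your channel $T(\sigma)=\sum_{i,j}d_{ij}\langle e_j,\sigma e_j\rangle\ketbra{f_i}{f_i}$ requires $\rho_i=\sum_j d_{ij}\omega_j$ for the eigenvalue sequences \emph{as enumerated by the chosen eigenbases}, whereas $\rho\prec\omega$ (Definition \ref{defi_maj} with footnote \ref{footnote_1}) constrains only the decreasing rearrangements, which discard zeros. Take $\omega$ faithful with eigenvalues $2^{-j}$ and $\rho$ with the same nonzero eigenvalues but a nontrivial kernel: then $\lambda^\downarrow(\rho)=\lambda^\downarrow(\omega)$, so $\rho\prec\omega$, yet any row of $D$ indexed by a kernel vector $f_1$ of $\rho$ must satisfy $0=\sum_j d_{1j}\omega_j$ with all $\omega_j>0$, forcing $d_{1j}\equiv 0$ and contradicting $\sum_j d_{1j}=1$. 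So no doubly stochastic $D$ exists for the raw enumerations and no channel of your pinch-and-transport form can send $\omega$ to $\rho$, no matter how the HLP step is resolved. In fact a Kraus-decomposition argument shows that no ultraweakly continuous unital trace-preserving CP map at all sends a faithful state to a non-faithful one: $\langle f_1,\sum_k A_k\omega A_k^\dagger f_1\rangle=0$ forces $\omega^{1/2}A_k^\dagger f_1=0$, hence $A_k^\dagger f_1=0$ by faithfulness, contradicting $\sum_k A_kA_k^\dagger=\mathbf 1$. This direction is therefore genuinely sensitive to how zeros are counted---exactly the bookkeeping the paper itself wrestles with via ``modified eigenvalue sequences'' in Appendix C. Your construction does go through when the kernel structures are compatible (e.g.\ both states of finite rank, or both with trivial kernel), and there Proposition \ref{prop_schur_horn_gohberg} already hands you the required matrix: $d_{ij}=|\langle e_i,Ue_j\rangle|^2$ is doubly stochastic and satisfies $x=Dy$, so the $T$-transform machinery can be bypassed entirely. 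As written, however, the implication (a)$\Rightarrow$(b) is not established in general.
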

\begin{proposition}\label{prop_schur_horn_gohberg}
Let $x,y\in\ell^1_+(\mathbb N)$ be non-increasing and let $(e_n)_{n\in\mathbb N}$ be some 
orthonormal basis of $\mathcal H$. Then the following statements are equivalent:
\begin{itemize}
\item[(a)] $x\prec y$
\item[(b)] There exists a selfadjoint $A\in\mathcal B^1(\mathcal H)$ with diagonal entries $(x_n)_{n\in\mathbb N}$ 
and eigenvalues $(y_n)_{n\in\mathbb N}$.
\item[(c)] There exists a unitary $U\in\mathcal B(\mathcal H)$ such that $U\operatorname{diag}(y)U^\dagger $ has 
diagonal entries $(x_n)_{n\in\mathbb N}$.
\end{itemize}
Here, ``diagonal'' always refers to the orthonormal basis $(e_n)_{n\in\mathbb N}$, i.e.~the map $\operatorname{diag}:\ell^1_+(\mathbb N)\to \mathcal B^1(\mathcal H)$ is given by $x\mapsto \sum_{n=1}^\infty x_n\langle e_n,\cdot\rangle e_n$.
\end{proposition}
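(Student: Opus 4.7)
The plan is to prove the cycle (b) $\Leftrightarrow$ (c), (c) $\Rightarrow$ (a), and finally the harder (a) $\Rightarrow$ (c). The equivalence (b) $\Leftrightarrow$ (c) follows immediately from the spectral theorem for selfadjoint trace-class operators: any selfadjoint $A\in\mathcal B^1(\mathcal H)$ with eigenvalue sequence $(y_n)_{n\in\mathbb N}$ admits a representation $A = U\operatorname{diag}(y)U^\dagger$, where $U\in\mathcal B(\mathcal H)$ is the unitary sending $(e_n)$ onto an orthonormal eigenbasis of $A$, and conversely every such $U\operatorname{diag}(y)U^\dagger$ is selfadjoint and trace class with eigenvalue sequence $(y_n)$. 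Since ``diagonal entries'' on both sides refers to the same reference basis $(e_n)_{n\in\mathbb N}$, conditions (b) and (c) describe the same object.

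For (c) $\Rightarrow$ (a) I would invoke the Ky Fan maximum principle. With $P_k:=\sum_{n=1}^k \ketbra{e_n}{e_n}$, the partial sum satisfies $\sum_{n=1}^k x_n = \trace(P_k U\operatorname{diag}(y)U^\dagger)$. The trace-class version of Ky Fan's inequality, obtained by approximating $\operatorname{diag}(y)$ by finite-rank truncations and taking limits, states that $\trace(PA)\leq\sum_{n=1}^k y_n$ for every rank-$k$ projection $P$ and every selfadjoint trace-class $A$ with non-increasing eigenvalue sequence $(y_n)$. Combined with the identity $\sum_n x_n = \trace(A) = \sum_n y_n$, this yields exactly $x\prec y$.

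The main obstacle is (a) $\Rightarrow$ (c), for which I would adapt Horn's classical 1954 construction to infinite dimensions. The plan is to build inductively a sequence of finite-support unitaries $U_N\in\mathcal B(\mathcal H)$ such that $A_N:=U_N\operatorname{diag}(y)U_N^\dagger$ has its first $N$ diagonal entries equal to $x_1,\ldots,x_N$ while its tail $(r_n^{(N)})_{n>N}$ is non-increasing and still satisfies the tail majorization $(x_{N+1},x_{N+2},\ldots)\prec(r_{N+1}^{(N)},r_{N+2}^{(N)},\ldots)$. The inductive step $N\to N{+}1$ is realised by a Givens rotation in the 2-plane $\operatorname{span}\{e_{N+1},e_\ell\}$ with $\ell>N$ chosen so that $r_\ell^{(N)}\leq x_{N+1}\leq r_{N+1}^{(N)}$; existence of such $\ell$ is precisely the majorization inequality at level $N+1$. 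The rotation angle is then determined by the requirement that the new $(N+1,N+1)$-entry equal $x_{N+1}$.

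Two technical points require care: (i) preservation of the tail majorization at each step, which reduces to a direct $2\times 2$ calculation on the affected diagonal entries, and (ii) trace-norm convergence of $(A_N)$ to some $A = U\operatorname{diag}(y)U^\dagger$. The latter exploits the summability $\sum_n y_n<\infty$, so that $\sum_{n>N}y_n\to 0$ forces the modifications beyond step $N$ to be trace-norm negligible; a standard compactness argument then extracts a (strong) limit $U$ of the sequence $(U_N)$ which is unitary and realises~(c). Step (i) is where I expect most of the bookkeeping, as the Givens rotation typically destroys the non-increasing order of the tail and one must re-sort before the next induction step — however, the majorization relation is permutation-invariant, so this re-sorting is harmless once set up carefully.
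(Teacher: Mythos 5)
Your skeleton coincides with the paper's for two of the three implications: (b)$\Leftrightarrow$(c) via the spectral theorem, and the ``easy'' direction via the Ky Fan maximum principle (the paper cites Fan 1949 for exactly this). The difference is the hard direction (a)$\Rightarrow$(b)/(c): the paper does not prove it at all but cites the infinite-dimensional Schur--Horn theorem of Gohberg and Markus (\cite[Prop.~IV]{gohberg66}), which produces orthonormal bases $(\phi_n)$, $(\psi_n)$ with $\langle\phi_n,H\phi_n\rangle=x_n$ for $H=\sum_n y_n\langle\psi_n,\cdot\rangle\psi_n$, and then conjugates by the unitary sending $(\phi_n)$ to $(e_n)$. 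You instead attempt to reprove that theorem by an infinite Givens-rotation scheme. The scheme itself is the right idea (it is essentially how such results are proved), and your bookkeeping points (i) are indeed manageable because each rotation in the plane $\operatorname{span}\{e_{N+1},e_\ell\}$ leaves the principal block on indices $>N+1$ diagonal and leaves the already-built corner $P_NA_MP_N$ frozen for all $M\geq N$.

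The genuine gap is in your limit argument. First, ``a standard compactness argument extracts a (strong) limit $U$ of $(U_N)$ which is unitary'' is not valid: the unit ball of $\mathcal B(\mathcal H)$ is compact only in the weak operator topology, and weak limits of unitaries need not be unitary (powers of a shift converge weakly to $0$). Nothing in your construction forces the $U_N$ to be strongly Cauchy, since the auxiliary index $\ell$ at step $N$ can be arbitrarily large and the rotation angles need not decay. Second, the stated mechanism for trace-norm convergence of $A_N$ (``$\sum_{n>N}y_n\to0$ makes later modifications negligible'') does not work by naive telescoping: one only gets $\|A_{N+1}-A_N\|_1=O\bigl(\sqrt{\smash[b]{r^{(N)}_{N+1}+r^{(N)}_\ell}}\,\bigr)$, and $\sum_N\sqrt{y_N}$ already diverges for $y_N\sim N^{-2}$. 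What does work is: the corners $P_NA_MP_N$ are independent of $M\geq N$, while $\operatorname{tr}\bigl((\mathbf 1-P_N)A_M(\mathbf 1-P_N)\bigr)=\sum_{n>N}x_n\to0$ and positivity controls the off-corner blocks via $\|P A (\mathbf 1-P)\|_1\leq\operatorname{tr}(PAP)^{1/2}\operatorname{tr}((\mathbf 1-P)A(\mathbf 1-P))^{1/2}$; hence $(A_M)$ is $\|\cdot\|_1$-Cauchy. Its limit $A$ has diagonal $x$, and the Lidskii--Mirsky inequality $\sum_k|\lambda_k^\downarrow(A)-\lambda_k^\downarrow(B)|\leq\|A-B\|_1$ gives $\lambda^\downarrow(A)=y$, which is (b); to upgrade to (c) you still must check the kernel dimension of $A$ matches that of $\operatorname{diag}(y)$ (an issue the paper also glosses over in calling (b)$\Leftrightarrow$(c) obvious). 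As written, your proposal asserts the conclusion of the convergence step rather than proving it; either repair it along these lines or, as the paper does, cite Gohberg--Markus.
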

\begin{proof}
``(a) $\Rightarrow$ (b)'': Assume $x\prec y$. By \cite[Prop.~IV]{gohberg66} there exist 
orthonormal bases $(\phi_n)_{n\in\mathbb N}$ and $(\psi_n)_{n\in\mathbb N}$ of $\mathcal H$ such that 
$H=\sum_{n=1}^\infty y_n\langle\psi_n,\cdot\rangle\psi_n$ satisfies $\langle \phi_n,H\phi_n\rangle=x_n$ 
for all $n\in\mathbb N$. Consider the unitary operator $U\in\mathcal B(\mathcal H)$ 
which transforms $(\phi_n)_{n\in\mathbb N}$ into $(e_n)_{n\in\mathbb N}$, 
then $A:=UHU^\dagger \in\mathcal B^1(\mathcal H)$ does the job. ``(b) $\Rightarrow$ (a)'': 
follows from \cite{Fan49}. ``(b) $\Leftrightarrow$ (c)'': The statement is obvious.
\end{proof}

We conclude with a classical result on sub-majorization (without proof) which will 
be needed in the following subsection.

\begin{lemma}[\cite{MarshallOlkin}, 3.H.3.b]\label{lemma_wprec}
Let $x,y\in\mathbb R^n$ such that $\sum_{j=1}^k x_j^\downarrow\leq\sum_{j=1}^k y_j^\downarrow$ for all $k=1,\ldots,n$. 
Then for arbitrary $c_1\geq c_2\geq\ldots\geq c_n\geq 0$ one has
$
\sum\nolimits_{j=1}^n c_jx_j^\downarrow \leq \sum\nolimits_{j=1}^n c_jy_j^\downarrow \,.
$
\end{lemma}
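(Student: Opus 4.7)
The statement is a classical finite-dimensional inequality of the weighted (Abel) summation type, and my plan is to reduce it to summation by parts applied to the partial-sum defect between $y^\downarrow$ and $x^\downarrow$. Concretely, I would introduce $d_j := y_j^\downarrow - x_j^\downarrow$ and the partial sums
\begin{equation*}
S_k := \sum_{j=1}^{k} d_j = \sum_{j=1}^{k} y_j^\downarrow - \sum_{j=1}^{k} x_j^\downarrow,
\qquad S_0 := 0.
\end{equation*}
The hypothesis then reads precisely $S_k \geq 0$ for all $k=1,\dots,n$, which is the only property of $x^\downarrow,y^\downarrow$ the argument will use.

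Next I would compute $\sum_{j=1}^{n} c_j(y_j^\downarrow - x_j^\downarrow) = \sum_{j=1}^{n} c_j d_j = \sum_{j=1}^{n} c_j (S_j - S_{j-1})$ and perform an Abel summation (summation by parts) to transfer the differencing from the $S_j$'s onto the weights $c_j$. This gives the telescoped expression
\begin{equation*}
\sum_{j=1}^{n} c_j\bigl(y_j^\downarrow - x_j^\downarrow\bigr) \;=\; c_n S_n + \sum_{j=1}^{n-1} (c_j - c_{j+1}) S_j.
\end{equation*}
Now each summand on the right is manifestly non-negative: the hypothesis $c_1 \geq c_2 \geq \dots \geq c_n \geq 0$ forces $c_n \geq 0$ and $c_j - c_{j+1} \geq 0$ for $j=1,\dots,n-1$, while the majorization-type assumption forces $S_j \geq 0$. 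Hence the right-hand side is $\geq 0$, which rearranges exactly to $\sum_{j=1}^{n} c_j x_j^\downarrow \leq \sum_{j=1}^{n} c_j y_j^\downarrow$, as desired.

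There is essentially no real obstacle here; the entire content of the lemma is captured by the positivity structure that Abel summation makes visible. The one pitfall to watch is book-keeping at the boundary ($j=0$ and $j=n$) of the summation-by-parts identity, but this is immediate once one fixes the convention $S_0 = 0$. No assumption on the sign of $x_j^\downarrow$ or $y_j^\downarrow$ individually enters, and no equality of the full sums $\sum_j x_j = \sum_j y_j$ is needed, so this is genuinely a sub-majorization (i.e.\ $\wprec$) statement, consistent with how the lemma will be invoked later.
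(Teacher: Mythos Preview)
Your Abel summation argument is correct and is in fact the standard proof of this classical sub-majorization inequality. Note that the paper does not supply its own proof here: the lemma is stated with a citation to Marshall--Olkin and explicitly left ``without proof'', so there is nothing in the paper to compare against.
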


\subsection{Combining a von Neumann Idea with $C$-Numerical Ranges}\label{sec:set_conv}

%\RED{Note that all the proofs for this section are relegated to Appendix \appref{B}.}\\

In finite dimensions, Ando \cite[Thm.~7.4]{Ando89} has shown that majorization can
be characterized in an elegant way via the $C$-numerical range~\cite{Li94,TSING-96}
$$
W_C (T):=\lbrace \operatorname{tr}(CU^\dagger TU)\,|\,U\in\mathcal U(\mathcal H)\rbrace
% \;\text{with}\;C\in\mathcal B^1(\mathcal H)\;\text{and}\; T\in\mathcal B(\mathcal H)\,.
$$
with $C\in\mathcal B^1(\mathcal H)$, $T\in\mathcal B(\mathcal H)$ and $\mathcal U(\mathcal H)\subset\mathcal B(\mathcal H)$ being the unitary group on $\mathcal H$. Here, we generalize
his approach to infinite dimensions (Prop.~\ref{thm_1} below) using a recent result in \cite{dirr_ve}.
Later on, this characterization will greatly simplify handling continuity properties of  
majorization (cf.~Lemma \ref{lemma_maj_closed}).

For our purpose, we need a relation connecting the $C$-numerical range and
$C$-spectrum of a compact operator $T$ given by
\begin{align*}
P_C(T):=\Big\lbrace \sum\nolimits_{j=1}^\infty \lambda_j(C)\lambda_{\sigma(j)}(T) \,\Big|\, \sigma:\mathbb N \to\mathbb N \text{ is a permutation}\Big\rbrace
\end{align*}
on one hand-side with $(\lambda_j(C))_{j\in\mathbb N}$ and $(\lambda_j(T))_{j\in\mathbb N}$ being the 
modified eigenvalue sequences (cf. footnote \ref{footnote_1}) of $C$ and $T$ on the other.
Note that each element in $W_C(T)$ and $P_C(T)$ is bounded by $\|C\|_1\|T\|_{\text{op}}$---thus the closures of $W_C(T)$ and $P_C(T)$ constitute compact subsets of $\mathbb C$.
%
% (in accordance with the Hausdorff metric).
%
% Just like \cite[Thm.~3.1]{dirr_ve} one can show the following:
%
% \begin{lemma}\label{lemma_3}\marginpar{Kann nach hinten wandern!}
% Let $C\in\mathcal B^1(\mathcal H),T\in\mathcal B(\mathcal H)$  and $(C_n)_{n\in\mathbb N}$ be a 
% sequence in $\mathcal B^1(\mathcal H)$ which converges to $C$ w.r.t.~$\|\cdot\|_1$. Then
% $$
% \lim_{n\to\infty}\overline{W_{C_n}(T)}=\overline{W_C(T)}\,.
% $$
% Moreover, if $T$ is compact as well, then
% \begin{align*}
% \lim_{k\to\infty}\overline{W_C(\Pi_kT\Pi_k)}=\overline{W_C(T)}\,,
% \end{align*}
% where $\Pi_k$ is the orthogonal projection onto the {\color{red} span of the }first $k$ elements of 
% an arbitrary orthonormal basis $(e_n)_{n\in\mathbb N}$ of $\mathcal H$.
% \end{lemma}
% %Note that this statement in general does not hold if $T$ is not compact, refer to \cite[Ex.~A.2]{dirr_ve}.
%

If $C,T$ are normal, one has the inclusion
$\overline{W_C(T)} \subseteq \operatorname{conv}(\overline{P_C(T)})$. Yet under further assumptions on 
the operators %$C$ and $T$  %detailed in~\cite{dirr_ve}, 
one can even achieve equality.

\begin{lemma}[\cite{dirr_ve}, Coro.~3.1]\label{lemma_4}
Let $C\in\mathcal B^1(\mathcal H)$ and $T\in\mathcal K(\mathcal H)$ both be normal, 
such that the eigenvalues of $C$ are collinear, i.e.~the eigenvalues all lie on a common line. 
Then
$
\overline{W_C(T)}=\operatorname{conv}(\overline{P_C(T)})\,.
$
\end{lemma}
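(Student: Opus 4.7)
The inclusion $\overline{W_C(T)}\subseteq \operatorname{conv}(\overline{P_C(T)})$ is already granted for normal $C$ and $T$, so only the reverse direction needs proof. My plan is to reduce to self-adjoint $C$ via the collinearity hypothesis, to embed $P_C(T)$ directly inside $W_C(T)$ via basis-permuting unitaries, and finally to establish convexity of $\overline{W_C(T)}$.

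Since $C\in\mathcal B^1(\mathcal H)$, absolute summability of the eigenvalues forces $\lambda_j(C)\to 0$, so the common line on which the $\lambda_j(C)$ lie must pass through the origin. Hence $C = \beta\, C_0$ for some $\beta\in\mathbb C$ with $|\beta|=1$ and some self-adjoint $C_0\in\mathcal B^1(\mathcal H)$ (a normal operator with real spectrum is self-adjoint). Since $W_C(T) = \beta\, W_{C_0}(T)$ and $P_C(T) = \beta\, P_{C_0}(T)$ both scale multiplicatively, the claim reduces to the case where $C$ is self-adjoint. Now fix orthonormal eigenbases $(e_j)_{j\in\mathbb N}$ of $C$ and $(f_k)_{k\in\mathbb N}$ of $T$ compatible with the modified eigenvalue sequences. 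For any bijection $\sigma\colon \mathbb N\to\mathbb N$ the operator $U_\sigma$ determined by $U_\sigma e_j := f_{\sigma(j)}$ is unitary, and a direct calculation gives
$$\operatorname{tr}(CU_\sigma^\dagger T U_\sigma) \;=\; \sum_{j=1}^\infty \lambda_j(C)\,\lambda_{\sigma(j)}(T),$$
so that $P_C(T)\subseteq W_C(T)$ and a fortiori $\overline{P_C(T)}\subseteq \overline{W_C(T)}$.

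It remains to prove convexity of $\overline{W_C(T)}$. Writing
$$\operatorname{tr}(CU^\dagger TU) \;=\; \sum_{j,k=1}^\infty \lambda_j(C)\,\lambda_k(T)\,p_{jk}(U),\qquad p_{jk}(U) := |\langle f_k,Ue_j\rangle|^2,$$
the trace becomes an $\ell^1$-pairing of the real sequence $\lambda(C)$ and the bounded complex sequence $\lambda(T)$ against the unistochastic double array $p(U)$. The task then reduces to a Schur-Horn/Tsing-type statement: when $\lambda(C)$ is real, the set of such pairings is convex in $\mathbb C$. In finite dimensions this is Tsing's theorem (the normal-operator extension of Westwick's result). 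I would lift it to infinite dimensions by truncating to an increasing sequence of finite-dimensional subspaces spanned by $\{e_1,\dots,e_n,f_1,\dots,f_n\}$, applying Tsing's theorem inside each truncation, and passing to the limit using the $\ell^1$-decay of $\lambda(C)$ to bound the tail contributions uniformly.

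The main obstacle will be this last step. Although convex combinations of permutation matrices still yield all doubly stochastic matrices in finite dimensions (Birkhoff-von Neumann), in infinite dimensions the set of unistochastic matrices is strictly smaller than the doubly stochastic one and the two sets differ even after closure in the natural topologies. The trace-class decay of $\lambda(C)$---which is ultimately what collinearity purchases through the reduction to self-adjoint $C_0$---is precisely what ensures that the pairings $\sum_{j,k}\lambda_j(C)\lambda_k(T)\,p_{jk}$ over unistochastic $p$ remain dense in those over doubly stochastic $p$. It is this density, together with the finite-dimensional Tsing theorem applied to the truncations, that finally delivers convexity of $\overline{W_C(T)}$ and thereby the desired equality.
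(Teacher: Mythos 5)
First, note that the paper does not prove this lemma at all---it is imported verbatim from \cite{dirr_ve} (Coro.~3.1)---so there is no in-text argument to compare against; your proposal must stand on its own. Its skeleton is sound and matches the route taken in the cited reference: the reduction to selfadjoint $C$ via collinearity (the line must contain $0$ because the trace-class eigenvalues accumulate there), the inclusion $P_C(T)\subseteq W_C(T)$ via the basis-permuting unitaries $U_\sigma e_j=f_{\sigma(j)}$, and the observation that everything then hinges on convexity of $\overline{W_C(T)}$ are all correct. Since $\overline{P_C(T)}$ is compact, its convex hull is closed, so convexity of the closed set $\overline{W_C(T)}$ would indeed finish the proof.

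The gap is that this convexity---the actual content of the lemma---is never established. Two concrete problems. (i) Your truncation to the \emph{fixed} subspaces $\operatorname{span}\{e_1,\dots,e_n,f_1,\dots,f_n\}$ does not control $W_C(T)$: a point $\operatorname{tr}(CU^\dagger TU)$ arises from a unitary $U$ that moves $e_1,\dots,e_n$ anywhere in $\mathcal H$, and the compression of such a $U$ to the fixed subspace is not unitary, so Tsing's theorem applied inside that subspace yields convexity of a different (generally much smaller) set. The $\ell^1$-decay of $\lambda(C)$ does give $\Delta\bigl(\overline{W_{\Pi_nC\Pi_n}(T)},\overline{W_C(T)}\bigr)\le\|C-\Pi_nC\Pi_n\|_1\|T\|_{\mathrm{op}}\to 0$ uniformly in $U$, which reduces matters to finite-rank $C$; but for finite-rank $C$ one must still show that, given two unitaries $U_1,U_2\in\mathcal U(\mathcal H)$, the segment joining the two trace values lies in $\overline{W_C(T)}$, and this forces the auxiliary finite-dimensional subspace to be chosen \emph{depending on} $U_1,U_2$ (e.g.\ containing all $U_ie_j$, $j\le n$), after which Westwick/Tsing applies to the compression $QTQ$ and the resulting unitaries extend by the identity back to $\mathcal H$. (ii) Your alternative route through doubly stochastic versus unistochastic arrays replaces the claim by an assertion---that the pairings over unistochastic arrays are dense in those over doubly stochastic ones---which you yourself flag as the main obstacle and which is at least as hard as the original statement, since Birkhoff--von Neumann fails in infinite dimensions and the trace-class decay of $\lambda(C)$ by itself does not supply that density. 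As it stands, the decisive step is announced rather than proved.
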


\noindent
In fact, Lemma \ref{lemma_4} induces a von Neumann-type of trace (in-)equality \cite{NEUM-37}
for compact, selfadjoint operators. Its proof is in Appendix \appref{B}.

\begin{corollary}\label{cor_von_Neumann}
Let $C\in\mathcal B^1(\mathcal H)$ and $T\in\mathcal K(\mathcal H)$ both be selfadjoint.
Then %one has
\begin{equation*}
\sup_{U\in\mathcal U(\mathcal H)}\operatorname{tr}(C U^\dagger TU)
= \sum\nolimits_{j=1}^\infty \lambda_j^\downarrow(C^+) \lambda_{j}^\downarrow(T^+) 
+ \sum\nolimits_{j=1}^\infty \lambda_j^\downarrow(C^-) \lambda_{j}^\downarrow(T^-)\,,
\end{equation*}
where $\lambda_j^\downarrow(C^+)$, $\lambda_{j}^\downarrow(T^+)$ and $\lambda_j^\downarrow(C^-)$, $\lambda_{j}^\downarrow(T^-)$
denote the decreasing eigenvalue sequences of the positive semi-definite operators $C^+$, $T^+$ and $C^-$, $T^-$, respectively, where $C=C^+-C^-$ and $T=T^+-T^-$ as usual.
\end{corollary}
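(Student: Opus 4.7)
The strategy is to apply Lemma~\ref{lemma_4} in order to convert the supremum over unitary orbits into a classical rearrangement problem on real sequences. Since $C$ is selfadjoint, its eigenvalues are real (hence trivially collinear), while $T$ is compact selfadjoint and therefore normal; Lemma~\ref{lemma_4} accordingly gives $\overline{W_C(T)}=\operatorname{conv}(\overline{P_C(T)})$. Because both operators are selfadjoint, $W_C(T),P_C(T)\subseteq\mathbb{R}$ and are bounded (by $\|C\|_1\|T\|_{\operatorname{op}}$). The elementary fact $\sup S=\sup\overline{S}=\sup\operatorname{conv}(\overline{S})$ for bounded real $S$ therefore yields
\begin{equation*}
\sup_{U\in\mathcal U(\mathcal H)}\operatorname{tr}(CU^\dagger TU)=\sup W_C(T)=\sup P_C(T)=\sup_\sigma\sum_{j=1}^\infty\lambda_j(C)\lambda_{\sigma(j)}(T),
\end{equation*}
with the last supremum running over bijections $\sigma\colon\mathbb N\to\mathbb N$.

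For the upper bound on this combinatorial supremum, the pointwise inequality $ab\leq a^+b^++a^-b^-$ (valid for real $a,b$ because $a^+b^-+a^-b^+\geq 0$) gives term-by-term
\begin{equation*}
\sum_j\lambda_j(C)\lambda_{\sigma(j)}(T)\leq\sum_j\lambda_j(C)^+\lambda_{\sigma(j)}(T)^++\sum_j\lambda_j(C)^-\lambda_{\sigma(j)}(T)^-.
\end{equation*}
Each of the two right-hand sums is a pairing of non-negative summable sequences whose non-zero entries coincide with those of $\lambda^\downarrow(C^+),\lambda^\downarrow(T^+)$ and $\lambda^\downarrow(C^-),\lambda^\downarrow(T^-)$, respectively. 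Applying the Hardy--Littlewood--P\'olya rearrangement inequality to each (extended from finite to infinite non-negative sequences by a standard truncation argument, legitimate since $\lambda(C)\in\ell^1$ and $\lambda(T)\in c_0$) majorizes each by the sorted pairing and yields exactly $\sum_j\lambda_j^\downarrow(C^+)\lambda_j^\downarrow(T^+)+\sum_j\lambda_j^\downarrow(C^-)\lambda_j^\downarrow(T^-)$.

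For the matching lower bound I would construct an explicit bijection $\sigma$ pairing the $k$-th largest positive eigenvalue of $C$ with the $k$-th largest positive eigenvalue of $T$, the $k$-th most negative with the $k$-th most negative, and absorbing any mismatch in cardinalities into the zero positions of the modified eigenvalue sequences described in footnote~\ref{footnote_1}. The main anticipated obstacle is precisely this last step in the infinite-dimensional setting: when one operator has infinitely many positive (resp.\ negative) eigenvalues and the other has only finitely many, the excess indices must be routed to zeros rather than to opposite-sign eigenvalues, and here the zero-padding convention of footnote~\ref{footnote_1} is what guarantees enough slack on both sides to produce an actual bijection $\sigma\colon\mathbb N\to\mathbb N$. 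A cleaner alternative, if one only wants the supremum and not its attainment, is to argue by finite truncation throughout: optimize over the top $n$ eigenvalues on each side, invoke the classical finite-dimensional identity, and pass to $n\to\infty$ using trace-class summability on the $C$-side together with compactness of $T$.
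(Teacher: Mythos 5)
Your proposal is correct, and it shares with the paper's proof the decisive ingredient, Lemma~\ref{lemma_4}, which converts the closed $C$-numerical range into the convex hull of the closed $C$-spectrum; after that point the two arguments diverge. The paper applies Lemma~\ref{lemma_4} to the finite-rank truncations $\Pi_k T\Pi_k$, asserts the resulting identity \eqref{eq:finite_von_Neumann} in the finite-rank case, and passes to the limit $k\to\infty$ via Hausdorff convergence of the sets $\overline{W_C(\Pi_kT\Pi_k)}$ together with continuity of the maximum (Lemmas~\ref{lemma_lim_max} and~\ref{lemma_3}). You instead apply Lemma~\ref{lemma_4} once to the full operator $T$ and attack the resulting infinite rearrangement problem over bijections of $\mathbb N$ head-on: the upper bound via the pointwise estimate $ab\le a^+b^++a^-b^-$ combined with the Hardy--Littlewood--P\'olya inequality for pairings of a non-negative $\ell^1$-sequence with a non-negative null sequence is complete and rigorous (note only the $C$-side is summable; summability of the products follows from $|\lambda_j(C)|\,\|T\|_{\mathrm{op}}$ domination). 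Your route buys independence from the set-convergence machinery of Appendix~B; the paper's route confines all combinatorics to a finite-dimensional statement.

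One caveat concerns your lower bound. The primary construction --- an exact optimal bijection that routes excess indices ``into the zero positions of the modified eigenvalue sequences'' --- does not always work, because those zero positions need not exist. Take $C=\operatorname{diag}(2^{-n})_{n\in\mathbb N}$ and $T=\operatorname{diag}(1,-\tfrac12,-\tfrac13,\dots)$: both have trivial kernel and infinite-dimensional range, so by footnote~\ref{footnote_1} neither modified eigenvalue sequence contains a single zero. Every bijection then pairs infinitely many positive eigenvalues of $C$ with negative eigenvalues of $T$, so each element of $P_C(T)$ is strictly below the right-hand side $\tfrac12$, and the supremum is not attained in $P_C(T)$. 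This is harmless for the corollary, which only asserts a supremum, and your own fallback --- pair the top blocks by sign and magnitude, send the $\ell^1$-small remainder of $C$'s spectrum to eigenvalues of $T$ of arbitrarily small modulus (or equivalently truncate and pass to the limit) --- closes the argument; but the attainment claim and the appeal to zero-padding should be dropped in favour of that approximation step.
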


To simplify notation, we use the following abbreviation.

\begin{defi}\label{def_KC}
Let $C\in\mathcal B^1(\mathcal H)$, $T\in\mathcal B(\mathcal H)$ both be selfadjoint. We define
$
K_C(T):=\sup_{U \in\mathcal U(\mathcal H)}\operatorname{tr}(C U^\dagger TU)\in\mathbb R
$
% abbreviated by 
or, equivalently, $K_C(T) :=\sup W_C(T) = \max\overline{W_C(T)}$.
\end{defi}

\noindent
Note that if $C$ and $T$ are positive semi-definite, then $K_C(T)$ turns into 
the \mbox{$C$-numerical} radius $r_C(T)$ of $T$. Now this definition gives rise to the following
result, whose finite-dimensional analogue can be found, e.g., in \cite[Thm.~7.4]{Ando89}.
\begin{proposition}\label{thm_1}
For $\rho,\omega\in\mathbb D(\mathcal H)$ the following statements are equivalent.
\begin{itemize}
\item[(a)] $\rho \prec \omega$\vspace{2pt}
\item[(b)] $K_\rho(T)\leq K_\omega(T)$ for all selfadjoint $T\in\mathcal K(\mathcal H)$.
\end{itemize}
\end{proposition}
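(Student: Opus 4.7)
The plan is to reduce both implications to the concrete formula for $K_\rho(T)$ supplied by Corollary \ref{cor_von_Neumann}. Since $\rho\in\mathbb D(\mathcal H)$ is positive semi-definite, $\rho^-=0$, so the second sum in that corollary vanishes and one is left with
\begin{equation*}
K_\rho(T)\;=\;\sum_{j=1}^\infty \lambda_j^\downarrow(\rho)\,\lambda_j^\downarrow(T^+)
\end{equation*}
for every selfadjoint $T\in\mathcal K(\mathcal H)$, and analogously for $\omega$. The whole proposition thereby collapses to a comparison of two weighted sums in which the common weight sequence $\lambda^\downarrow(T^+)$ is non-negative, non-increasing, and a null sequence (since $T^+$ is compact).

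For the implication (a)$\Rightarrow$(b) I would apply Lemma \ref{lemma_wprec} to the $n$-term truncations of $\lambda^\downarrow(\rho)$ and $\lambda^\downarrow(\omega)$, taking $c_j:=\lambda_j^\downarrow(T^+)$: these truncations inherit the sub-majorization inequalities $\sum_{j=1}^k\lambda_j^\downarrow(\rho)\leq\sum_{j=1}^k\lambda_j^\downarrow(\omega)$ directly from $\rho\prec\omega$, and $c$ is non-negative and non-increasing. Passing to the limit $n\to\infty$ is unproblematic because $\lambda^\downarrow(\rho),\lambda^\downarrow(\omega)\in\ell^1_+$ while $\lambda^\downarrow(T^+)$ is bounded, so dominated convergence applies. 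Equivalently one may write an Abel-summation identity $\sum_{j=1}^n c_j\lambda_j^\downarrow(\rho)=\sum_{j=1}^{n-1}(c_j-c_{j+1})\,S_j^\rho + c_n S_n^\rho$, with $S_k^\rho:=\sum_{j=1}^k\lambda_j^\downarrow(\rho)$, and use $c_j-c_{j+1}\geq 0$ together with $S_j^\rho\leq S_j^\omega$ to get the inequality termwise; the desired conclusion $K_\rho(T)\leq K_\omega(T)$ follows.

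For (b)$\Rightarrow$(a) it suffices to test the hypothesis against a very restricted class of operators, namely the rank-$k$ orthogonal projections $P_k$ for each $k\in\mathbb N$. Every such $P_k$ is selfadjoint and compact, equals its positive part, and has eigenvalue sequence $\lambda_j^\downarrow(P_k)=1$ for $j\leq k$ and $0$ otherwise. Hence the formula above specialises to $K_\rho(P_k)=\sum_{j=1}^k\lambda_j^\downarrow(\rho)$ and $K_\omega(P_k)=\sum_{j=1}^k\lambda_j^\downarrow(\omega)$, and the assumed inequality $K_\rho(P_k)\leq K_\omega(P_k)$ for every $k$, combined with $\operatorname{tr}(\rho)=1=\operatorname{tr}(\omega)$, is precisely the condition $\rho\prec\omega$ of Definition \ref{defi_maj}(b).

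The genuine mathematical content of the proposition therefore does not live in the two implications above but in Corollary \ref{cor_von_Neumann}, which packages the infinite-dimensional von Neumann-type trace identity furnished by Lemma \ref{lemma_4}. Taking that corollary as given, the only slightly technical point is the truncation-and-limit manoeuvre needed to promote the finite-dimensional Lemma \ref{lemma_wprec} to summable sequences paired with a bounded monotone weight, and this is routine; the use of rank-$k$ projections in the reverse direction is in fact a pleasantly economical choice of test operators, sidestepping any need for more general compact $T$.
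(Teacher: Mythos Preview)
Your proposal is correct and follows essentially the same approach as the paper's own proof: both directions reduce to the formula from Corollary~\ref{cor_von_Neumann}, with (a)$\Rightarrow$(b) handled via Lemma~\ref{lemma_wprec} applied to finite truncations and (b)$\Rightarrow$(a) by testing against finite-rank orthogonal projections. The only minor additions in your write-up are the explicit justification of the limit passage and the Abel-summation alternative, neither of which changes the substance.
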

% \noindent
% The proof is again in Appendix \appref{B}.
\begin{proof}
% Let any orthonormal basis $(e_n)_{n\in\mathbb N}$ of $\mathcal H$ as well as $k\in\mathbb N$ be given. 
% Consider the  (selfadjoint and finite-rank) projection
% $\Pi_k=\sum\nolimits_{j=1}^k\langle e_j,\cdot\rangle e_j$. \medskip

``(a) $\Rightarrow$ (b)'': Keeping in mind that $\rho,\omega\geq 0$, Coro.~\ref{cor_von_Neumann} yields
\begin{align*}
K_\rho(T) = \max \overline{W_\rho(T)} = \sum\nolimits_{j=1}^\infty \lambda_j^\downarrow(\rho) \lambda_{j}^\downarrow(T^+)
% \max\lim_{k\to\infty} \overline{W_\rho(\Pi_kT\Pi_k)}=\lim_{k\to\infty}\max \overline{W_\rho(\Pi_kT\Pi_k)}
\end{align*}
and similarly for $K_\omega(T)$. 
%\begin{align*}
%K_\omega(T) = \max \overline{W_\omega(T)} = \sum\nolimits_{j=1}^\infty \lambda_j^\downarrow(\omega) \lambda_{j}^\downarrow(T^+)
%% \max\lim_{k\to\infty} \overline{W_\rho(\Pi_kT\Pi_k)}=\lim_{k\to\infty}\max \overline{W_\rho(\Pi_kT\Pi_k)}
%\end{align*}
Moreover, Lemma \ref{lemma_wprec} implies
\begin{equation*}
\sum\nolimits_{j=1}^n \lambda_j^\downarrow(\rho) \lambda_{j}^\downarrow(T^+)
\leq \sum\nolimits_{j=1}^n \lambda_j^\downarrow(\omega) \lambda_{j}^\downarrow(T^+)
\end{equation*}
for all $n \in \mathbb N$ and thus it follows  $K_\rho(T)\leq K_\omega(T)$
for all selfadjoint $T\in\mathcal K(\mathcal H)$.

\medskip
\noindent
``(b) $\Rightarrow$ (a)'':  Let $k\in\mathbb N$ and let $(e_n)_{n\in\mathbb N}$  be any orthonormal basis of $\mathcal H$. Consider the (finite-rank) projection
$\Pi_k=\sum\nolimits_{j=1}^k\langle e_j,\cdot\rangle e_j$. As $\Pi_k$ is compact and selfadjoint with 
eigenvalues $1$ (of multiplicity $k$) and $0$ (of infinite multiplicity), Coro.~\ref{cor_von_Neumann} 
yields
$
K_\rho(\Pi_k)=\sum\nolimits_{j=1}^k\lambda_j^\downarrow(\rho) 
$
and
$
K_\omega(\Pi_k)=\sum\nolimits_{j=1}^k\lambda_j^\downarrow(\omega)\,.
$
Now by assumption, one has
\begin{align*}
\sum\nolimits_{j=1}^k\lambda_j^\downarrow(\rho)=K_\rho(\Pi_k)\leq K_\omega(\Pi_k)=\sum\nolimits_{j=1}^k\lambda_j^\downarrow(\omega)
\end{align*}
for all $k\in\mathbb N$ which shows $\rho\prec\omega$ and thus concludes this proof.
\end{proof}

\section{Idea behind the Main Result}\label{sec:main_res}

Below we sketch the proof of our main result Thm.~\ref{thm_normal_V}. A full proof will 
be given in Appendix \appref{C}. Here, we sketch central ideas and key 
lemmas, the proofs of which are either straightforward %computations 
or postponed to Appendices
\appref{C} and \appref{D}. For convenience, let us first recall the precise statement of Thm.~\ref{thm_normal_V}.

\thmskip
\noindent
\textbf{Theorem \ref{thm_normal_V}.} \textit{Given the Markovian control system $\Sigma_V$ %, i.e.~Eq.~\eqref{eq:GA} 
\begin{equation*}%\label{eq:normal_V}
\dot\rho(t) = - i \Big[H_0 + \sum_{j=1}^mu_j(t)H_j,\rho\Big] - 
\gamma(t) \big(\tfrac{1}{2}(V^\dagger V \rho + \rho V^\dagger V) - V\rho V^\dagger \big)\,,\;\text{where}
\end{equation*}
%where
\begin{itemize}
\item[(1)] the drift $H_0$ is selfadjoint and the controls $H_1,\ldots,H_m$ are selfadjoint and bounded,
\item[(2)] the Hamiltonian part $\Sigma_0$ is strongly (approximately) operator controllable 
in the sense of Def.~\ref{def:strong_cont},%\eqref{eq:strong_cont} 
\item[(3)] the noise term $V\in\mathcal K(\mathcal H)\setminus\lbrace 0\rbrace$ is compact, normal and switchable by $\gamma(t)\in\{0,1\}$.
\end{itemize}
Then the $||\cdot||_1$-closure of the reachable set of any initial state \mbox{$\rho(0)=\rho_0\in\mathbb D(\mathcal H)$} under 
the system $\Sigma_V$ 
% (extending $\Sigma_0$ by switchable noise $\gamma(t)\cdot\hGA$ with $\gamma(t)\in\{0,\gamma_*\}$
% and $\gamma_*>0$) generated by a normal compact $V\in\mathcal K(\mathcal H)\setminus\lbrace 0\rbrace$ 
exhausts all states majorized by the initial state $\rho_0$ %, i.e.
\begin{align*}%\label{eq:thm_normal_V}
\clon{\reach_{\Sigma_V}(\rho_0)}=\lbrace \rho\in\mathbb D(\mathcal H)\,|\,\rho\prec\rho_0\rbrace\,.
\end{align*}}%
\noindent%
The following lemmas play a crucial role in the proof of Theorem~\ref{thm_normal_V}.
The first one reveals a beautiful eigenspace structure of the noise generators $\hGA_V$ whenever
$V$ is normal and compact, and it follows by direct computation.
\begin{lemma}\label{lemma_normal_generator}
Let $V\in\mathcal K(\mathcal H)$ be normal, $(f_j)_{j\in\mathbb N}$ its orthonormal eigenbasis
and $(v_j)_{j\in\mathbb N}$ its modified eigenvalue sequence, hence 
$ V =\sum_{j=1}^\infty v_j\langle f_j,\cdot\rangle f_j$ (cf.~Appendix \appref{A}). % \ref{lemma_berb_4_6}
Then for all $X\in\mathcal B(\mathcal H)$, the noise operator $\hGA_V$  given by Eq.~\eqref{eq:def_gamma_V} 
acts like
\begin{equation}\label{eq:action_noise}
\begin{split}
\langle f_j,\hGA_V(X) f_k\rangle 
% & = \tfrac{1}{2}(|v_j|^2+|v_k|^2-2v_j\overline{v_k})\langle f_j,Xf_k\rangle\\[1mm]
& =\big(\tfrac{1}{2}{|v_j-v_k|^2}-i\operatorname{Im}(v_j\overline{v_k})\big)  \langle f_j,Xf_k\rangle
\end{split}
\end{equation}
for all $j,k\in\mathbb N$. In particular, each rank-$1$ operator of the form  
$\langle f_k,\cdot\rangle f_j$ is an eigenvector of $\hGA_V$ to the eigenvalue 
$\tfrac{1}{2}{|v_j-v_k|^2}-i\operatorname{Im}(v_j\overline{v_k})$ and the kernel of
$\hGA_V$ contains $ \operatorname{span}\{\langle f_j,\cdot\rangle f_j \,|\, j \in\mathbb N\}$.
% \subset \operatorname{ker}(\hGA_V)\,$. 
Moreover, it follows
\begin{equation*}
\exp(-t\hGA_V)(\langle f_k,\cdot\rangle f_j)
= \exp\big(-\tfrac{t}2 |v_j-v_k|^2\big)\exp(it\operatorname{Im}(v_j\overline{v_k}))\langle f_k,\cdot\rangle f_j
\end{equation*}
for all $t\in\mathbb R$ and $j,k\in\mathbb N$. 
\end{lemma}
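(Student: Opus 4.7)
My plan is to establish the matrix--element identity \eqref{eq:action_noise} by a direct computation, after which every remaining assertion follows by inspection. As $V$ is compact and normal, the spectral theorem furnishes the norm-convergent expansion $V=\sum_{j=1}^\infty v_j\langle f_j,\cdot\rangle f_j$ together with $V^\dagger=\sum_{j=1}^\infty\overline{v_j}\langle f_j,\cdot\rangle f_j$ and $V^\dagger V=VV^\dagger=\sum_{j=1}^\infty|v_j|^2\langle f_j,\cdot\rangle f_j$, so that $Vf_k=v_kf_k$, $V^\dagger f_k=\overline{v_k}f_k$ and $V^\dagger Vf_k=|v_k|^2f_k$ for every $k\in\mathbb N$.

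Next I would expand
\begin{equation*}
\langle f_j,\hGA_V(X)f_k\rangle=\tfrac{1}{2}\langle f_j,V^\dagger VXf_k\rangle+\tfrac{1}{2}\langle f_j,XV^\dagger Vf_k\rangle-\langle f_j,VXV^\dagger f_k\rangle,
\end{equation*}
move $V^\dagger V$ onto $f_j$ in the first summand by its selfadjointness, let it act directly on $f_k$ in the second, and in the third use $V^\dagger f_k=\overline{v_k}f_k$ combined with $\langle f_j,Vy\rangle=v_j\langle f_j,y\rangle$. Collecting scalars produces the prefactor $\tfrac{1}{2}|v_j|^2+\tfrac{1}{2}|v_k|^2-v_j\overline{v_k}$ multiplying $\langle f_j,Xf_k\rangle$, which the polarisation identity $|v_j-v_k|^2=|v_j|^2+|v_k|^2-2\operatorname{Re}(v_j\overline{v_k})$ rewrites as $\tfrac{1}{2}|v_j-v_k|^2-i\operatorname{Im}(v_j\overline{v_k})$, giving exactly \eqref{eq:action_noise}.

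The remaining claims then require no extra work. Specialising $X=\langle f_k,\cdot\rangle f_j$ in \eqref{eq:action_noise}, every matrix element of $\hGA_V(X)$ in the basis $(f_\ell)_{\ell\in\mathbb N}$ vanishes except the $(j,k)$-one, so $\langle f_k,\cdot\rangle f_j$ is an eigenvector with the asserted eigenvalue, and setting $j=k$ shows every diagonal rank-one operator lies in $\ker\hGA_V$, hence so does their span. Finally, $V\in\mathcal K(\mathcal H)$ makes $\hGA_V$ a bounded operator on $\mathcal B(\mathcal H)$, so $\exp(-t\hGA_V)$ is given by its norm-convergent exponential series; applied to the eigenvector the series collapses to the scalar exponential of the eigenvalue, and separating real and imaginary parts yields the stated product of damping factor and oscillatory phase. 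There is no substantive obstacle here---the only point to watch is that the spectral expansions converge in operator norm (guaranteed by compactness of $V$), which legitimises the term-by-term manipulation of matrix elements without any domain subtleties.
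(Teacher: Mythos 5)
Your computation is correct and is precisely the ``direct computation'' the paper invokes without writing out: diagonalize $V$, evaluate the three terms of $\hGA_V$ on matrix elements using $Vf_k=v_kf_k$ and $V^\dagger f_k=\overline{v_k}f_k$, and rewrite $\tfrac12|v_j|^2+\tfrac12|v_k|^2-v_j\overline{v_k}$ as $\tfrac12|v_j-v_k|^2-i\operatorname{Im}(v_j\overline{v_k})$; the eigenvector, kernel and exponential statements then follow exactly as you say since $\hGA_V$ is bounded. No gaps.
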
 

%\marginpar{Verstehe ich nicht -- kann raus?}
%This shows that every compact, normal Lindblad-$V$ generates (at least partial) decoherence in 
%the basis which diagonalizes it.

%\noindent
The following lemmas provide two crucial approximation results.

\begin{lemma}\label{lemma_maj_closed}
For all $\rho_0\in\mathbb D(\mathcal H)$ the set 
$\lbrace \rho\in\mathbb D(\mathcal H)\,|\,\rho\prec\rho_0\rbrace$ 
is closed w.r.t.~the trace norm $\|\cdot\|_1$.
\end{lemma}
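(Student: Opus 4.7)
The plan is to exploit the $C$-numerical-radius characterization of majorization furnished by Proposition~\ref{thm_1}: $\rho \prec \rho_0$ if and only if $K_\rho(T) \leq K_{\rho_0}(T)$ for every selfadjoint $T \in \mathcal K(\mathcal H)$. This rewrites the set in question as an intersection
$$\{\rho \in \mathbb D(\mathcal H) \mid \rho \prec \rho_0\} \;=\; \mathbb D(\mathcal H)\cap\bigcap_{\substack{T \in \mathcal K(\mathcal H) \\ T = T^\dagger}} \bigl\{X \in \mathcal B^1(\mathcal H) \,\big|\, K_X(T) \leq K_{\rho_0}(T)\bigr\},$$
so it will suffice to check that (i) $\mathbb D(\mathcal H)$ is $\|\cdot\|_1$-closed in $\mathcal B^1(\mathcal H)$ and (ii) for each fixed selfadjoint $T \in \mathcal K(\mathcal H)$ the functional $X \mapsto K_X(T)$ is $\|\cdot\|_1$-continuous on the trace class, so that each slice on the right is a closed sublevel set.

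Step (i) is standard: trace-norm limits preserve both positive semi-definiteness and the value of the trace, so $\mathbb D(\mathcal H)$ is trivially $\|\cdot\|_1$-closed. For step (ii), Definition~\ref{def_KC} exhibits $K_X(T)$ as a supremum of the affine family $X \mapsto \operatorname{tr}(X\,U^\dagger T U)$ indexed by $U \in \mathcal U(\mathcal H)$. Combining the elementary bound $|\sup f - \sup g| \leq \sup|f-g|$ with the trace/operator-norm duality $|\operatorname{tr}(AB)| \leq \|A\|_1 \|B\|_{\mathrm{op}}$ (Appendix~\appref{A}) and the unitary invariance $\|U^\dagger T U\|_{\mathrm{op}} = \|T\|_{\mathrm{op}}$ yields the Lipschitz estimate
$$|K_{X_1}(T) - K_{X_2}(T)| \;\leq\; \sup_{U \in \mathcal U(\mathcal H)} \bigl|\operatorname{tr}\bigl((X_1 - X_2)\, U^\dagger T U\bigr)\bigr| \;\leq\; \|T\|_{\mathrm{op}}\, \|X_1 - X_2\|_1.$$
Thus each slice $\{K_X(T) \leq K_{\rho_0}(T)\}$ is the preimage of $(-\infty,K_{\rho_0}(T)]$ under a Lipschitz map, hence closed; an intersection of closed sets is closed, and the lemma follows.

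There is essentially no obstacle once Proposition~\ref{thm_1} is available — the whole argument collapses to a one-line duality estimate. Without that characterization one would have to argue directly with eigenvalue sequences, verifying that the partial sums $\sum_{j=1}^{k}\lambda_j^\downarrow(\rho_n)$ converge to $\sum_{j=1}^{k}\lambda_j^\downarrow(\rho)$ whenever $\rho_n \to \rho$ in trace norm; this is true (via Weyl/Lidskii-type inequalities for trace-class operators) but notationally unpleasant. Routing everything through the Lipschitz functional $X \mapsto K_X(T)$ is precisely what makes the $C$-numerical-radius approach of Section~\ref{sec:set_conv} pay off at this stage.
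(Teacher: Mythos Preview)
Your proof is correct and follows the same overall strategy as the paper: both reduce to Proposition~\ref{thm_1} and then verify that $X \mapsto K_X(T)$ is continuous in the trace norm. The difference lies in how that continuity is established. The paper routes it through the Hausdorff-metric machinery of Appendix~\appref{B}, invoking Lemma~\ref{lemma_3} (convergence $\overline{W_{\rho_n}(T)} \to \overline{W_\omega(T)}$ of the closed $C$-numerical ranges) together with Lemma~\ref{lemma_lim_max} (continuity of $\max$ under Hausdorff convergence). Your argument bypasses all of this with the one-line Lipschitz estimate $|K_{X_1}(T)-K_{X_2}(T)| \leq \|T\|_{\mathrm{op}}\,\|X_1-X_2\|_1$, which follows immediately from $|\sup f-\sup g|\leq\sup|f-g|$ and the trace inequality~\eqref{eq:4}. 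This is genuinely simpler and avoids importing the set-convergence lemmas; the paper's route has the minor advantage of reusing machinery already developed for Corollary~\ref{cor_von_Neumann}, but for the present lemma your direct estimate is the cleaner argument.
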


\begin{lemma}[Unitary channel approximation]\label{lemma_unitary_channel_approx}
Consider a subset $R\subseteq \mathcal U(\mathcal H)$ of the unitary group of $\mathcal H$ such 
that $\overline{R}^s=\mathcal U(\mathcal H)$, i.e.~its strong closure relative to $\mathcal U(\mathcal H)$ yields the full group. 
%\marginpar{closure with in $U(\mathcal H)$ or $B(\mathcal H)$}
Furthermore, let $\rho\in\mathbb D(\mathcal H)$ and $U\in\mathcal U(\mathcal H)$.\\
Then for all $\varepsilon>0$ one can find $\tilde U\in R$ such that
$
\|U\rho \,U^\dagger-\tilde U\rho\,\tilde U^\dagger\|_1<\varepsilon\,.
$
\end{lemma}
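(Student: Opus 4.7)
The plan is to reduce everything to a finite spectral window of $\rho$ and to exploit strong density of $R$ only on finitely many test vectors. Since $\rho\in\mathbb D(\mathcal H)$ is trace-class, it admits a spectral resolution $\rho=\sum_{j=1}^\infty p_j\ketbra{e_j}{e_j}$ with $p_j\ge 0$, $\sum_j p_j=1$ and an orthonormal system $(e_j)_{j\in\mathbb N}$. For a truncation index $N\in\mathbb N$ to be chosen, put $\rho_N:=\sum_{j=1}^N p_j\ketbra{e_j}{e_j}$; then $\|\rho-\rho_N\|_1=\sum_{j>N}p_j$ becomes arbitrarily small.

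The key step is the telescoping
\begin{equation*}
U\rho U^\dagger-\tilde U\rho\tilde U^\dagger
=(U-\tilde U)\rho\,U^\dagger+\tilde U\,\rho\,(U^\dagger-\tilde U^\dagger)\,,
\end{equation*}
combined with $\|AB\|_1\le\|A\|_{\text{op}}\|B\|_1$ and $\|U\|_{\text{op}}=\|\tilde U\|_{\text{op}}=1$. Both summands are then controlled in trace norm by $\|(U-\tilde U)\rho\|_1$ and its adjoint partner, respectively. Expanding via the spectral decomposition and using that a rank-one operator $\ketbra{a}{b}$ has trace norm $\|a\|\,\|b\|$, I obtain
\begin{equation*}
\|(U-\tilde U)\rho\|_1
\le\sum_{j=1}^\infty p_j\,\|(U-\tilde U)e_j\|\,.
\end{equation*}

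The rest is a standard $\varepsilon/2$-splitting. The tail $\sum_{j>N}p_j\|(U-\tilde U)e_j\|$ is at most $2\sum_{j>N}p_j$ (since $\|U-\tilde U\|_{\text{op}}\le 2$) and drops below $\varepsilon/8$ for $N$ large enough. For the finite head, a basic strong neighborhood of $U$ in $\mathcal U(\mathcal H)$ has the shape $\{W\in\mathcal U(\mathcal H):\|(W-U)e_j\|<\delta,\,j=1,\dots,N\}$, so the hypothesis $\clos{R}=\mathcal U(\mathcal H)$ directly yields $\tilde U\in R$ with $\|(U-\tilde U)e_j\|<\varepsilon/8$ for each $j\le N$. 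Both halves of the telescoping are then bounded by $\varepsilon/2$, giving $\|U\rho U^\dagger-\tilde U\rho\tilde U^\dagger\|_1<\varepsilon$.

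The only apparent obstacle is a mismatch of topologies: strong convergence on $\mathcal U(\mathcal H)$ is strictly coarser than trace-norm convergence on $\mathbb D(\mathcal H)$, so one could worry that strong approximation of $U$ does not translate into trace-norm approximation of $U\rho U^\dagger$. Trace-classness of $\rho$ is exactly what bridges the two scales, since the spectral mass concentrates on finitely many eigenvectors up to arbitrarily small $\ell^1$ error and on any finite set of test vectors the strong topology coincides with the norm topology. A minor caveat is that the $e_j$ need only form an orthonormal system (not a basis) when $\ker\rho\ne\{0\}$, but the proof uses only those $e_j$ appearing in the decomposition, so this causes no issue.
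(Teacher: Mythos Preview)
Your proof is correct and follows essentially the same approach as the paper: spectral decomposition of $\rho$, a telescoping split of $U\rho U^\dagger-\tilde U\rho\tilde U^\dagger$, the rank-one identity $\|\ketbra{a}{b}\|_1=\|a\|\,\|b\|$, and a head/tail split where the tail is controlled by $\sum_{j>N}p_j$ and the head by strong density of $R$ on finitely many eigenvectors. The only cosmetic differences are the particular telescoping grouping and the $\varepsilon$-constants; the paper also invokes an orthonormal basis rather than merely an orthonormal system, but as you observe this is immaterial.
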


Now in our control setting we do not have direct access to the ``pure'' noise
generator $-\hGA_V$. However, % as $H_0\in\mathcal B(\mathcal H)$ 
we may use the Lie-Trotter product formula (cf.~\cite[Thm.~VIII.29]{ReedSimonI}) 
to approximate the noise dynamics $(\exp(-t\hGA_V))_{t\in\mathbb R_+}$:

\begin{lemma}[Trotter trick]\label{rem_trotter}
For $u_1(t)=\ldots=u_m(t)=0$ and $\gamma(t)=1$
(i.e.~noise only) the operator solution of \eqref{eq:GA} reads 
$(\exp(-it\hH_0-t\hGA_V))_{t\in\mathbb R_0^+}$. %via the usual power series of the exponential. Hence
Then for $t\geq 0$ (uniformly on bounded intervals) one has
\begin{equation*}
\lim_{n\to\infty}\Big\Vert\Big( \exp\Big(\frac{it\hH_0}{n}\Big)\exp\Big(\frac{-it\hH_0-t\hGA_V}{n}\Big) \Big)^n-\exp(-t\hGA_V)\Big\Vert_\text{op}=0\,.
\end{equation*}
%where $\hH_0=\ad_{H_0},\hGA_V\in\mathcal B(\mathcal B^1(\mathcal H))$. 
\end{lemma}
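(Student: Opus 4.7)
The plan is to view this product formula as an instance of the Lie--Trotter theorem applied to the telescoping splitting $A := i\hH_0$ and $B := -i\hH_0 - \hGA_V$. By construction $A+B = -\hGA_V$, so the expression in the lemma equals $\bigl(e^{tA/n}e^{tB/n}\bigr)^n$ and the intended limit $e^{t(A+B)} = e^{-t\hGA_V}$ is exactly the pure-noise evolution. This identification matches the hypothesis of Reed--Simon VIII.29, which is the tool of choice.

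Before applying Trotter I would verify the generator conditions. Compactness of $V$ yields $\|\hGA_V\|_{\text{op}}\le 2\|V\|^2 < \infty$, so $\hGA_V$ is bounded on $\mathcal B^1(\mathcal H)$ and $e^{-t\hGA_V}$ is even uniformly continuous in $t$. The operator $A = i\hH_0$ generates the strongly continuous isometric conjugation group $\rho\mapsto e^{-isH_0}\rho\, e^{isH_0}$ on the trace class (isometric because the trace norm is unitarily invariant). Since $B = -A - \hGA_V$ is a bounded perturbation of $-A$, Kato's bounded-perturbation theorem makes $B$ the generator of a $C_0$-contraction semigroup, with contractivity additionally guaranteed by Lindblad's theorem as $B$ is of GKSL form with Hamiltonian $-H_0$. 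Applying Reed--Simon VIII.29 then gives the Trotter convergence $\bigl(e^{tA/n}e^{tB/n}\bigr)^n \to e^{-t\hGA_V}$ in the strong operator topology, uniformly for $t$ in bounded intervals.

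The main obstacle is upgrading this strong convergence to the operator-norm convergence stated in the lemma. For bounded $H_0$ (Appendix C) this is routine: both $A$ and $B$ are bounded operators, a direct Taylor expansion yields $e^{tA/n}e^{tB/n} = I + t(A+B)/n + O(1/n^2)$ in operator norm with constants polynomial in $t$, $\|\hH_0\|_{\text{op}}$ and $\|\hGA_V\|_{\text{op}}$, and raising to the $n$-th power produces $O(1/n)$ operator-norm convergence uniformly on any compact $t$-interval. For unbounded $H_0$ (Appendix D) the Taylor argument breaks down, and the preferred substitute is the Duhamel-type identity
\[
e^{tA/n}e^{tB/n} - I \;=\; \int_0^{t/n} e^{uA}(A+B)e^{uB}\,du \;=\; -\int_0^{t/n} e^{uA}\hGA_V\, e^{uB}\,du,
\]
which is well defined precisely because $A+B = -\hGA_V$ is bounded even when $A$ is not. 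Operator-norm control of the deviation from $I - (t/n)\hGA_V$ then reduces to showing norm-continuity of $u\mapsto e^{uA}\hGA_V e^{uB}$ at $u=0$, and this is where the compactness and normality of $V$ become indispensable: they constrain the range of $\hGA_V$ enough to propagate the strong continuity of the conjugation groups into genuine norm bounds. A standard telescoping sum combined with the contractivity of every factor then lifts the per-step estimate to the $n$-th power, yielding the desired operator-norm convergence uniformly on bounded intervals.
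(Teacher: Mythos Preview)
Your identification of the splitting $A=i\hH_0$, $B=-i\hH_0-\hGA_V$, $A+B=-\hGA_V$ and the appeal to the Lie--Trotter product formula is exactly what the paper does: it simply cites Reed--Simon VIII.29 without further argument. Note, however, that the lemma as stated sits in the Appendix~C framework where $H_0$ is bounded. In that case both $A$ and $B$ are bounded operators on $\mathcal B^1(\mathcal H)$, and the Lie--Trotter formula already yields \emph{operator-norm} convergence---your Taylor-expansion argument is precisely the standard proof of this. So no ``upgrade'' from strong to norm convergence is needed, and your bounded-$H_0$ paragraph alone settles the lemma.

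Your unbounded-$H_0$ discussion goes beyond both the lemma and the paper. In Appendix~D the paper explicitly replaces operator-norm convergence by the strong Trotter formula for contraction semigroups (Reed--Simon X.51), remarking that this suffices since the approximation is only ever applied to a fixed state. Your Duhamel route is in fact a legitimate \emph{strengthening} of what the paper claims: the identity $e^{sA}e^{sB}-I=-\int_0^s e^{uA}\hGA_V e^{uB}\,du$ is valid because $B$ is a bounded perturbation of $-A$, and the norm-continuity of $u\mapsto e^{uA}\hGA_V e^{uB}$ at $u=0$ can indeed be established---the key point being $\|e^{-iuH_0}Ve^{iuH_0}-V\|\to 0$, which follows from compactness of $V$ alone (strongly convergent unitaries composed with a compact operator converge in norm, applied on both sides). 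Normality of $V$ plays no role in this step, so your attribution there is slightly off. In summary: for the lemma proper your proof is correct and coincides with the paper's one-line citation; for unbounded $H_0$ you sketch a sharper result than the paper actually uses or proves.
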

\noindent Thus given a time $t\geq 0$ and precision $\varepsilon>0$, 
to ``simulate'' $\exp(-t\hGA_V)$ within this precision it suffices 
to apply the noisy evolution as well as the unitary channel 
$ \exp(it\hH_0/N)$ to the system---in an alternating manner, $N$ times (for sufficiently large $N\in\mathbb N$). ---
Now 
we are ready to outline the proof of Thm.~\ref{thm_normal_V}.

\begin{proof}[Sketch of the proof of Theorem \ref{thm_normal_V}]
``$\subseteq$'': As $V$ is assumed to be normal one has $(i\hH(t) + \gamma(t)\hGA_V)(\mathbf{1})=0$
at all times so the operator solution of Eq.~\eqref{eq:GA} is in $\mathbb S(\mathcal H)$, i.e.~a 
bi-stochastic quantum map and one can never leave the set of states majorized by $\rho_0$ 
(cf.~Lemma \ref{lemma_li}). By Lemma \ref{lemma_maj_closed}, the $\|\cdot\|_1$-closure yields
$$
\clon{\reach_{\Sigma_V}(\rho_0)}
	\subseteq 
	\clon{\lbrace \rho\in\mathbb D(\mathcal H)\,|\,\rho\prec\rho_0\rbrace}
    =\lbrace \rho\in\mathbb D(\mathcal H)\,|\,\rho\prec\rho_0\rbrace\,.
$$

\noindent
``$\supseteq$'': As $V\in\mathcal K(\mathcal H)$ is normal we can diagonalize it (see Appendix \appref{A})
% (Lemma \ref{lemma_berb_4_6}) 
with orthonormal eigenbasis $(e_j)_{j \in \mathbb N}$.
% in which it generates decoherence (Lemma \ref{lemma_normal_generator}), at least if the eigenvalues 
% corresponding to some ``matrix'' entry are different. 
Now, let $\varepsilon>0$ and $\rho\in\mathbb D(\mathcal H)$ with $\rho\prec\rho_0$ be given. We have 
to find $\rho_F\in \reach_{\Sigma_V}(\rho_0)$ such that $\|\rho-\rho_F\|_1<\varepsilon$. 
By assumption there exist $x,y\in\ell^1_+(\mathbb N)$, $x,y\neq 0$ as well as $W_1,W_2\in\mathcal U(\mathcal H)$ such that
$
\rho=W_1\operatorname{diag}(x)W_1^\dagger$, $ \rho_0=W_2\operatorname{diag}(y)W_2^\dagger
$ with $x\prec y$. Here, $\operatorname{diag}$ refers to the above eigenbasis of $V$. Applying Prop.~\ref{prop_schur_horn_gohberg} to $x,y$ gives us unitary $U\in\mathcal B(\mathcal H)$ such that $U\operatorname{diag}(y)U^\dagger$ has diagonal entries $(x_n)_{n\in\mathbb N}$. The proof roughly consists of three steps shown here:
\begin{equation}\label{eq:steps_idea}
\begin{split}
\rho_0=W_2\operatorname{diag}(y)W_2^\dagger\overset{\text{Step }1}\longrightarrow U \operatorname{diag}(y)U^\dagger
 &\overset{\text{Step }2}\longrightarrow \operatorname{diag}(x)\\
& \overset{\text{Step }3}\longrightarrow W_1\operatorname{diag}(x)W_1^\dagger=\rho\,.
\end{split}
\end{equation}
Step 1 and 3 merely apply a unitary channel; assuming strong operator controllability, 
we may use unitary channels giving the target state with arbitrary precision (cf.~Lemma 
\ref{lemma_unitary_channel_approx}).
Step 2 %in principle 
is about getting rid of all off-diagonal elements of $U \operatorname{diag}(y)U^\dagger$ to reach % (approximately)
$\operatorname{diag}(x)$ by applying pure noise $\exp(-t\hGA_V)$ in the limit $t\to\infty$ (cf.~Lemma 
\ref{lemma_normal_generator}). As expected there are a few delicate issues:
\begin{itemize}
\item 
We have no access to pure noise, as in our setting we cannot switch off $H_0$. %, i.e.~the unitary evolution of the closed system. 
Yet by a Trotter-type 
argument we can approximate the desired noise with arbitrary precision, cf.~Lemma \ref{rem_trotter}
and Lemma \ref{lemma_trotter_approx}.
\item 
If the eigenvalues of $V$ are not pairwise different, there are some ``matrix'' elements left 
untouched by the noise as a consequence of \eqref{eq:action_noise}. So one may need permutation channels 
(which in particular are unitary) to rearrange those elements into ``spots'' where the noise affects them. 
\item 
As in Step 1 and 3 we have to approximate these permutation channels. Here we use the 
approximation property of the trace class (cf.~Lemma~\ref{lemma_block_approx}), i.e.~we invoke 
decoherence on a sufficiently large but finite ``block'' of the density operator so we only need
finitely many permutations.
\item
Applying Prop.~1 requires that $\rho_0$, $\rho$ are unitarily diagonalized so that the 
original and the modified eigenvalue sequences of these states co{\"i}ncide (which either means the states 
are finite-rank or have trivial kernel)---else the zeros that have to be added for the modified eigenvalue sequence 
prevent this. In the latter case we can proceed to states $\rho'$, $\rho_0'$ which satisfy the assumptions of 
Prop.~1 and which are close (in trace norm) to the original states, and execute the scheme of Eq.~\eqref{eq:steps_idea}.
\end{itemize}
Altogether this is enough to perform the scheme suggested in Eq.~\eqref{eq:steps_idea} with arbitrary 
precision. So $\rho$ $\prec\rho_0$ is in the $\|\cdot\|_1$-closure of the reachable set.
The full proof with all detail is in Appendices \appref{C} and \appref D.
\end{proof}

\section{Conclusions and Outlook}
For the first time, here we have derived sufficient conditions under which a quantum dynamical 
system can actually reach (in the closure) all quantum states majorized by the respective initial 
state in an infinite dimensional quantum system following a %coherently and incoherently 
controlled Markovian master equation. 
To this end, we have extended the standard unital {\sc gksl} master equation to an infinite dimensional 
bilinear control system $\Sigma_V$ the unitary part of which has to be operator controllable
and the dissipative part (generated by a single normal compact noise term $V$) 
has to be bang-bang switchable.
% which in addition allows for bang-bang switchable noise generated by a single normal compact noise term $V$.
This takes recent results on finite dimensional
systems~\cite{BSH16,OSID17} to infinite dimensions. --- While the generalization from a single such $V$ to 
several commuting compact noise terms $V_k$ is obvious,
a generalization beyond compact $V$ seems challenging.
One may also relax considerations to %more general types of semigroup continuity, e.g.,~
weak-$*$ continuity of the semigroup,
which goes beyond the standard {\sc gksl}-equation, as pursued by Carbone, Fagnola \cite{CarboneFagnola01} 
and more recently by Siemon, Holevo and Werner~\cite{OSID_Werner_17}. 

%On a more general scale, our results can be seen as a first step triggering a number
%of questions to structure the follow-up:			RED

For \textit{applying} the results to broader classes of physical systems, one may think of further 
generalizations.
%If we do not change the
%assumptions on the noise, we can use the previous results about unitary
%control \cite{boscain2012weak,caponigro2018exact,keyl18InfLie}
%together with the theory of contractive semigroups to generalize the
%results presented in this work with reasonable effort. This approach,
The current setup restricts us to (possibly unbounded) system Hamiltonians $H_0$ 
with discrete spectrum such as
bound systems where particles are trapped within an unbounded potential (e.g., harmonic oscillators). 
To look at more interesting setups where processes like ionization,
tunneling and evaporation play a role, we have to use operators with
continuous spectrum. However, in this area even coherent control is
not understood well enough (if at all). Closing this gap is therefore
an obvious (yet non-trivial\hspace{.3mm}!\/) next step. 

Thus the spirit of
Sudarshan still promises insightful results to come.

\section*{Acknowledgements}
This work was supported %in part 
by the Bavarian excellence network {\sc enb}
via the \mbox{International} PhD Programme of Excellence
{\em Exploring Quantum Matter} ({\sc exqm})
and by {\em Deutsche Forschungsgemeinschaft} ({\sc dfg}, German Research
Foundation) under Germany’s Excellence Strategy {\sc exc}-2111–390814868.

\section*{Appendix \app{A}: Notation and Basics}
For a comprehensive introduction to infinite-dimensional separable Hilbert spaces and 
Schatten-class operators we refer to, e.g.,~\cite{berberian1976,MeiseVogt97en,ReedSimonI}. 
As we will encounter compact normal operators repeatedly, let us first recap the 
well-known diagonalization result.
\begin{lemma}[\cite{berberian1976}, Thm.~VIII.4.6]\label{lemma_berb_4_6}
Let $T\in\mathcal K(\mathcal H)$ be normal, i.e.~$T^\dagger T=TT^\dagger$. Then there exists an 
orthonormal basis $(e_n)_{n\in\mathbb N}$ of $\mathcal H$ such that
\begin{align*}
T=\sum\nolimits_{j=1}^\infty \tau_j\langle e_j,\cdot\rangle e_j
\end{align*}
where $(\tau_j)_{j\in\mathbb N}$ is the modified eigenvalue sequence (cf.~footnote \ref{footnote_1}) of $T$.
\end{lemma}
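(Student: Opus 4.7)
The plan is to execute the classical spectral-theorem-for-compact-normal-operators argument, separating the two non-trivial tasks: (i)~producing an orthonormal system of eigenvectors that spans $(\ker T)^\perp$, and (ii)~bookkeeping the zeros so that the expansion is indexed by a full orthonormal basis of $\mathcal H$ and the coefficients match the \emph{modified} eigenvalue sequence of footnote~\ref{footnote_1}. I would first reduce to the selfadjoint case by writing $T=A+iB$ with $A=\tfrac12(T+T^\dagger)$, $B=\tfrac{1}{2i}(T-T^\dagger)$, both compact and selfadjoint, and commuting because $T$ is normal. Applying the spectral theorem for compact selfadjoint operators to $A$ and then to $B$ on each (finite-dimensional, nonzero) eigenspace of $A$ yields a common orthonormal eigenbasis for $A$ and $B$, hence for $T$. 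Alternatively one could work directly with $T$ by imitating the Rayleigh-quotient proof using $\|T\|_{\mathrm{op}}=\sup_{\|x\|=1}\|Tx\|$ together with the identity $\|Tx\|=\|T^\dagger x\|$ coming from normality, but the reduction-to-selfadjoint route is cleaner here.

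Once the eigenstructure is established, the main substantive step is the density assertion: the closed linear span of all eigenvectors corresponding to nonzero eigenvalues of $T$ coincides with $(\ker T)^\perp$. First I would note the standard properties of the spectrum of a compact operator: nonzero spectral values are eigenvalues of finite algebraic (and, by normality, geometric) multiplicity, and they can accumulate only at $0$. Orthogonality of eigenspaces for distinct eigenvalues follows from normality via the usual calculation: if $Tx=\lambda x$ then $T^\dagger x=\overline\lambda x$ (because $\|(T-\lambda)x\|=\|(T^\dagger-\overline\lambda)x\|$), so for $Tx=\lambda x$ and $Ty=\mu y$ with $\lambda\ne\mu$ one has $\lambda\langle x,y\rangle=\langle T^\dagger x,y\rangle=\langle x,Ty\rangle=\mu\langle x,y\rangle$, forcing $\langle x,y\rangle=0$. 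Let $M$ be the closed span of all nonzero eigenspaces. Both $T$ and $T^\dagger$ map $M$ into $M$, so the orthogonal complement $M^\perp$ is invariant under $T$ and $T^\dagger$; thus $T|_{M^\perp}\in\mathcal K(M^\perp)$ is again compact and normal. If $T|_{M^\perp}\ne 0$ then its norm is attained at some eigenvalue $\ne 0$ (again via the selfadjoint reduction or via $\|T|_{M^\perp}\|^2=\|(T|_{M^\perp})^\dagger T|_{M^\perp}\|$ being an eigenvalue of a compact selfadjoint operator), producing a nonzero eigenvector inside $M^\perp$---a contradiction. Hence $T|_{M^\perp}=0$, i.e.\ $M^\perp\subseteq\ker T$, and combined with $\ker T\perp M$ this gives $\mathcal H=M\oplus\ker T$.

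Choosing an orthonormal basis of each (finite-dimensional) nonzero eigenspace, ordering these eigenvalues by decreasing magnitude with each eigenvalue repeated according to its multiplicity, and choosing an orthonormal basis of $\ker T$, produces an orthonormal basis of $\mathcal H$ in which $T$ is diagonal. The last step is just matching conventions: according to footnote~\ref{footnote_1}, the modified eigenvalue sequence $(\tau_j)_{j\in\mathbb N}$ is obtained from the ordering above by (a)~leaving it unchanged if $\operatorname{ran}T$ is finite-dimensional, (b)~prepending $\dim\ker T$ zeros if $\operatorname{ran}T$ is infinite-dimensional and $\dim\ker T<\infty$, or (c)~mixing in infinitely many zeros if both range and kernel are infinite-dimensional. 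In each case I would list the chosen orthonormal basis in precisely the order matching the zeros-and-non-zeros pattern of $(\tau_j)$, which is possible because the number of zeros to be inserted equals $\dim\ker T$ (finite or countably infinite, as the case may be) and the kernel has an orthonormal basis of exactly that cardinality by separability of $\mathcal H$.

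The only potential obstacle is the density step establishing $M^\perp=\ker T$; everything else is organizational. Since for compact \emph{selfadjoint} operators this density statement is the core of the spectral theorem---and the reduction above turns the compact-normal case into two commuting compact-selfadjoint ones---the argument goes through without further surprises, and the proof can simply invoke the compact selfadjoint spectral theorem as a black box.
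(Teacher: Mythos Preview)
Your argument is correct and is essentially the standard textbook proof; note, however, that the paper itself does not prove this lemma at all---it is stated with a bare citation to Berberian, Thm.~VIII.4.6, and used as a black box. So there is no ``paper's approach'' to compare against beyond that reference, and what you have written is a faithful unpacking of the cited result.

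One small point worth tightening: in your first reduction paragraph you diagonalize $B$ only on the \emph{nonzero} (finite-dimensional) eigenspaces of $A$, but $\ker A$ may be infinite-dimensional and you still need to diagonalize $B|_{\ker A}$ there (again by the compact selfadjoint spectral theorem). This is harmless because your second and third paragraphs give a self-contained direct argument for the key density statement $M^\perp=\ker T$ that does not rely on the $A+iB$ splitting, and that argument is clean: the step ``$T|_{M^\perp}\neq 0$ forces a nonzero eigenvalue'' is exactly where normality enters nontrivially, and your justification via $\|T\|^2$ being a nonzero eigenvalue of the compact selfadjoint $T^\dagger T$ (together with $T$-invariance of its eigenspaces) is the right one. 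The final bookkeeping matching the basis ordering to the three cases of the modified eigenvalue sequence in footnote~\ref{footnote_1} is straightforward and you have covered it correctly.
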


Moreover, recall that the set of trace-class operators is given by
\begin{equation*}
\mathcal B^1(\mathcal H):=\lbrace C\in\mathcal B(\mathcal H)\,|\,\Vert C\Vert_1:=\trace{(\sqrt{C^\dagger C})}<\infty\rbrace\subseteq\mathcal K(\mathcal H),
\end{equation*}
%as then every $C \in\mathcal B^1(\mathcal H)$ has a finite trace via the absolutely convergent sum
%$
%\trace{(C)}:=\sum_{n=1}^\infty\langle e_n,Ce_n\rangle\,.
%$
which forms a Banach space under the trace norm $\|\cdot\|_1$ and
constitutes a two-sided ideal in the $C^*$-algebra of all bounded operators
$\mathcal B(\mathcal H)$.
Important properties of the trace are
\begin{align}\label{eq:4}
\operatorname{tr}\big((\langle x,\cdot\rangle y)T\big) = \langle x,Ty\rangle
\quad\text{and}\quad
|\operatorname{tr}(CT)|\leq\|C\|_1\Vert T\Vert_{\text{op}}
\end{align}
for all $x,y\in\mathcal H$, $C\in\mathcal B^1(\mathcal H)$ and $T\in\mathcal B(\mathcal H)$. Furthermore, the trace class has the approximation property:
%Furthermore the trace is linear, continuous with respect to the trace norm and independent 
%of the chosen orthonormal basis.

\begin{lemma}[\cite{dirr_ve}, Lemma 3.2]\label{lemma_block_approx}
Let $C\in\mathcal B^1(\mathcal H)$ and let $(e_n)_{n\in\mathbb N}$ be any orthonormal basis of $\mathcal H$. 
For arbitrary $k\in\mathbb N$, let $\Pi_k:=\sum_{j=1}^k\langle e_j,\cdot\rangle e_j$ 
denote the orthogonal projection onto $\operatorname{span}\lbrace e_1,\ldots,e_k\rbrace$. Then the sequence 
of ``block approximations'' $(\Pi_nC\Pi_n)_{n\in\mathbb N}$ converges (in trace norm) to $C$, i.e. 
$$
\lim_{n\to\infty}\|C-\Pi_n C\Pi_n\|_1=0\,.
$$
\end{lemma}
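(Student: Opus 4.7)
The plan is to reduce the claim to a dominated-convergence argument applied to the singular value decomposition (SVD) of the trace-class operator $C$.

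First, I would invoke the SVD of trace-class operators: any $C \in \mathcal{B}^1(\mathcal{H})$ admits the representation $C = \sum_{j=1}^\infty s_j \langle \phi_j, \cdot \rangle \psi_j$, where $(s_j)_{j \in \mathbb{N}}$ are its singular values (non-increasing, with $\sum_j s_j = \|C\|_1 < \infty$) and $(\phi_j)_{j \in \mathbb{N}}, (\psi_j)_{j \in \mathbb{N}}$ are orthonormal systems in $\mathcal{H}$. This follows by applying Lemma \ref{lemma_berb_4_6} to $|C| = \sqrt{C^\dagger C}$ (which is compact, selfadjoint and positive) together with the polar decomposition $C = U|C|$.

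Next I would apply $\Pi_n$ to both sides of the expansion to obtain $\Pi_n C \Pi_n = \sum_{j=1}^\infty s_j \langle \Pi_n \phi_j, \cdot \rangle \Pi_n \psi_j$, and then estimate the trace-norm error term-by-term. Using the triangle inequality for $\|\cdot\|_1$, the rank-one identity $\|\langle x, \cdot \rangle y\|_1 = \|x\|\,\|y\|$ (a direct consequence of $\|T\|_1 = \operatorname{tr}\sqrt{T^\dagger T}$ applied to rank-one $T$), and the telescoping trick of inserting $\langle \Pi_n \phi_j, \cdot \rangle \psi_j$, I arrive at
\begin{equation*}
\|C - \Pi_n C \Pi_n\|_1 \leq \sum_{j=1}^\infty s_j \bigl( \|\phi_j - \Pi_n \phi_j\| + \|\psi_j - \Pi_n \psi_j\| \bigr).
\end{equation*}

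Finally, I would pass to the limit $n \to \infty$ by dominated convergence on the counting measure: for each fixed $j$, $\Pi_n \phi_j \to \phi_j$ and $\Pi_n \psi_j \to \psi_j$ because $(e_n)_{n\in\mathbb{N}}$ is an orthonormal basis, while each summand is pointwise bounded by the summable majorant $2 s_j$. The main — and really only — obstacle is justifying this interchange of sum and limit, which the majorant $2s_j$ handles cleanly; everything else is elementary Schatten-class arithmetic.
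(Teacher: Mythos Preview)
Your proof is correct. The SVD expansion, the rank-one trace-norm identity, the telescoping split, and the dominated-convergence majorant $2s_j$ all work exactly as you describe; the only small point worth making explicit is that $\|\phi_j-\Pi_n\phi_j\|\leq\|\phi_j\|=1$ because $\Pi_n$ is an orthogonal projection, which is what guarantees the uniform bound $2s_j$.

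As for comparison with the paper: the paper does not prove this lemma at all---it is merely quoted from the external reference \cite{dirr_ve} (Lemma~3.2) and stated without argument. So there is no in-paper proof to compare your approach against. Your argument is in any case the standard and essentially canonical one for this result.
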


\begin{defi}\label{def1}
\begin{itemize}
\item[(a)] A linear map $T:\mathcal B(\mathcal H)\to\mathcal B(\mathcal G)$ is trace-preserving if 
$T(\mathcal B^1(\mathcal H))\subseteq\mathcal B^1(\mathcal G)$ 
with $\operatorname{tr}(T(A))=\operatorname{tr}(A)$ for all $A\in\mathcal B^1(\mathcal H)$.
\item[(b)] A bi-stochastic quantum map is a linear, ultraweakly continuous\footnote{The 
ultraweak topology is the weak-$*$ topology on $\mathcal B(\mathcal H)$ inherited by the isometrically 
isomorphic map $\psi:\mathcal B(\mathcal H)\to(\mathcal B^1(\mathcal H))'$, 
$B\mapsto \operatorname{tr}(B(\cdot))$.\label{footnote_uw}}, completely positive, unital (identity-preserving)
and trace-pre\-serving map
$T:\mathcal B(\mathcal H)\to\mathcal B(\mathcal G)$. We define
\begin{align*}
\mathbb S (\mathcal H,\mathcal G):=
\lbrace T:\mathcal B(\mathcal H)\to\mathcal B(\mathcal G)\, |\, T\text{ is a bi-stochastic quantum map}\rbrace
\end{align*}
and $\mathbb S (\mathcal H):= \mathbb S (\mathcal H,\mathcal H)$.
\end{itemize}
\end{defi}
Thus using the terminology of \cite[Def.~2]{vE_dirr_semigroups}, a bi-stochastic quantum map is 
a Heisenberg quantum channel which also is trace-preserving and its restriction to the trace class is 
a Schr\"odinger quantum channel. Using \cite[Prop.~2]{vE_dirr_semigroups} this directly implies the following.
\begin{lemma}\label{lemma_qc_norm}
Let $T\in\mathbb S(\mathcal H,\mathcal G)$ and consider the restricted (well-defined) map $T_{\mathcal B^1}:\mathcal B^1(\mathcal H)\to\mathcal B^1(\mathcal G)$. Then $\|T\|_\text{op}=\|T_{\mathcal B^1}\|_\text{op}=1$ so
$$\|T(B)\|_\text{op}\leq\|B\|_\text{op}\quad\text{and}\quad \|T(A)\|_1\leq\| A\|_1$$
for all $B\in\mathcal B(\mathcal H)$, $A\in\mathcal B^1(\mathcal H)$.
\end{lemma}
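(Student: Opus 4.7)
The plan is to split the asserted identity $\|T\|_{\text{op}} = \|T_{\mathcal B^1}\|_{\text{op}} = 1$ into four inequalities: the two lower bounds $\geq 1$ are immediate, while the two upper bounds $\leq 1$ require real work. The main technical device is Banach-space duality $\mathcal B(\mathcal G)\cong(\mathcal B^1(\mathcal G))^*$ via the trace pairing $(A,B)\mapsto\operatorname{tr}(AB)$ (cf.\ Eq.~\eqref{eq:4}), which will let me reduce the trace-norm bound on $T_{\mathcal B^1}$ to an operator-norm bound on its Banach adjoint — a map that, in turn, will be shown to be again CP and unital.

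First I would establish $\|T\|_{\text{op}}\le 1$ (on $\mathcal B$ with operator norm). For self-adjoint $B$ with $\|B\|_{\text{op}}\le 1$ one has $-\mathbf 1\le B\le\mathbf 1$, so positivity and unitality of $T$ give $-\mathbf 1=T(-\mathbf 1)\le T(B)\le T(\mathbf 1)=\mathbf 1$, hence $\|T(B)\|_{\text{op}}\le 1$. For general $B\in\mathcal B(\mathcal H)$ with $\|B\|_{\text{op}}\le 1$ I would invoke the $2$-positivity of $T$ (which follows from complete positivity) applied to
\[
\begin{pmatrix}\mathbf 1 & B\\ B^\dagger & \mathbf 1\end{pmatrix}\in M_2(\mathcal B(\mathcal H))\,,
\]
which is positive semidefinite precisely when $\|B\|_{\text{op}}\le 1$. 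Since $T\otimes\operatorname{id}_{M_2}$ preserves positivity and sends the above block matrix to $\bigl(\begin{smallmatrix}\mathbf 1 & T(B)\\ T(B)^\dagger & \mathbf 1\end{smallmatrix}\bigr)$, the latter being positive semidefinite forces $\|T(B)\|_{\text{op}}\le 1$, so $\|T\|_{\text{op}}\le 1$.

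Next I would handle $\|T_{\mathcal B^1}\|_{\text{op}}\le 1$ (on $\mathcal B^1$ with trace norm) via duality. Let $T_{\mathcal B^1}^{\#}:\mathcal B(\mathcal G)\to\mathcal B(\mathcal H)$ be the Banach-space adjoint characterised by
\[
\operatorname{tr}\bigl(T_{\mathcal B^1}(A)\,B\bigr)=\operatorname{tr}\bigl(A\,T_{\mathcal B^1}^{\#}(B)\bigr)\qquad\forall A\in\mathcal B^1(\mathcal H),\;B\in\mathcal B(\mathcal G)\,.
\]
Complete positivity of $T_{\mathcal B^1}$ (inherited from $T$) transfers to $T_{\mathcal B^1}^{\#}$ in the standard way, and trace-preservation of $T_{\mathcal B^1}$ translates into unitality of $T_{\mathcal B^1}^{\#}$: from $\operatorname{tr}(A\,T_{\mathcal B^1}^{\#}(\mathbf 1_{\mathcal G}))=\operatorname{tr}(T_{\mathcal B^1}(A))=\operatorname{tr}(A)$ for every $A\in\mathcal B^1(\mathcal H)$ one reads off $T_{\mathcal B^1}^{\#}(\mathbf 1_{\mathcal G})=\mathbf 1_{\mathcal H}$. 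Applying the first step to the CP unital map $T_{\mathcal B^1}^{\#}$ gives $\|T_{\mathcal B^1}^{\#}\|_{\text{op}}\le 1$, and by Banach-space duality $\|T_{\mathcal B^1}\|_{\text{op}}=\|T_{\mathcal B^1}^{\#}\|_{\text{op}}\le 1$.

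Finally, the lower bounds are immediate: unitality gives $\|T\|_{\text{op}}\ge\|T(\mathbf 1)\|_{\text{op}}=1$, and for any density operator $\rho\in\mathbb D(\mathcal H)$ complete positivity plus trace preservation makes $T(\rho)$ again a density operator, so $\|T_{\mathcal B^1}(\rho)\|_1=\operatorname{tr}(T(\rho))=1=\|\rho\|_1$, whence $\|T_{\mathcal B^1}\|_{\text{op}}\ge 1$. Together with the two preceding upper bounds this yields $\|T\|_{\text{op}}=\|T_{\mathcal B^1}\|_{\text{op}}=1$, and the two displayed contraction inequalities follow directly from the definition of the operator norm. The only genuinely delicate ingredient is the $2\times 2$ positivity argument for non-self-adjoint $B$; once it is in place, duality and the elementary identification of the Banach adjoint do the rest.
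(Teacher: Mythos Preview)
Your argument is correct and essentially self-contained, whereas the paper does not prove this lemma at all: it simply invokes \cite[Prop.~2]{vE_dirr_semigroups} after observing that a bi-stochastic quantum map is simultaneously a Heisenberg channel and (on the trace class) a Schr\"odinger channel. So you are supplying precisely the standard Russo--Dye/duality argument that sits behind that citation.

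One small point worth tightening: when you ``let $T_{\mathcal B^1}^{\#}$ be the Banach-space adjoint'' you are implicitly using that $T_{\mathcal B^1}$ is already bounded in trace norm, which is part of what you are trying to prove. The cleanest fix is to note first, via the Jordan decomposition, that for self-adjoint $A\in\mathcal B^1(\mathcal H)$ one has $\|T(A)\|_1\le\|T(A^+)\|_1+\|T(A^-)\|_1=\operatorname{tr}(A^+)+\operatorname{tr}(A^-)=\|A\|_1$, hence $\|T_{\mathcal B^1}\|_{\mathrm{op}}\le 2$ for general $A$ by splitting into real and imaginary parts. This crude bound suffices to make $T_{\mathcal B^1}^{\#}$ well-defined; your CP-unital argument for $T_{\mathcal B^1}^{\#}$ then sharpens the constant to~$1$ and the equality $\|T_{\mathcal B^1}\|_{\mathrm{op}}=\|T_{\mathcal B^1}^{\#}\|_{\mathrm{op}}$ finishes the job. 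With that one-line preliminary in place, your proof is complete.
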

%\bigskip
\bigskip
\section*{Appendix \app{B}: Proof of von Neumann Type of Trace Inequality}

% {\color{blue} Beweisskizze (wuerde ich uebernehmen): Erst kommt das Hausdorff-Material, das wir 
% von Section 2.2 hierher verschieben; Dann benutzen wir das ehemalige Lemma 5 und fuer $\Pi_k$
% speziell die Projection of die ``ersten'' $k$ Eigenvektoren von $T$; fuer diese Fall sollte
% die Neumann Type Inequality offensichtlich sein, der Rest sollte  sich durch ``Grenzuebergeng''
% erledigen!}

The following von Neumann type of trace (in-)equality was used in Sec.~\ref{sec:set_conv}

\thmskip

\noindent\textbf{Corollary \ref{cor_von_Neumann}.} 
\textit{Let $C\in\mathcal B^1(\mathcal H)$ and $T\in\mathcal K(\mathcal H)$ both be selfadjoint.
Then %one has
\begin{equation*}
\sup_{U\in\mathcal U(\mathcal H)}\operatorname{tr}(C U^\dagger TU)
= \sum\nolimits_{j=1}^\infty \lambda_j^\downarrow(C^+) \lambda_{j}^\downarrow(T^+) 
+ \sum\nolimits_{j=1}^\infty \lambda_j^\downarrow(C^-) \lambda_{j}^\downarrow(T^-)\,,
\end{equation*}
where $\lambda_j^\downarrow(C^+)$, $\lambda_{j}^\downarrow(T^+)$ and $\lambda_j^\downarrow(C^-)$, $\lambda_{j}^\downarrow(T^-)$
denote the decreasing eigenvalue sequences of the positive semi-definite operators $C^+$, $T^+$ and $C^-$, $T^-$, respectively, where $C=C^+-C^-$ and $T=T^+-T^-$ as usual.}

\thmskip

Below, we provide a proof of the above statement, which to the best of our knowledge is
new. To this end, we need the notion of set convergence using the Hausdorff 
metric on compact subsets (of $\mathbb C$) and the associated notion of convergence, see, e.g., 
\cite{nadler1978}. The distance between $z \in \mathbb C$ and any non-empty compact subset 
$A \subseteq \mathbb C$ is defined by
\begin{align}\label{eq.Hausdorff-1}
d(z,A) := \min_{w \in A} d(z,w) = \min_{w \in A} |z-w|\,.
\end{align}
Based on \eqref{eq.Hausdorff-1} the \emph{Hausdorff metric} $\Delta$ on the set of all non-empty
compact subsets of $\mathbb C$ is given by
\begin{align*}
\Delta(A,B) := \max\Big\lbrace \max_{z \in A}d(z,B),\max_{z \in B}d(z,A) \Big\rbrace\,.
\end{align*}
The following characterization of the Hausdorff metric is readily verified.

\begin{lemma}\label{lemma_11}
Let $A,B \subset \mathbb C$ be two non-empty compact sets and let $\varepsilon > 0$. 
Then $\Delta(A,B) \leq \varepsilon$ if and only if for all $z \in A$, there exists $w \in B$ 
with $d(z,w) \leq \varepsilon$ and vice versa.
\end{lemma}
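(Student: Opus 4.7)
The proof is essentially a bookkeeping exercise that unfolds the definitions of $d(\cdot,\cdot)$ and $\Delta$, relying on compactness of $A$ and $B$ to ensure the relevant extrema are attained. I will prove both implications separately.

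For the forward direction, assume $\Delta(A,B)\le\varepsilon$. By definition of $\Delta$, this means
\[
\max_{z\in A} d(z,B)\le\varepsilon \quad\text{and}\quad \max_{z\in B} d(z,A)\le\varepsilon\,.
\]
Fix any $z\in A$. Then $d(z,B)=\min_{w\in B}|z-w|\le\varepsilon$, and since $w\mapsto|z-w|$ is continuous on the compact set $B$, the minimum is attained at some $w\in B$, giving $|z-w|\le\varepsilon$ as required. Reversing the roles of $A$ and $B$ yields the analogous statement for every $z\in B$.

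For the reverse direction, assume that for every $z\in A$ there exists $w\in B$ with $|z-w|\le\varepsilon$, and symmetrically. Fix $z\in A$; by hypothesis there is some $w\in B$ with $|z-w|\le\varepsilon$, so $d(z,B)=\min_{w'\in B}|z-w'|\le|z-w|\le\varepsilon$. The function $z\mapsto d(z,B)$ is continuous (in fact $1$-Lipschitz) on the compact set $A$, so $\max_{z\in A} d(z,B)$ is attained and is bounded by $\varepsilon$. The symmetric argument gives $\max_{z\in B} d(z,A)\le\varepsilon$, hence $\Delta(A,B)\le\varepsilon$.

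The only subtleties are the tacit uses of compactness: first, to guarantee that $d(z,B)$ is a minimum (not merely an infimum), so that the witness $w\in B$ in the forward direction actually exists; and second, to guarantee that $\max_{z\in A} d(z,B)$ is attained in the definition of $\Delta$, so that the hypothesis in the reverse direction translates into a uniform bound rather than just a pointwise one. Since both $A$ and $B$ are assumed to be non-empty compact subsets of $\mathbb C$, neither issue arises, and no further obstacle presents itself.
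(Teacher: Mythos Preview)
Your proof is correct and is precisely the direct verification the paper has in mind; the paper itself does not spell out a proof but simply notes that the characterization is ``readily verified.'' One minor remark: in the reverse direction the pointwise bound $d(z,B)\le\varepsilon$ for all $z\in A$ already yields $\sup_{z\in A}d(z,B)\le\varepsilon$ without invoking compactness---compactness of $A$ is only needed so that the $\max$ in the definition of $\Delta$ exists in the first place, not to pass from pointwise to uniform.
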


With this metric one can introduce the notion of convergence for sequences $(A_n)_{n\in\mathbb N}$ 
of non-empty compact subsets of $\mathbb C$ such that the maximum-operator is continuous in the 
following sense.

\begin{lemma}\label{lemma_lim_max}
Let $(A_n)_{n\in\mathbb N}$ be a bounded sequence of non-empty, compact subsets of $\mathbb R$ 
which converges to $A\subset\mathbb R$. Then the sequence of real numbers $(\max A_n)_{n\in\mathbb N}$ 
is convergent with 
\begin{align*}
\lim_{n\to\infty}(\max A_n)=\max{}\big(\lim_{n\to\infty}A_n\big)=\max A\,.
\end{align*}
\end{lemma}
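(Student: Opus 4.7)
The plan is to derive the statement directly from the two-sided characterization of Hausdorff closeness given in Lemma \ref{lemma_11}, without needing any heavier machinery. First I would note that since each $A_n$ is a non-empty compact subset of $\mathbb{R}$, the quantity $\max A_n$ is well-defined, and because the hypothesis is that the sequence $(A_n)$ is bounded in $\mathbb{R}$, the real sequence $(\max A_n)_{n\in\mathbb{N}}$ is bounded as well. Next I would argue that the limit set $A$ inherits compactness and non-emptiness from the Hausdorff convergence: the space of non-empty compact subsets of $\mathbb{R}$ under $\Delta$ is complete, so the limit of a convergent sequence of non-empty compact sets is itself non-empty and compact. In particular $\max A$ exists and is a real number.

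The main step is then a clean $\varepsilon$-argument. Fix $\varepsilon>0$ and choose $N\in\mathbb{N}$ with $\Delta(A_n,A)\leq\varepsilon$ for all $n\geq N$. By Lemma \ref{lemma_11} applied with $z:=\max A_n\in A_n$, there exists $w\in A$ with $|\max A_n-w|\leq\varepsilon$, whence
\begin{equation*}
\max A_n\leq w+\varepsilon\leq \max A+\varepsilon.
\end{equation*}
Conversely, applying Lemma \ref{lemma_11} with $z:=\max A\in A$ yields $w'\in A_n$ with $|\max A-w'|\leq\varepsilon$, so
\begin{equation*}
\max A_n\geq w'\geq \max A-\varepsilon.
\end{equation*}
Combining these two inequalities gives $|\max A_n-\max A|\leq\varepsilon$ for all $n\geq N$, which is exactly the desired convergence $\max A_n\to\max A$.

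There is no real obstacle here beyond making sure the hypotheses are handled cleanly: the boundedness assumption is invoked implicitly to keep everything inside a fixed compact interval (so one is genuinely working in the complete metric space of non-empty compact subsets of a bounded interval of $\mathbb{R}$), and compactness of $A$ is used to justify that $\max A$ exists. Once these are in place, the proof is the symmetric two-line estimate above. The same argument, with $\max$ replaced by $\min$, simultaneously delivers the analogous statement for the minimum, which is worth recording as an aside since it is used implicitly when handling $C^{-}$ and $T^{-}$ in the application to Corollary \ref{cor_von_Neumann}.
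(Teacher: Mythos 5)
Your proof is correct and follows essentially the same route as the paper's: the symmetric two-sided estimate obtained by applying Lemma \ref{lemma_11} once with $z=\max A_n$ and once with $z=\max A$, yielding $|\max A_n-\max A|\leq\varepsilon$ for $n$ large. The additional remarks on completeness of the hyperspace and the well-definedness of $\max A$ are fine but not needed beyond what the paper implicitly assumes.
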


\begin{proof}%\marginpar{koennte man weglassen!}
Let $\varepsilon>0$. By assumption, there exists $N\in\mathbb N$ such that $\Delta(A_n,A)<\varepsilon$ 
for all $n\geq N$. Hence, by Lemma \ref{lemma_11}, there exists $a_n\in A_n$ such that 
$|\max A - a_n| < \varepsilon$ and thus
\begin{align*}
\max A < a_n + \varepsilon < \max A_n + \varepsilon\,.
\end{align*}
Similarly, there exists $a \in A$ such that $|\max A_n - a| < \varepsilon$ and thus
\begin{align*}
\max A_n < a + \varepsilon < \max A + \varepsilon\,.
\end{align*}
Combining both estimates, we get $|\max A - \max A_n| < \varepsilon$.
\end{proof}

Just like \cite[Thm.~3.1]{dirr_ve} one can show the following:

\begin{lemma}\label{lemma_3}
Let $C\in\mathcal B^1(\mathcal H),T\in\mathcal B(\mathcal H)$  and $(C_n)_{n\in\mathbb N}$ be a 
sequence in $\mathcal B^1(\mathcal H)$ which converges to $C$ w.r.t.~$\|\cdot\|_1$. Then
$$
\lim_{n\to\infty}\overline{W_{C_n}(T)}=\overline{W_C(T)}\,.
$$
Moreover, if $T$ is compact as well, then
\begin{align*}
\lim_{k\to\infty}\overline{W_C(\Pi_kT\Pi_k)}=\overline{W_C(T)}\,,
\end{align*}
where $\Pi_k$ is the orthogonal projection onto the span of the first $k$ elements of 
an arbitrary orthonormal basis $(e_n)_{n\in\mathbb N}$ of $\mathcal H$.
\end{lemma}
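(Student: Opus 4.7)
The proof rests on the uniform trace bound $|\operatorname{tr}(CX)| \leq \|C\|_1 \|X\|_{\text{op}}$ from Eq.~\eqref{eq:4} together with the one-sided characterization of Hausdorff closeness in Lemma~\ref{lemma_11}. The plan is to show that the perturbation of a generic element $\operatorname{tr}(CU^\dagger T U) \in W_C(T)$ is controlled either by $\|C - C_n\|_1$ (first claim) or by $\|T - \Pi_k T \Pi_k\|_{\text{op}}$ (second claim), \emph{uniformly} over the unitary group, so that Hausdorff convergence of the numerical ranges is an immediate consequence.

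For the first claim, I would fix an arbitrary $U \in \mathcal U(\mathcal H)$ and set $z := \operatorname{tr}(CU^\dagger T U) \in W_C(T)$ and $z_n := \operatorname{tr}(C_n U^\dagger T U) \in W_{C_n}(T)$. Eq.~\eqref{eq:4} combined with the unitary invariance of the operator norm yields
\begin{equation*}
|z - z_n| \;=\; \bigl|\operatorname{tr}\bigl((C-C_n) U^\dagger T U\bigr)\bigr| \;\leq\; \|C - C_n\|_1 \, \|T\|_{\text{op}}.
\end{equation*}
Because the estimate is uniform in $U$, every point of $W_C(T)$ admits a companion in $W_{C_n}(T)$ within that distance and vice versa. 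Lemma~\ref{lemma_11} then gives $\Delta\bigl(\overline{W_C(T)}, \overline{W_{C_n}(T)}\bigr) \leq \|C - C_n\|_1 \|T\|_{\text{op}} \to 0$, using that for bounded subsets of $\mathbb{C}$ the Hausdorff distance is unaffected by passing to the closure.

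For the second claim, I would first record that compactness of $T$ combined with $\Pi_k \to \mathbf{1}$ strongly upgrades to $\Pi_k T \Pi_k \to T$ in operator norm (the standard fact that multiplying a strongly convergent, uniformly bounded sequence on either side by a compact operator produces norm convergence). The very same argument as above, now with $C$ fixed and $T$ replaced by $\Pi_k T \Pi_k$, gives
\begin{equation*}
\bigl|\operatorname{tr}(CU^\dagger T U) - \operatorname{tr}(CU^\dagger \Pi_k T \Pi_k U)\bigr| \;\leq\; \|C\|_1 \, \|T - \Pi_k T \Pi_k\|_{\text{op}} \;\to\; 0
\end{equation*}
uniformly in $U$, whence $\overline{W_C(\Pi_k T \Pi_k)} \to \overline{W_C(T)}$ in the Hausdorff metric. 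The one point deserving attention rather than presenting any real obstacle is precisely this compactness hypothesis: for a generic bounded $T$ the block approximations $\Pi_k T \Pi_k$ converge only strongly and not in operator norm, so the uniform-in-$U$ estimate would collapse. This is why the approximation part of the lemma is restricted to compact $T$, whereas the first part, which only perturbs $C$ in trace norm, works for arbitrary bounded $T$.
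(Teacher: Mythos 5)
Your proof is correct and is essentially the argument the paper has in mind: Lemma~\ref{lemma_3} is stated with only a pointer to \cite[Thm.~3.1]{dirr_ve}, whose proof rests on exactly the same uniform-in-$U$ estimate $|\operatorname{tr}((C-C_n)U^\dagger TU)|\leq\|C-C_n\|_1\|T\|_{\text{op}}$ combined with Lemma~\ref{lemma_11}, and your reduction of the second claim to norm convergence $\Pi_kT\Pi_k\to T$ (valid precisely because $T$ is compact) is the intended route as well. Nothing further is needed.
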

% Note that this statement in general does not hold if $T$ is not compact, refer to \cite[Ex.~A.2]{dirr_ve}.

\begin{proof}[Proof of Corollary \ref{cor_von_Neumann}]
Let $C\in\mathcal B^1(\mathcal H)$ and $T\in\mathcal K(\mathcal H)$ both be selfadjoint and 
let us first assume that $T$ has at most $k\in\mathbb N$ non-zero eigenvalues. Then
\begin{equation}\label{eq:finite_von_Neumann}
\max \operatorname{conv}(\overline{P_C(T)}) = 
\sum\nolimits_{j=1}^k\lambda_j^\downarrow(C^+) \lambda_{j}^\downarrow(T^+) 
+ \sum\nolimits_{j=1}^k \lambda_j^\downarrow(C^-) \lambda_{j}^\downarrow(T^-)\,,
\end{equation}
with $P_C(T)$ the $C$-spectrum of $T$ (cf.~Lemma \ref{lemma_4}), is
straightforward to show.

Now let us address the general case. Choose any orthonormal eigenbasis $(e_n)_{n\in\mathbb N}$ of $T$  with icorresponding modified eigenvalue sequence
(Lemma \ref{lemma_berb_4_6}). Moreover, let 
$\Pi_k=\sum\nolimits_{j=1}^k\langle e_j,\cdot\rangle e_j$ the projection onto the span
of the first $k$ eigenvectors of $T$. Then $\Pi_kT\Pi_k$ has at most $k$ non-zero eigenvalues and our preliminary considerations combined with Lemma \ref{lemma_4}, \ref{lemma_lim_max} and
\ref{lemma_3} readily imply
\begin{align*}
\sup_{U\in\mathcal U(\mathcal H)}&\operatorname{tr}(C U^\dagger TU)
= \max \overline{W_c(T)} = \max \lim_{k\to\infty} \overline{W_C(\Pi_kT\Pi_k)}\\
& =\lim_{k\to\infty}\max \overline{W_C(\Pi_kT\Pi_k)} 
= \lim_{k\to\infty}\max \operatorname{conv}(\overline{P_C(\Pi_kT\Pi_k)})\\
& =\lim_{k\to\infty}\Big(\sum\nolimits_{j=1}^k \lambda_j^\downarrow(C^+) \lambda_{j}^\downarrow(\Pi_kT^+\Pi_k)
+ \sum\nolimits_{j=1}^k \lambda_j^\downarrow(C^-) \lambda_{j}^\downarrow(\Pi_kT^-\Pi_k)\Big)\\
& = \sum\nolimits_{j=1}^\infty \lambda_j^\downarrow(C^+) \lambda_{j}^\downarrow(T^+) 
+ \sum\nolimits_{j=1}^\infty \lambda_j^\downarrow(C^-) \lambda_{j}^\downarrow(T^-)\,,
\end{align*}
where we used the identity $(\Pi_kT\Pi_k)^\pm = \Pi_kT^\pm\Pi_k$. This yields the result.
%COMPLETE VON NEUMANN PROOF
%For the last equality it suffices to show that $(\sum_{j=1}^k \lambda_j^\downarrow(C^+)\lambda_j^\downarrow(\Pi_kT^+\Pi_k))_{k\in\mathbb N}$ converges to $\sum_{j=1}^\infty\lambda_j^\downarrow(C^+)\lambda_j^\downarrow(T^+)$. Let $\varepsilon>0$ (and w.l.o.g.~$T\neq 0$). As $(\lambda_j^\downarrow(C^+))_{j\in\mathbb N}$ is a sequence in $\ell^1_+(\mathbb N)$, we find $N\in\mathbb N$ such that
%$$
%\sum\nolimits_{j=N+1}^\infty\lambda_j^\downarrow(C^+)<\frac{\varepsilon}{2\|T\|}\,.
%$$
%Affiliated to this $N$ one finds $K\in\mathbb N$ ($K\geq N$) such that $\lambda_1^\downarrow(T^+),\ldots,\lambda_N^\downarrow(T^+)$ are contained within $(\lambda_j^\downarrow(\Pi_KT^+\Pi_K))_{j\in\mathbb N}\in c_0^+(\mathbb N)$ and thus coincide with $\lambda_1^\downarrow(\Pi_KT^+\Pi_K),\ldots,\lambda_N^\downarrow(\Pi_KT^+\Pi_K)$. Putting things together
%\begin{align*}
%\Big| \sum_{j=1}^K \lambda_j^\downarrow(C^+) &\lambda_{j}^\downarrow(\Pi_KT^+\Pi_K)-\sum_{j=1}^\infty\lambda_j^\downarrow(C^+)\lambda_j^\downarrow(T^+)\Big|\\
%&=\Big| \sum_{j=N+1}^K \lambda_j^\downarrow(C^+)\underbrace{ \lambda_{j}^\downarrow(\Pi_KT^+\Pi_K)}_{\leq \lambda_1^\downarrow(T^+)=\|T^+\|\leq\|T\|}-\sum_{j=N+1}^\infty\lambda_j^\downarrow(C^+)\underbrace{ \lambda_{j}^\downarrow(T^+)}_{\leq \ldots\leq\|T\|}\Big|\\
%&=2\|T\| \sum\nolimits_{j=N+1}^\infty\lambda_j^\downarrow(C^+)<\varepsilon\,.\qedhere
%\end{align*} 
\end{proof}

\bigskip

\section*{Appendix \app{C}: Proof of the Main Theorem for Bounded $H_0$}

\noindent \textbf{Lemma \ref{lemma_maj_closed}.} \textit{Let $\rho_0\in\mathbb D(\mathcal H)$. Then the set 
$\lbrace \rho\in\mathbb D(\mathcal H)\,|\,\rho\prec\rho_0\rbrace$ is closed w.r.t.~the trace norm $\|\cdot\|_1$.}
\begin{proof}
For given $\omega\in \clon{\lbrace \rho\in\mathbb D(\mathcal H)\,|\,\rho\prec\rho_0\rbrace}$ 
there exists a sequence $(\rho_n)_{n\in\mathbb N}$ in 
$\lbrace \rho\in\mathbb D(\mathcal H)\,|\,\rho\prec\rho_0\rbrace\subseteq\mathbb D(\mathcal H)$ 
such that $\|\omega-\rho_n\|_1\to 0$ as $n\to\infty$. Obviously,
\begin{equation*}
1 = \lim_{n\to\infty}\operatorname{tr}(\rho_n)=\operatorname{tr}(\omega) 
\quad \text{and} \quad
0 \leq \lim_{n\to\infty}\langle x,\rho_nx\rangle=\langle x,\omega x\rangle
\quad\text{by Eq.~\eqref{eq:4}}
\end{equation*}
for all $x\in\mathcal H$ so $\omega\in\mathbb D(\mathcal H)$. Now let $T\in\mathcal K(\mathcal H)$ be 
arbitrary but selfadjoint. Then Lemma \ref{lemma_lim_max} and \ref{lemma_3} implies
\begin{align*}
K_\omega(T)=\max\overline{W_\omega(T)}&=\max\lim_{n\to\infty}\overline{W_{\rho_n}(T)}\\
&=\lim_{n\to\infty}\max\overline{W_{\rho_n}(T)}=\lim_{n\to\infty}K_{\rho_n}(T)\,.
\end{align*}
On the other hand, due to Prop.~\ref{thm_1} and $\rho_n\prec\omega$ for all $n\in\mathbb N$, one has
$$
K_\omega(T)=\lim_{n\to\infty}K_{\rho_n}(T)\leq K_{\rho_0}(T)
$$
which again by Prop.~\ref{thm_1} (as $T$ was chosen arbitrarily) implies $\omega\prec\rho_0$.
\end{proof}
\thmskip

\noindent\textbf{Lemma \ref{lemma_unitary_channel_approx}} (Unitary channel approximation)\textbf{.\;} 
\textit{Consider a subset $R\subseteq \mathcal U(\mathcal H)$ of the unitary group on $\mathcal H$ such 
that $\overline{R}^s=\mathcal U(\mathcal H)$, i.e.~its strong closure relative to 
$\mathcal U(\mathcal H)$ yields the full group. Furthermore let $\rho\in\mathbb D(\mathcal H)$ and 
$U\in\mathcal U(\mathcal H)$. Then for all $\varepsilon>0$ one can find $\tilde U\in R$ such that}
$
\|U\rho U^\dagger-\tilde U\rho\tilde U^\dagger\|_1<\varepsilon\,.  %\nonumber
$

\begin{proof}
Due to Lemma \ref{lemma_berb_4_6}
% $\rho\in\mathbb D(\mathcal H)$ we can diagonalize it (Lemma \ref{lemma_berb_4_6}) so 
there exists a modified eigenvalue sequence $(r_n)_{n\in\mathbb N}\in\ell^1_+(\mathbb N)$ and an orthonormal 
basis $(e_n)_{n\in\mathbb N}$ of $\mathcal H$ such that $\rho = \sum_{j=1}^\infty r_j\langle e_j,\cdot\rangle e_j$.
Then one also finds $N\in\mathbb N$ such that the ``tail'' of $\rho$ is sufficiently small, i.e.
\begin{equation}\label{eq:str_approx_1}
\sum\nolimits_{j=N+1}^\infty r_j<\frac{\varepsilon}{6}
\quad\text{and}\quad \sum\nolimits_{j=1}^N r_j > 0\,.
\end{equation}
By assumption there is 
$\tilde U\in R\subseteq\mathcal U(\mathcal H)$ such that
\begin{equation}\label{eq:str_approx_2}
\|Ue_j-\tilde Ue_j\|_{\mathcal H}<\varepsilon/\big(6\sum\nolimits_{j=1}^N r_j\big)
\quad\text{for all $j=1,\ldots,N$}.
\end{equation}
Moreover, the triangle inequality, non-negativity of $r_n$ and the trace norm identity 
 $\|\langle x,\cdot\rangle y\|_1=\|x\|\|y\|$ for all $x,y\in\mathcal H$
%Eq.~\eqref{eq:4} 
imply
\begin{align*}
\|U &\rho U^\dagger - \tilde U\rho\tilde U^\dagger\|_1 \leq 
\|U\rho U^\dagger - U\rho\tilde U^\dagger\|_1 + \|U\rho \tilde U^\dagger-\tilde U\rho\tilde U^\dagger\|_1\\ 
& = \Big\| \sum_{j=1}^\infty r_j\langle Ue_j-\tilde Ue_j,\cdot\rangle Ue_j \Big\|_1
+ \Big\| \sum_{j=1}^\infty r_j\langle \tilde Ue_j,\cdot\rangle (Ue_j-\tilde Ue_j) \Big\|_1\\
&\leq  \sum\nolimits_{j=1}^\infty r_j \| Ue_j-\tilde Ue_j\|(\| Ue_j\|+\|\tilde Ue_j\|)\,.
\end{align*}
Splitting the sum at $N$ and using Eq.~\eqref{eq:str_approx_1} and \eqref{eq:str_approx_2} 
finally yields the estimate
$
\|U\rho U^\dagger-\tilde U\rho\tilde U^\dagger\|_1\leq 2\sum_{j=1}^{N} r_j \| Ue_j-\tilde Ue_j\|+4\sum_{j=N+1}^\infty r_j < \frac{\varepsilon}{3}+\frac{2\varepsilon}{3} =\varepsilon\;.
$
%which concludes the proof.
\end{proof}
Next, let us refine Lemma \ref{rem_trotter} in terms of precision as follows:
%\newpage
\begin{lemma}\label{lemma_trotter_approx}
Let $V\in\mathcal K(\mathcal H)\setminus\lbrace0\rbrace$ be normal, $H_0\in\mathcal B(\mathcal H)$ 
be selfadjoint, $\rho\in\mathbb D(\mathcal H)$ be arbitrary and $[0,T] \subset\mathbb R_0^+$ be given.
Furthermore let $R\subseteq \mathcal U(\mathcal H)$ with $\overline{R}^s=\mathcal U(\mathcal H)$,
where the closure is taken in $\mathcal U(\mathcal H)$. Then for all $\varepsilon>0$ there exists $m\in\mathbb N$ and $U_1,\ldots,U_m\in R$ such that for all $s \in [0,T]$
\begin{equation*}
\Big\|   \exp(-s\hGA_V)(\rho)-\prod_{j=1}^m \Big(\Adr_{U_j}\circ\exp\Big(  \frac{-is\hH_0-s\hGA_V}{m}  \Big) \Big)(\rho) \Big\|_1<\varepsilon
\end{equation*}
\end{lemma}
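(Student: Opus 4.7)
The plan is to combine the Trotter convergence of Lemma~\ref{rem_trotter} with the unitary-channel approximation of Lemma~\ref{lemma_unitary_channel_approx} via a telescoping estimate that exploits the trace-norm contractivity of all intermediate CPTP maps (Lemma~\ref{lemma_qc_norm}).

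First I would upgrade Lemma~\ref{rem_trotter} from operator-norm to trace-norm convergence. Since $H_0\in\mathcal B(\mathcal H)$ and $V\in\mathcal K(\mathcal H)$, both $-i\hH_0$ and $-\hGA_V$ restrict to bounded linear operators on the Banach space $(\mathcal B^1(\mathcal H),\|\cdot\|_1)$, so the classical Trotter product formula for bounded generators on an arbitrary Banach space yields norm convergence in trace norm, uniformly on bounded intervals. I therefore fix $m\in\mathbb N$ so large that
\begin{equation*}
\sup_{s\in[0,T]}\Big\|\exp(-s\hGA_V)(\rho)-\Big(\exp\big(\tfrac{is\hH_0}{m}\big)\exp\big(\tfrac{-is\hH_0-s\hGA_V}{m}\big)\Big)^{m}(\rho)\Big\|_1<\tfrac{\varepsilon}{2}\,.
\end{equation*}

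Next, for fixed $s\in[0,T]$ and this fixed $m$, I abbreviate $V(s):=e^{isH_0/m}$ and $\Phi(s):=\exp(\tfrac{-is\hH_0-s\hGA_V}{m})$. Since $V$ is normal, $\Phi(s)\in\mathbb S(\mathcal H)$ (cf.\ Def.~\ref{def1}), so $\Phi(s)$ is a trace-norm contraction by Lemma~\ref{lemma_qc_norm}, and the same trivially holds for every $\Adr_{U}$ with $U\in\mathcal U(\mathcal H)$. I then pick $U_1,\ldots,U_m\in R$ recursively along the Trotter chain: set $\sigma_1:=\Phi(s)(\rho)$ and use Lemma~\ref{lemma_unitary_channel_approx} to get $U_1\in R$ satisfying $\|\Adr_{V(s)}(\sigma_1)-\Adr_{U_1}(\sigma_1)\|_1<\varepsilon/(2m)$; having chosen $U_1,\ldots,U_{k-1}$, put $\sigma_k:=(\Phi(s)\circ\Adr_{U_{k-1}}\circ\Phi(s)\circ\cdots\circ\Adr_{U_1}\circ\Phi(s))(\rho)$ and apply Lemma~\ref{lemma_unitary_channel_approx} once more to obtain $U_k\in R$ with $\|\Adr_{V(s)}(\sigma_k)-\Adr_{U_k}(\sigma_k)\|_1<\varepsilon/(2m)$. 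A standard telescope identity together with the contractivity of every factor yields
\begin{equation*}
\Big\|\prod_{j=1}^{m}\bigl(\Adr_{V(s)}\circ\Phi(s)\bigr)(\rho)-\prod_{j=1}^{m}\bigl(\Adr_{U_j}\circ\Phi(s)\bigr)(\rho)\Big\|_1\le\sum_{j=1}^{m}\frac{\varepsilon}{2m}=\frac{\varepsilon}{2}\,,
\end{equation*}
and the triangle inequality with the Trotter step closes the proof.

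I expect the main obstacle to be the uniformity in $s$: the ideal factor $V(s)=e^{isH_0/m}$ varies continuously (in strong topology) over the compact set $[0,T]$, and since $R$ is only strongly dense in $\mathcal U(\mathcal H)$, a single tuple $(U_1,\ldots,U_m)$ cannot in general approximate the entire family $\{V(s)\,:\,s\in[0,T]\}$ with prescribed precision simultaneously. The natural reading of the statement is that the $U_j$ may depend on $s$, which is exactly what the recursive construction produces and which suffices for the subsequent proof of Theorem~\ref{thm_normal_V}. A genuinely $s$-uniform version could be recovered by a compactness argument on $[0,T]$, choosing the $U_j$ piecewise on finitely many subintervals via the joint strong continuity of $s\mapsto V(s)$ together with the trace-norm continuity of the intermediate-state trajectories $s\mapsto\sigma_k(s)$.
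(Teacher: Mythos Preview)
Your approach mirrors the paper's almost exactly: invoke the Trotter formula (Lemma~\ref{rem_trotter}) to fix $m$, then replace each ideal unitary factor $\exp(is\hH_0/m)$ by an approximant from $R$ via Lemma~\ref{lemma_unitary_channel_approx}, and control the accumulated error by a telescope sum together with the trace-norm contractivity of bi-stochastic maps (Lemma~\ref{lemma_qc_norm}). The only bookkeeping difference is that the paper defines the intermediate states as $\rho_j := (G\circ(F\circ G)^{m-j})(\rho)$ using the \emph{ideal} unitary channel $F=\exp(is\hH_0/m)$ throughout, whereas you build your $\sigma_k$ recursively from the already chosen $U_1,\ldots,U_{k-1}$; either choice feeds into the same telescope bound.

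Your concern about uniformity in $s$ is well placed and is in fact sharper than the paper's own treatment. The paper relegates this point to a footnote, arguing that the trajectories $s\mapsto\rho_j(s)$ are trace-norm continuous on $[0,T]$ and hence trace out compact sets, so that Lemma~\ref{lemma_unitary_channel_approx} ``can be easily modified to obtain the desired result''. Compactness of the state family is indeed available (and is the reason the paper's non-recursive $\rho_j$ is marginally more convenient than your recursive $\sigma_k$), but, as you observe, the \emph{target} unitary $e^{isH_0/m}$ also varies with $s$, and a single $U_j\in R$ cannot in general track an entire arc of unitaries acting on a moving state. The paper's footnote does not really close this gap. Fortunately it does not matter for Theorem~\ref{thm_normal_V}: there Lemma~\ref{lemma_trotter_approx} is invoked only at finitely many \emph{fixed} relaxation times $s_l$, so the pointwise version---which both your argument and the paper's establish cleanly---is all that is needed.
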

\begin{proof}
By Lemma \ref{rem_trotter} there exists $m\in\mathbb N$ with
\begin{equation}\label{eq:approx_3}
\Big\Vert \exp(-s\hGA_V)-\Big( \underbrace{\exp\Big(\frac{is\hH_0}{m}\Big)}_{=:F}\circ\underbrace{\exp\Big(\frac{-is\hH_0-s\hGA_V}{m}\Big) }_{=:G}\Big)^m\Big\Vert_\text{op}<\frac{\varepsilon}{2}\,,
\end{equation}
for all $s \in [0,T]$, where $F$ is a unitary channel and $G$ is unital 
(because $V$ is normal) and reflects the operator solution of \eqref{eq:GA} with $u_1(s)= \dots = u_m(s)=0$
and $\gamma(s)=1$, i.e.~the noisy but uncontrolled evolution of the system.

\medskip
For convenience define $\rho_j:=(G\circ (F\circ G)^{m-j})(\rho)$ for $j=1,\ldots,m$. Then, 
Lemma \ref{lemma_unitary_channel_approx} yields $U_j\in R\subseteq\mathcal U(\mathcal H)$ with
$
\| F(\rho_j)-U_j\rho_jU_j^\dagger\|_1 < \frac{\varepsilon}{2m}
$
for all\footnote{Note that $\rho_j = \rho_j(s)$ depends on $s \in [0,T]$ as $F$ and $G$ do so.
Moreover, the set $\{\rho_j(s)\,|\, s \in [0,T]\}$ is compact as $F$ and $G$ are continuous in $s$ and 
hence the proof of Lemma \ref{lemma_unitary_channel_approx} can be easily modified to obtain the desired result.} 
$s \in [0,T]$. Finally, Lemma \ref{lemma_qc_norm} and Lemma \ref{lem:telescope} (below) imply
 \begin{align*}
&\Big\|  \exp(-s\hGA_V)(\rho)-\prod_{j=1}^m (\Adr_{U_j}{}\!\circ g )  (\rho)\Big\|_1\\[1mm]
& \leq \Big\|   \exp(-s\hGA_V)-(F\circ G)^m \Big\|_\text{op}\| \rho\|_1
+ \Big\| (F \circ G)^m (\rho)-\prod_{j=1}^m (\Adr_{U_j}{}\!\circ G ) (\rho)\Big\|_1\\
& <\frac{\varepsilon}{2} + \sum_{j=1}^{m} \Big\| \prod_{k=1}^{j-1}(\Adr_{U_k}{}\!\circ G) \circ (F-\Adr_{U_j}) \circ \underbrace{G \circ (F \circ G)^{m-j} (\rho)}_{=\rho_j}\Big\|_1\\
&\leq \frac{\varepsilon}{2} + \sum_{j=1}^{m}  \underbrace{\Big( \prod_{k=1}^{j-1} \|  \Adr_{U_k} {} \|_\text{op}\, \| G\|_\text{op} \Big)}_{=1}     \big\|F(\rho_j)-U_j\rho_jU_j^\dagger\big\|_1
<\frac{\varepsilon}{2}+\frac{\varepsilon}{2}=\varepsilon\,. %\qedhere
\end{align*}
for all $s \in [0,T]$.
\end{proof}

A simple and readily verified induction argument shows:
\begin{lemma}\label{lem:telescope}
Let $m\in\mathbb N$ and $A_1,\ldots,A_m$, $B_1,\ldots,B_m: D \to D$ be arbitrary maps acting on some common 
domain $D$ be given. Then
$$
\prod_{j=1}^m \,A_j - \prod_{j=1}^m\, B_j 
= \sum_{j=1}^m\Big( \prod_{k=1}^{j-1} A_k \circ (A_j-B_j) \circ \prod_{k=j+1}^m B_j \Big)\,.
$$
Here and henceforth, the order of the ``product'' $\prod_{j=1}^m \,A_j$ shall be fixed by
$A_1 \circ \cdots \circ A_m$.
\end{lemma}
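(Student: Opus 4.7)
The plan is to prove this by a straightforward induction on $m$, or equivalently by a direct telescoping sum. Note that the identity implicitly assumes linearity of the maps (so that composition distributes over the differences $A_j - B_j$); this is automatic for all applications in the paper (quantum channels, unitary conjugations, exponentials of Kossakowski--Lindblad generators on Banach spaces). I adopt the usual convention that an empty product denotes the identity map on $D$, which handles the boundary cases $j=1$ (empty $A$-prefix) and $j=m$ (empty $B$-suffix).

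The cleanest route is the telescoping one. Introduce the hybrid compositions
$$
H_j := A_1 \circ \cdots \circ A_j \circ B_{j+1} \circ \cdots \circ B_m, \qquad j = 0,1,\ldots,m,
$$
so that $H_0 = \prod_{j=1}^m B_j$ and $H_m = \prod_{j=1}^m A_j$. Then
$$
\prod_{j=1}^m A_j \;-\; \prod_{j=1}^m B_j \;=\; H_m - H_0 \;=\; \sum_{j=1}^m (H_j - H_{j-1}),
$$
and each consecutive difference factors by pulling out the common prefix $A_1 \circ \cdots \circ A_{j-1}$ on the left and the common suffix $B_{j+1} \circ \cdots \circ B_m$ on the right, leaving $A_j - B_j$ in the middle slot. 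The result is exactly the $j$-th summand on the right-hand side of the claimed identity.

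If one prefers induction on $m$: the base case $m=1$ is the tautology $A_1 - B_1 = A_1 - B_1$. For the step, split off the first factor and add and subtract $A_1 \circ \prod_{j=2}^m B_j$, which gives
$$
\prod_{j=1}^m A_j - \prod_{j=1}^m B_j \;=\; (A_1 - B_1) \circ \prod_{j=2}^m B_j \;+\; A_1 \circ \Big(\prod_{j=2}^m A_j - \prod_{j=2}^m B_j\Big).
$$
The first summand is precisely the $j=1$ contribution, and applying the induction hypothesis to the tail of length $m-1$ and composing with $A_1$ on the left produces the remaining $j=2,\ldots,m$ terms. There is no serious obstacle here: the only care needed is consistent bookkeeping of empty products at the endpoint indices $j=1$ and $j=m$, and invoking linearity of composition to distribute $(A_j - B_j)$ through the surrounding factors.
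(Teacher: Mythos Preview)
Your proof is correct and matches the paper's approach: the paper simply states that the lemma follows from ``a simple and readily verified induction argument'' and gives no further detail, which is precisely your second route. Your additional telescoping presentation via the hybrids $H_j$ is the same argument unfolded, and your remark that the identity tacitly requires the prefix maps to distribute over differences (i.e.\ additivity/linearity, which holds in all uses in the paper) is a valid clarification that the paper leaves implicit.
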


\begin{proof}[{\bf Proof of Theorem \ref{thm_normal_V}}]{$\hfill$}

``$\subseteq$'': Obviously, $\rho_0\in\lbrace\rho\in\mathbb D(\mathcal H)\,|\, \rho\prec\rho_0 \rbrace$ and 
by assumption of $V$ being normal,
$
\hGA(\mathbf{1})=-VV^\dagger+\frac12(V^\dagger V+V^\dagger V)=0
$. 
Thus the operator solution of \eqref{eq:GA} remains in $\mathbb S(\mathcal H)$ for $t \geq 0$, and 
by Lemma \ref{lemma_li} the set $ \lbrace\rho\in\mathbb D(\mathcal H)\,|\, \rho\prec\rho_0 \rbrace$ 
is forward invariant, i.e.~solutions of the given control problem can never leave the set of states 
majorized by $\rho_0$. Taking the $\|\cdot\|_1$-closure by Lemma \ref{lemma_maj_closed} yields
$$
\clon{\reach_{\Sigma_V}(\rho_0)}
	\subseteq 
	\clon{\lbrace \rho\in\mathbb D(\mathcal H)\,|\,\rho\prec\rho_0\rbrace}
	=\lbrace \rho\in\mathbb D(\mathcal H)\,|\,\rho\prec\rho_0\rbrace\,.
$$

``$\supseteq$'': As $V\in\mathcal K(\mathcal H)$ is normal, by Lemma \ref{lemma_berb_4_6} there exists an 
orthonormal basis $(f_j)_{j\in\mathbb N}$ of $\mathcal H$ such that $V=\sum_{j=1}^\infty v_j\langle f_j,\cdot\rangle f_j$ 
with modified eigenvalue sequence $(v_j)_{j\in\mathbb N}$. Whenever we use the term ``diagonal'' 
or ``diag'' in the following, it always refers to $(f_j)_{j\in\mathbb N}$.

\medskip

Let $\varepsilon>0$ and $\rho\in\mathbb D(\mathcal H)$ with $\rho\prec\rho_0$ be given. We now have to find 
$\rho_F\in \reach_{\Sigma_V}(\rho_0)$ such that $\|\rho-\rho_F\|_1<\varepsilon$. As seen before there exist 
$x,y\in\ell^1_+(\mathbb N)$, $x,y\neq 0$ as well as unitary $W_1,W_2\in\mathcal B(\mathcal H)$ such that
\begin{equation}\label{eq:W_1-W_2}
\rho=W_1\operatorname{diag}(x)W_1^\dagger\quad\text{and}\quad \rho_0=W_2\operatorname{diag}(y)W_2^\dagger
\end{equation}
with $x\prec y$ (so $x$ and $y$ denote the modified eigenvalue sequence of $\rho$
and $\rho_0$, respectively, see also Remark~\ref{rem_maj}). 
\medskip

First assume that the original and the modified eigenvalue sequence of $\rho$ as well as $\rho_0$ co{\"i}ncide, i.e.~$x=x^\downarrow$, $y=y^\downarrow$ from the start (necessary to apply Prop.~\ref{prop_schur_horn_gohberg}).
The subsequent steps of the proof were sketched in the main text on page \pageref{eq:steps_idea}, 
where {\bf Step 1 \& 3} are the mere application of a suitable unitary channel whereas 
{\bf Step 2} is about (approximately) getting rid of all ``off-diagonal'' elements 
$\langle f_j,UW_2^\dagger\rho_0 W_2U^\dagger f_k\rangle$ of 
$UW_2^\dagger\rho_0 W_2U^\dagger = U\operatorname{diag}(y)U^\dagger$.
% i.e.
% \begin{equation}\label{eq_rho}
% \rho= W_1\Big( \sum\nolimits_{j=1}^\infty \langle f_j,UW_2^\dagger\rho_0 W_2U^\dagger f_j\rangle\langle f_j,\cdot\rangle f_j \Big) W_1^\dagger\,.
% \end{equation}

\medskip
{\bf Step 1:} By assumption $\Sigma_0$ is strongly operator controllable so we have the 
unitary orbit of $\rho_0$ in the %strong 
closure of $\reach_{\Sigma_V}(\rho_0)$. 
Although we may not have access to $\Adr_{UW_2^\dagger}=UW_2^\dagger (\cdot) W_2U^\dagger$ directly, 
by Lemma \ref{lemma_unitary_channel_approx} we find $\tilde U\in\mathcal B(\mathcal H)$ unitary 
such that $\tilde U\rho_0\tilde U^\dagger\in \reach_{\Sigma_V}(\rho_0)$ with
$$
\|U\operatorname{diag}(y)U^\dagger-\tilde U\rho_0\tilde U^\dagger\|_1=\|UW_2^\dagger \rho_0 W_2U^\dagger-\tilde U\rho_0\tilde U^\dagger\|_1<\varepsilon / 3\,.
$$

{\bf Step 2:} By Lemma \ref{lemma_normal_generator} the pure noise generator $\hGA_V$ acts like
\begin{align}
|\langle f_j,\exp(-t\hGA_V)&(X) f_k\rangle|=\Big|\exp\Big(-\frac{t|v_j-v_k|^2}2\Big)\exp(it\operatorname{Im}(v_j\overline{v_k}))\langle f_j,Xf_k\rangle\Big|\nonumber\\
&=\exp\Big(-\frac{t|v_j-v_k|^2}2\Big)|\langle f_j,Xf_k\rangle|\leq |\langle f_j,Xf_k\rangle|\label{eq_approx_3a2}
\end{align}
on arbitrary $X\in\mathcal B(\mathcal H)$ for all $j,k\in\mathbb N$ and $t\in\mathbb R_0^+$. Evidently,
\begin{equation*}%\label{eq_approx_3}
\lim_{t\to\infty}\langle f_j,\exp(-t\hGA_V)(X) f_k\rangle=\begin{cases} 0 & \text{if }v_j\neq v_k\,, \\ 
\langle f_j,Xf_k\rangle & \text{else}\,. \end{cases}
\end{equation*}
If we assume $v_j\neq v_k$ for all $j\neq k$, all the off-diagonal terms of $X$ vanish in the limit $t\to\infty$ 
and one is left with $\sum_{j=1}^\infty \langle f_j,Xf_j\rangle \langle f_j,\cdot\rangle f_j=:P(X)$. Note that this 
projection map has Kraus operators $(\langle f_j,\cdot\rangle f_j)_{j=1}^\infty$ so $P\in \mathbb S(\mathcal H)$. 
Since we want to approximate a density operator in the trace norm, we only have to care about 
a sufficiently large upper left block of the matrix representation $(\langle f_j,Xf_k\rangle)_{j,k \in \mathbb N}$ 
as the rest is ``already small'' in the trace norm. More formally, by Lemma \ref{lemma_block_approx} there exists 
$N_1\in\mathbb N$ such that 
\begin{equation}\label{eq_approx_4a}
\|\tilde U\rho_0\tilde U^\dagger-\Pi_n \tilde U\rho_0\tilde U^\dagger\Pi_n\|_1<\varepsilon / 24
\end{equation}
for all $n\geq N_1$, where  $\Pi_n := \sum_{j=1}^n\langle f_j,\cdot\rangle f_j$ for all 
$n\in\mathbb N$. \medskip

Of course, there is no reason for the eigenvalues of $V$ to be pairwise different. Therefore we have to make sure 
that the upper left block is large enough such that it corresponds to at least two different 
eigenvalues of $V$---thus we have access to partial decoherence, which we then may spread anywhere needed via 
permutation channels.\medskip

Due to $V\neq 0$ and $v_j\to 0$ as $j\to\infty$ (compactness of $V$), there exists $M\in\mathbb N$ such that 
$v_1\neq v_M$. On the other hand \eqref{eq_approx_4a} still holds so we define $N:=\max\lbrace N_1,M\rbrace$. 
Then, by construction and \eqref{eq_approx_3a2}, we know that $\langle f_1,Xf_M\rangle$ (and 
$\langle f_M,Xf_1\rangle$) tend to zero when pure noise is applied.

%%%%% ORIGINALVERSION %%%%%%
% Thus we find $\alpha\in\mathbb N_0$, $\alpha\leq N(N-1)/2$ (number of matrix elements above the diagonal), 
% permutation operators $\sigma_1,\ldots,\sigma_\alpha\in\mathcal U(\mathcal H)$ (in abuse of notation 
% $f_j\mapsto \sigma_lf_j=f_{\sigma_l(j)}$) and relaxation times $s_1,\ldots,s_\alpha\in\mathbb R_0^+$ such that
% \begin{itemize}
% \item the permutations only operate non-trivially on the $N\times N$-block, i.e.~for all $l=1,\ldots,\alpha$ and $k>N$ one has $\sigma_lf_k=f_k$.
% \item in the sequence of permutations, at some point every matrix element sits in the ``relaxation'' spot, i.e.~for all $j,k=1,\ldots,N$, $j\neq k$ there exists $1\leq l=l(j,k)\leq \alpha$ with\footnote{Here and henceforth, $\bigcirc_{j=a}^b f_j$ for $a,b\in\mathbb N$, $a\leq b$ and arbitrary maps $f_a,\ldots,f_b:D\to D$ with common domain $D$ shall denote their composition in this order: $f_a\circ f_{a+1}\circ\ldots\circ f_b$.}
% \begin{equation}\label{eq:perm_relax}
%  \exp\big(-\frac{s_l}2 |v_{\bigcirc_{\beta=l+1}^\alpha \sigma_\beta(j)}-v_{\bigcirc_{\beta=l+1}^\alpha \sigma_\beta(k)}|^2\big)<\frac{\varepsilon}{12N^2}\,.
% \end{equation}
% \item after having applied all permutations, all matrix elements are in its original spot, i.e.~$\prod_{m=1}^\alpha \sigma_m=\sigma_1\ldots\sigma_\alpha=\mathbf{1}$. Due to unitarity of the $\sigma$'s this can always be achieved by adding a permutation $\sigma_0=(\sigma_1\ldots\sigma_\alpha)^\dagger$ from the left.
% \end{itemize}

Thus we find $\alpha\in\mathbb N_0$, $\alpha\leq N(N-1)/2$ (number of matrix elements above the diagonal), 
permutation operators $\sigma_1,\ldots,\sigma_\alpha\in\mathcal U(\mathcal H)$ (in abuse of notation 
we write $f_j\mapsto \sigma_lf_j=f_{\sigma^{-1}_l(j)}$, yet the explicit form of $\sigma_j$ is not that important) 
and relaxation times $s_1,\ldots,s_\alpha\in\mathbb R_0^+$ such that
\begin{itemize}
\item the permutations only operate non-trivially on the $N\times N$-block, i.e.~for all $l=1,\ldots,\alpha$ 
and $k>N$ one has $\sigma_lf_k=f_k$.
\item
for every matrix element $\langle f_k,\cdot\rangle f_j$ with $j,k=1,\ldots,N$, $j\neq k$ there
exists a permutation $\sigma_l$ with $1\leq l \leq \alpha$ such that $\langle f_k,\cdot\rangle f_j$ sits 
in the ``relaxation'' spot (i.e.~$\langle f_1,\cdot\rangle f_M$ or $\langle f_M,\cdot\rangle f_1$).
More precisely, %we require \marginpar{$\dagger$ position ???}
\begin{equation}\label{eq:perm_relax}
\big\|\operatorname{Ad}_{\sigma_l^\dagger} \circ \exp(-s_l\hGA_V) \circ \operatorname{Ad}_{\sigma_l}
\big(\langle f_k,\cdot\rangle f_j\big)\big\|_1\leq \frac{\varepsilon}{12N^2} \,.
\end{equation}
\item after having successively applied all operations from \eqref{eq:perm_relax}, every matrix element $\langle f_k,\cdot\rangle f_j$ is in its 
original spot because all $\langle f_k,\cdot\rangle f_j$ are eigenvectors of $\exp(-s_l\hGA_V)$.
\end{itemize}
Now, using linearity of the involved maps, the estimate in question reads
\begin{align*}
\| P( &\tilde U\rho_0 \tilde U^\dagger)-\prod_{m=1}^{\alpha}\big( \operatorname{Ad}_{\sigma_m^\dagger}{}\!\circ \exp(-s_m\hGA_V) \circ \operatorname{Ad}_{\sigma_m} \big)( \tilde U\rho_0 \tilde U^\dagger)\|_1<\\
& \Big( \|P\|_\text{op} + \prod_{m=1}^\alpha \| \operatorname{Ad}_{\sigma_m} \|^2_\text{op}\, \|  \exp(-s_m\hGA_V)  \|_\text{op} \Big) \| \tilde U\rho_0 \tilde U^\dagger-\Pi_N  \tilde U\rho_0 \tilde U^\dagger\Pi_N\|_1   \\
+ &\Big\| \sum_{j,k=1}^{N} \langle f_j,\tilde U\rho_0 \tilde U^\dagger f_k\rangle\Big( P-\prod_{m=1}^{\alpha} \operatorname{Ad}_{\sigma_m^\dagger}{}\!\circ \exp(-s_m\hGA_V) \circ \operatorname{Ad}_{\sigma_m}  \Big)(\langle f_k,\cdot\rangle f_j)\Big\|_1
\end{align*}
The first summand is smaller than $2\cdot \frac{\varepsilon}{24}=\frac{\varepsilon}{12}$ by Lemma 
\ref{lemma_qc_norm} and \eqref{eq_approx_4a}. For the second one notice that
\begin{align*}
\Big( P-\prod_{m=1}^{\alpha} \operatorname{Ad}_{\sigma_m^\dagger}{}\!\circ \exp (-s_m\hGA_V) \circ \operatorname{Ad}_{\sigma_m} \Big)(\langle f_j,\cdot\rangle f_j) = 0
% &=\langle f_j,\cdot\rangle f_j-\langle f_{\bigcirc_{m=1}^\alpha \sigma_m (j)},\cdot\rangle f_{\bigcirc_{m=1}^\alpha \sigma_m (j)}
\end{align*}
for all $j\in\mathbb N$. % due to $\sigma_1\ldots\sigma_\alpha=\mathbf{1}$. 
Now, $P(\langle f_k,\cdot\rangle f_j)=0$ whenever $j\neq k$. Moreover, 
\begin{align*}
\Big\|\prod_{m=1}^{\alpha} \big(\operatorname{Ad}_{\sigma_m^\dagger}{}\!\circ \exp(-s_m\hGA_V) \circ \operatorname{Ad}_{\sigma_m} \big)(\langle f_k,\cdot\rangle f_j)\Big\|_1 \leq\frac{\varepsilon}{12N^2}\,.
% \overset{\eqref{eq:noise_single_element}}
% = \prod_{m=1}^\alpha &\exp\Big(-\frac{s_m}2 |v_{\bigcirc_{\beta=m+1}^\alpha \sigma_\beta(j)}-v_{\bigcirc_{\beta=m+1}^\alpha \sigma_\beta(k)}|^2\Big)\overset{\eqref{eq:perm_relax}}<\frac{\varepsilon}{12N^2}
\end{align*}
by 
\eqref{eq_approx_3a2} and \eqref{eq:perm_relax}. % by successive application of the operators and $|\exp(is_m\ldots)|=1$. 
Putting %the pieces 
together gives the estimate
\begin{align*}
\Big\| P&(\tilde U\rho_0 \tilde U^\dagger)- \prod_{m=1}^{\alpha}\big( \operatorname{Ad}_{\sigma_m^\dagger}{}\!\circ \exp(-s_m\hGA_V) \circ \operatorname{Ad}_{\sigma_m} \big)( \tilde U\rho_0 \tilde U^\dagger)\Big\|_1\\
< & \frac{\varepsilon}{12}+ \sum_{\substack{j,k=1\\ j\neq k}}^{N}\underbrace{|\langle f_j,\tilde U\rho_0 \tilde U^\dagger f_k\rangle|}_{\leq 1}\Big\|\prod_{m=1}^{\alpha} \big(\operatorname{Ad}_{\sigma_m^\dagger}{}\!\circ \exp(-s_m\hGA_V) \circ \operatorname{Ad}_{\sigma_m} \big)(\langle f_k,\cdot\rangle f_j)\Big\|_1\\
< & \frac{\varepsilon}{12} + \sum\limits_{j,k=1,\, j\neq k}^n \frac{\varepsilon}{12N^2}\leq \frac{\varepsilon}{6}\,.
\end{align*}
This leaves us with two problems:
\begin{itemize}
\item[1.] We have to approximate all permutation channels.
\item[2.] We do not have access to pure noise $(\exp(-t\hGA_V))_{t\in\mathbb R_0^+}$ within the given control problem.
\end{itemize}
For solving the first problem we exploit that we can strongly approximate every unitary 
channel. First, to simplify the upcoming computations, let us assume w.l.o.g.~that $\sigma_\alpha$
is the identity and let us introduce the notation $\pi_l := \sigma_l \circ \sigma_{l-1}^\dagger$ for 
$l\in\lbrace 2,\ldots,\alpha\rbrace$ and $\pi_1 := \sigma_1$. Moreover, define
$$
\omega_l:=\big( \exp(-s_l\hGA_V)  \circ \prod_{m=l+1}^\alpha (\operatorname{Ad}_{\pi_m^\dagger}\circ \exp(-s_m\hGA_V) \big)(\tilde U\rho_0\tilde U^\dagger)\in\mathbb D(\mathcal H)
$$
for every $l\in\lbrace1,\ldots,\alpha\rbrace$ as then by Lemma \ref{lemma_unitary_channel_approx}
we find $\tilde\pi_l\in\mathcal U(\mathcal H)$ which we have access to within $\reach_{\Sigma_0}$ (and thus $\reach_{\Sigma_V}$) such that
\begin{equation}\label{eq:sigma_approx}
\|\tilde\pi_l^\dagger \omega_l \tilde\pi_l - \pi_l^\dagger \omega_l \pi_l\|_1 < \frac{\varepsilon}{12\alpha}\,.
\end{equation}
Then a telescope argument (cf.~Lemma \ref{lem:telescope})
% \footnote{Let $n\in\mathbb N$ and arbitrary maps $f_1,\ldots,f_n,g_1,\ldots,g_n:D\to D$ acting on some 
% common domain $D$ be given. Then one readily verifies (e.g., via induction)
% $$
% \bigcirc_{j=1}^n \,f_j-\bigcirc_{j=1}^n\, g_j=\sum\nolimits_{k=1}^n\Big( \big(\bigcirc_{j=1}^{k-1} f_j\big)\circ (f_k-g_k)\circ\big( \bigcirc_{j=k+1}^n g_k \big) \Big)\,.
% $$\label{footnote_telescope}} 
yields the estimate
\begin{align*}
\Big\|\Big(&\prod_{m=1}^{\alpha}\big( \operatorname{Ad}_{\tilde\pi_m^\dagger}{}\!\circ \exp(-s_m\hGA_V) \big)-\prod_{m=1}^{\alpha}\big( \operatorname{Ad}_{\pi_m^\dagger}{}\!\circ \exp(-s_m\hGA_V)  \big)\Big)(\tilde U\rho_0\tilde U^\dagger)    \Big\|_1\\
&\leq \sum_{m=1}^\alpha \Big\| \Big(\prod_{l=1}^{m-1}( \operatorname{Ad}_{\tilde\pi_l^\dagger}{}\!\circ \exp(-s_l\hGA_V) )\circ ( \operatorname{Ad}_{\tilde\pi_m^\dagger}- \operatorname{Ad}_{\pi_m^\dagger})\Big)(\omega_m )\Big\|_1\\
&\leq\sum_{m=1}^\alpha\Big( \prod_{l=1}^{m-1}\| \operatorname{Ad}_{\tilde\pi_l^\dagger}{}\|_\text{op}\, \|\exp(-s_l\hGA_V)\|_\text{op}\Big) \|\tilde\pi_m^\dagger \omega_m \tilde\pi_m-\pi_m^\dagger \omega_m \pi_m\|_1
%\overset{\eqref{eq:sigma_approx}}
<\frac{\varepsilon}{12}
\end{align*}
where in the last step we once again used Lemma \ref{lemma_qc_norm}.

For the second problem we luckily may approximate the pure noise as precisely as needed using
Lemma \ref{lemma_trotter_approx}. For every $l=1,\ldots,\alpha$ define 
$$
\rho_l:=\prod_{m=l+1}^\alpha (\operatorname{Ad}_{\tilde\pi_m^\dagger}{}\!\circ \exp(-s_m\hGA_V))(\tilde U\rho_0\tilde U^\dagger)\in\mathbb D(\mathcal H)\,.
$$ 
Then by Lemma \ref{lemma_trotter_approx} there exists a {\sc cptp} map $F_l$ which we have access to
% within $\reach_{\Sigma_V}(\rho_0)$ 
such that $\|\exp(-s_l\Gamma_V)(\rho_l) - F_l(\rho_l)\|_1<\frac{\varepsilon}{12\alpha}$. Just as before
\begin{align*}
\Big\|\big( \prod_{m=1}^{\alpha}\big( \operatorname{Ad}_{\tilde\pi_m^\dagger}{}\!\circ \exp(-s_m\hGA_V)  \big)-\prod_{m=1}^{\alpha}\big( \operatorname{Ad}_{\tilde\pi_m^\dagger}{}\!\circ \,F_m \big)\big)(\tilde U\rho_0\tilde U^\dagger)   \Big\|_1<\frac{\varepsilon}{12}\,.
\end{align*}

{\bf Step 3:} The current state $\tilde\rho:=\prod_{m=1}^{\alpha}\big( \operatorname{Ad}_{\tilde\pi_m^\dagger}{}\!\circ \,F_m \big)(\tilde U\rho_0\tilde U^\dagger)$ of the system is ``close to $\operatorname{diag}(x)$'' in 
the trace distance as we saw before. Now we want to apply the unitary channel generated by $W_1$ so
again by Lemma \ref{lemma_unitary_channel_approx} one finds unitary $\tilde W\in\mathcal B(\mathcal H)$
such that 
$
\|W_1\tilde\rho W_1^\dagger-\tilde W\tilde\rho\tilde W^\dagger\|_1<\frac{\varepsilon}{3}\,.
$
Then one has
$\rho_F = \operatorname{Ad}_{\tilde W}\circ\prod_{m=1}^{\alpha}\big( \operatorname{Ad}_{\tilde\pi_m^\dagger}{}\!\circ \,F_m\big)(\tilde U\rho_0\tilde U^\dagger)\in \reach_{\Sigma_V}(\rho_0)$ and by \eqref{eq:W_1-W_2}
\begin{align*}
\|\rho-\rho_F\|_1 & \leq \|W_1P(UW_2^\dagger\rho_0 W_2U^\dagger) W_1^\dagger-W_1P(\tilde U\rho_0\tilde U^\dagger) W_1^\dagger\|_1\\
&\quad+\|W_1P(\tilde U\rho_0\tilde U^\dagger) W_1^\dagger-W_1\tilde\rho W_1^\dagger\|_1+\|W_1\tilde\rho W_1^\dagger-\rho_F\|_1\,.
\end{align*}
As all channels involved are in $\mathbb S(\mathcal H)$, by Lemma \ref{lemma_qc_norm} we 
ultimately obtain
\begin{align*}
\|\rho-&\rho_F\|_1\leq \|\Adr_{W_1}\|_\text{op} \|P\|_\text{op} \|UW_2^\dagger\rho_0 W_2U^\dagger-\tilde U\rho_0\tilde U^\dagger\|_1\\
&\quad+\|\Adr_{W_1}\|_\text{op}\Big\|P(\tilde U\rho_0\tilde U^\dagger) -\prod_{m=1}^{\alpha}\big( \operatorname{Ad}_{\pi_m^\dagger}{}\!\circ \exp(-s_m\hGA_V)  \big)(\tilde U\rho_0\tilde U^\dagger)\,\Big\|_1\\
&\quad+ \|\Adr_{W_1}\|_\text{op} \Big\|\big(\prod_{m=1}^{\alpha}\big( \operatorname{Ad}_{\pi_m^\dagger}{}\!\circ \exp(-s_m\hGA_V) \big)-\\
&\hspace*{126pt}-\prod_{m=1}^{\alpha}\big( \operatorname{Ad}_{\tilde\pi_m^\dagger}{}\!\circ \exp(-s_m\hGA_V)  \big)\big)(\tilde U\rho_0\tilde U^\dagger)  \Big\|_1\\
&\quad+\|\Adr_{W_1}\|_\text{op}\Big\|\prod_{m=1}^{\alpha}\big( \operatorname{Ad}_{\tilde\pi_m^\dagger}{}\!\circ \exp(-s_m\hGA_V)  \big)(\tilde U\rho_0\tilde U^\dagger) -\tilde\rho\,\Big\|_1\\
&\quad+\|W_1\tilde\rho W_1^\dagger-\rho_F\|_1<\frac{\varepsilon}{3}+\frac{\varepsilon}{6}+\frac{\varepsilon}{12}+\frac{\varepsilon}{12}+\frac{\varepsilon}{3}=\varepsilon\,.
\end{align*}
Now what happens if we cannot apply Prop.~1 directly, i.e.~if the original and the modified eigenvalue sequence of $\rho=W_1\operatorname{diag}(x)W_1^\dagger$ or $\rho_0=W_2\operatorname{diag}(y)W_2^\dagger$ do not co{\"i}ncide? Given $\varepsilon>0$, we first of all find $N\in\mathbb N$ such that
\begin{equation}\label{eq:ineq_x_y}
\sum\nolimits_{j=N+1}^\infty x_j^\downarrow <\frac{\varepsilon}{12}\qquad \sum\nolimits_{j=N+1}^\infty y_j^\downarrow <\frac{\varepsilon}{12}\,.
\end{equation}
Take unitaries $ X$, $ Y\in\mathcal B(\mathcal H)$ so that
$
 X\rho X^\dagger=\operatorname{diag}(x_1^\downarrow,\ldots,x_N^\downarrow,*,*,\ldots)$ and $  Y\rho_0 Y^\dagger=\operatorname{diag}(y_1^\downarrow,\ldots,y_N^\downarrow,*,*,\ldots)
$
where the diagonal entries differ from the original only by a permutation on a finite block. As the tail of these new diagonals is ``already small'' we may change these elements within the realm of approximation. Given $\sum_{j=1}^N x_j^\downarrow\leq \sum_{j=1}^N y_j^\downarrow$ (because $\rho\prec\rho_0$) where this inequality may or may not be strict, we want to fill up $ X\rho X^\dagger$ with small entries such that the traces match. Define $\varphi:=\sum_{j=1}^N(y_j^\downarrow-x_j^\downarrow)$ where $0\leq\varphi <\frac{\varepsilon}{12}$ due to \eqref{eq:ineq_x_y} and $\rho\geq 0$, as well as $m:=\lceil \varphi/x_k^\downarrow\rceil\in\mathbb N$. Here $k\in\lbrace 1,\ldots,N\rbrace$ is chosen such that $x_k^\downarrow$ is the smallest non-zero entry of $(x_1^\downarrow,\ldots,x_N^\downarrow)$. The new (eigenvalue) sequences then are
$
\hat x:=(x_1^\downarrow,\ldots,x_k^\downarrow,\tfrac{\varphi}{m},\ldots,\tfrac{\varphi}{m},0,0,\ldots)$ (where $\varphi/m$ occurs $m$ times) and $\hat y:=(y_1^\downarrow,\ldots,y_N^\downarrow,0,0,\ldots)\,.
$
These sequences satisfy $\hat x^\downarrow=\hat x$, $\hat y^\downarrow=\hat y$ and $\hat x\prec\hat y$ (for this note that if $k<N$ then majorization forces $\sum_{j=1}^k x_j^\downarrow=\sum_{j=1}^k y_j^\downarrow=1$ and thus $\varphi=0$) so we could apply Prop.~1 to them. Now to $$
\omega:=\frac{\operatorname{diag}(\hat x)}{\sum_{j=1}^N y_j^\downarrow}\quad\text{ and }\quad\omega_0:=\frac{\operatorname{diag}(\hat y)}{\sum_{j=1}^N y_j^\downarrow}	\,,
$$
which are both in $\mathbb D(\mathcal H)$, we can apply the original scheme which yields a {\sc cptp} map $f$ on $\mathcal H$ such that $f(\omega_0)=\omega_F\in\mathfrak{reach}_{ \Sigma_V}(\omega_0)$ and $\|\omega-\omega_F\|_1<\frac{\varepsilon}{6}$. Of course linearity implies $\|\operatorname{diag}(\hat x)-f(\operatorname{diag}(\hat y))  \|_1<\frac{\varepsilon}{6}$. The final scheme goes as follows:
$$
\rho_0\overset{ Y}\longrightarrow Y\rho_0 Y^\dagger\approx\operatorname{diag}(\hat y)\overset{f}\longrightarrow \operatorname{diag}(\hat x)\approx X\rho X^\dagger\overset{ X^\dagger}\longrightarrow \rho
$$
More precisely, by Lemma \ref{lemma_unitary_channel_approx} we find unitaries $\tilde X,\tilde Y\in\mathcal B(\mathcal H)$ such that 
$$
\| Y\rho_0 Y^\dagger-\tilde Y\rho_0\tilde Y^\dagger\|_1<\tfrac{\varepsilon}{4},\quad \|\tilde X^\dagger   (f\circ\operatorname{Ad}_{\tilde Y})(\rho_0)\tilde X- X^\dagger   (f\circ\operatorname{Ad}_{\tilde Y})(\rho_0)    X\|_1<\tfrac{\varepsilon}{4}
$$
and $\rho_F:=(\operatorname{Ad}_{\tilde X^\dagger}\circ f\circ \operatorname{Ad}_{\tilde Y})(\rho_0) \in\mathfrak{reach}_{ \Sigma_V}(\rho_0)$. Putting things together,
\begin{align*}
\|\rho-\rho_F\|_1&\leq \|\rho- X^\dagger\operatorname{diag}(\hat x) X\|_1+\| X^\dagger\operatorname{diag}(\hat x) X- X^\dagger f(\operatorname{diag}(\hat y)) X  \|_1\\
&\hphantom{\leq \|\rho- X^\dagger\operatorname{diag}(\hat x) X\|_1}\ +\| X^\dagger f(\operatorname{diag}(\hat y)) X  -\rho_F \|_1\\
&<\frac{\varepsilon}{6}+\|\operatorname{Ad}_{ X^\dagger}\|_{\text{op}}\frac{\varepsilon}{6}+\|(\operatorname{Ad}_{ X^\dagger}\circ f)(\operatorname{diag}(\hat y))  -\rho_F \|_1\\	
&\leq\frac{\varepsilon}{3}+\|\operatorname{Ad}_{ X^\dagger}\|_{\text{op}}\|f \|_{\text{op}}\| \operatorname{diag}(\hat y)- Y\rho_0 Y^\dagger \|_1\\
&\hphantom{\leq\frac{\varepsilon}{4}}\ +\|\operatorname{Ad}_{ X^\dagger}\|_{\text{op}}\|f \|_{\text{op}} \| Y\rho_0 Y^\dagger-\tilde Y\rho_0\tilde Y^\dagger\|_1\\
&\hphantom{\leq\frac{\varepsilon}{4}}\ +\|  (\operatorname{Ad}_{  X^\dagger}\circ f\circ \operatorname{Ad}_{ \tilde Y})(\rho_0) -\rho_F \|_1<\varepsilon
\end{align*}
so $\rho\in\overline{\mathfrak{reach}_{ \Sigma_V}(\rho_0)}^1$, which concludes the proof.
\end{proof}

\bigskip

\section*{Appendix \app{D}: Theorem 1 for Unbounded Drift $H_0$}

The physically relevant case of an unbounded system Hamiltonian~$H_0$ is a bit more intricate.
To show that Eq.~\eqref{eq:GA} is well-defined in this general setting, we have to
resort to some basic results from the theory of strongly continuous one-parameter semigroups
as presented in \cite[Ch.~1.9 \& Ch.~5.5]{Davies76}.

To begin with, for selfadjoint $H_0,\ldots,H_m \in \mathcal B(\mathcal H)$ and 
$u_1, \dots, u_m \in \mathbb R$ and setting $H := H_0 + \sum_{j=0}^m u_j H_j$ and $\hH := \ad_H$,
the solution of
\begin{equation}\label{eq:evol_closed_sys}
\dot\rho(t)=-i\hH(\rho(t))\qquad  \rho(0)=\rho_0\in\mathbb D(\mathcal H)\,,
\end{equation}
is obviously given by 
applying the corresponding unitary channel %, i.e.
\begin{equation*}
\rho(t)=e^{-itH}\rho_0\, e^{itH}
\end{equation*}
for all $t\in\mathbb R_+$ (even for all $t\in\mathbb R$). The control case $H(t)$ with 
piecewise constant control amplitudes $u_j(t)$  %then 
is solved by compositions of such solutions.

Now let us assume that $H_0$ is unbounded and defined on some dense domain $D(H_0)\subset \mathcal H$. 
Then $H := H_0 + \sum_{j=0}^m u_j H_j$ is selfadjoint with dense domain $D(H) = D(H_0)$ 
and Stone's  Theorem implies that $-iH$ is the infinitesimal generator of a strongly continuous group 
$U(t) = e^{-itH}$ of unitary operators. The corresponding one-parameter group 
$\hU(t):= \Adr_{U(t)} := e^{-itH}(\cdot)\, e^{itH}$ of unitary channels (isometries\hspace{.3mm}\/!) is strongly 
continuous on the trace class $\mathcal B^1(\mathcal H)$ and hence it is generated via the 
densely defined, closed operator $-i\hH$. More precisely, one has the following result\footnote{NB: 
Davies~\cite{Davies76} proved the sequel on the Banach space of all selfadjoint trace-class operators. 
As selfadjointness is neither used nor necessary, we extend the results to $\mathcal B^1(\mathcal H)$.} 
% without further ado.}:
which allows us to identify $-i\hH$ with $=-i\ad_H$.
\begin{lemma}[\cite{Davies76}, Ch.~5, Lemma 5.1]
The domain $D(\hH)$ of $\hH$ is the set of all $\rho\in\mathcal B^1(\mathcal H)$ such that $\rho(D(H))
\subset D(H)$ and such that the operator $H\rho - \rho H$ on $D(H)$ is norm bounded with an
extension to a trace class operator on $\mathcal H$. Moreover, one has the identity 
$\hH(\rho) = H\rho - \rho H =: \ad_H(\rho)$.
\end{lemma}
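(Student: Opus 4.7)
The plan is to characterize the generator of the strongly continuous group $\hU(t) = \Adr_{U(t)}$ on $\mathcal B^1(\mathcal H)$ by testing directly against vectors in $D(H)$. As a preliminary I would verify that $(\hU(t))_{t\in\mathbb R}$ is strongly continuous on $\mathcal B^1(\mathcal H)$ with unit operator norm: reducing via the block-approximation Lemma \ref{lemma_block_approx} to rank-one operators $\langle x,\cdot\rangle y$, continuity in $t$ is immediate from strong continuity of $U(t)$ on $\mathcal H$. Standard semigroup theory then produces a closed, densely defined generator $-i\hH$ satisfying the fundamental identity $\hU(t)\rho - \rho = \int_0^t \hU(s)(-i\hH(\rho))\, ds$ in trace norm for every $\rho \in D(\hH)$. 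The task then reduces to two set-inclusions, along the way reading off $\hH(\rho) = \ad_H(\rho)$.

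For ``$\supseteq$'', assume $\rho \in \mathcal B^1(\mathcal H)$ satisfies $\rho(D(H)) \subset D(H)$ and that $H\rho - \rho H$ on $D(H)$ extends to a trace-class operator $A$. For any $x,y \in D(H)$ Stone's theorem makes $t \mapsto U(t)^\dagger x = e^{itH}x$ and $t \mapsto U(t)^\dagger y$ continuously differentiable in $\mathcal H$, so
\begin{equation*}
\frac{d}{dt}\langle x, \hU(t)\rho\, y\rangle\Big|_{t=0} = \frac{d}{dt}\langle U(t)^\dagger x, \rho\, U(t)^\dagger y\rangle\Big|_{t=0} = -i\langle Hx, \rho y\rangle + i\langle x, \rho Hy\rangle,
\end{equation*}
which by selfadjointness of $H$ and the hypothesis $\rho y \in D(H)$ equals $-i\langle x, Ay\rangle$. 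Lifting this from matrix elements to trace norm via the estimate $|\operatorname{tr}(BT)|\leq\|B\|_1\|T\|_{\text{op}}$ from Eq.~\eqref{eq:4} together with density of the rank-one operators $\langle x,\cdot\rangle y$ with $x,y \in D(H)$, one obtains $\hU(t)\rho - \rho = \int_0^t \hU(s)(-iA)\, ds$ with norm-continuous integrand; dividing by $t$ and sending $t \to 0$ forces $\rho \in D(\hH)$ and $\hH(\rho) = A$.

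The main obstacle is the reverse inclusion ``$\subseteq$'': a priori, membership in $D(\hH)$ only guarantees an abstract trace-norm limit, with no pointwise information on the action of $\rho$. For $\rho \in D(\hH)$ and $x, y \in D(H)$, pairing with the bounded rank-one operator $\langle x,\cdot\rangle y$ (using Eq.~\eqref{eq:4}) converts the trace-norm derivative into
\begin{equation*}
\frac1t\bigl(\langle x, \hU(t)\rho\, y\rangle - \langle x, \rho y\rangle\bigr)\xrightarrow{t \to 0}-i\langle x, \hH(\rho)\, y\rangle,
\end{equation*}
while Stone's theorem applied to both factors of $\langle U(t)^\dagger x, \rho\, U(t)^\dagger y\rangle$ yields the same derivative as $-i\langle Hx, \rho y\rangle + i\langle x, \rho Hy\rangle$. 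Equating gives
\begin{equation*}
\langle Hx, \rho y\rangle = \langle x, \rho Hy\rangle + \langle x, \hH(\rho)\, y\rangle \qquad (x,y \in D(H)).
\end{equation*}
For fixed $y \in D(H)$ the right-hand side is a bounded linear functional of $x$, so $\rho y \in D(H^*) = D(H)$ by selfadjointness of $H$; this is precisely the invariance $\rho(D(H)) \subset D(H)$. Substituting back and using $\langle Hx, \rho y\rangle = \langle x, H\rho y\rangle$, the identity collapses to $\hH(\rho) = H\rho - \rho H$ on $D(H)$, and since $\hH(\rho) \in \mathcal B^1(\mathcal H)$ the commutator is norm-bounded on $D(H)$ and extends uniquely to a trace-class operator on $\mathcal H$, completing the characterization. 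The delicacy is that the Stone-theorem computation needs both test vectors to lie in $D(H)$ simultaneously — which is exactly why the claim restricts the invariance condition to that dense subspace and does not demand it on all of $\mathcal H$.
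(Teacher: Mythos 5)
Your proposal is correct. Note that the paper does not prove this lemma at all --- it is quoted verbatim from Davies \cite[Ch.~5, Lemma~5.1]{Davies76} --- so what you have supplied is a self-contained substitute for that citation, and it follows the standard route: characterize the generator of the group $\hU(t)=\Adr_{U(t)}$ on $\mathcal B^1(\mathcal H)$ by testing against rank-one observables $\langle x,\cdot\rangle y$ with $x,y\in D(H)$. Both directions are sound. In ``$\subseteq$'' the key move --- reading off $\rho y\in D(H^*)=D(H)$ from boundedness of $x\mapsto\langle Hx,\rho y\rangle$ --- is exactly right, and correctly explains why the invariance condition is stated only on $D(H)$. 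In ``$\supseteq$'' the passage from matrix elements to the $\mathcal B^1$-valued identity $\hU(t)\rho-\rho=\int_0^t\hU(s)(-iA)\,ds$ is the right device; one cannot instead argue ``weak differentiability implies strong differentiability,'' since the functionals $T\mapsto\langle x,Ty\rangle$ with $x,y\in D(H)$ are only weak-$*$ dense, not norm dense, in $\mathcal B(\mathcal H)=(\mathcal B^1(\mathcal H))'$; but they do separate points of $\mathcal B^1(\mathcal H)$, which is all the integral identity needs.

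Two steps are compressed and should be spelled out. First, you compute the derivative of $s\mapsto\langle U(s)^\dagger x,\rho\,U(s)^\dagger y\rangle$ only at $s=0$, but the integral identity requires it at every $s\in[0,t]$; this works because $U(s)^\dagger$ preserves $D(H)$ and commutes with $H$ there, so with $x_s:=U(s)^\dagger x$, $y_s:=U(s)^\dagger y$ one gets the continuous derivative $-i\langle x_s,Ay_s\rangle=-i\langle x,\hU(s)(A)y\rangle$, and the scalar fundamental theorem of calculus applies. Second, the ``lifting'' itself consists of (i) existence of the Bochner integral $\int_0^t\hU(s)(-iA)\,ds$ in $\mathcal B^1(\mathcal H)$ (from strong continuity of $\hU$ on the trace class, which you established) and (ii) the fact that two trace-class operators with identical matrix elements over the dense subspace $D(H)$ coincide. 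With these two remarks made explicit, the argument is complete.
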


In the sequel, the explicit form of $D(\hH)$ is irrelevant as is the explicit construction
of $\hU_t$, e.g.,~via the Post-Widder inversion formula
$$
U(t)(\rho)=\lim_{n\to\infty}\Big(\mathbf 1-\frac{-it\hH}{n}\Big)^{-n}(\rho )
$$ 
for all $t\geq 0$ (even for all $t\in\mathbb R$) and $\rho\in\mathcal B^1(\mathcal H)$. Recall that the ODE 
\eqref{eq:evol_closed_sys} has a classical solution only for a dense set of initial values, namely for 
$\rho_0 \in D(\hH)\subset\mathcal B^1(\mathcal H)$. Nevertheless, we will write $U(t) = e^{-it\hH}$ although
this only holds formally.

\begin{lemma}\label{lemma_open_dyn_semigroup}
Let any $V\in\mathcal B(\mathcal H)$. The operator $(-i\hH-\hGA_V)$ on $\mathcal B^1(\mathcal H)$ with 
domain equal to that of $D(\hH)$ is the generator of a strongly continuous, positive, trace-preserving 
semigroup on $\mathcal B^1(\mathcal H)$, formally denoted by $(e^{-it\hH-t\hGA_V})_{t\in\mathbb R_+}$.
\end{lemma}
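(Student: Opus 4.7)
The plan is to apply the standard bounded perturbation theorem for strongly continuous semigroups (Phillips' theorem) to the unbounded generator $-i\hH$ and the bounded perturbation $-\hGA_V$, and then separately verify positivity and trace-preservation of the resulting semigroup. By the discussion preceding the lemma, $-i\hH$ with domain $D(\hH) \subset \mathcal{B}^1(\mathcal{H})$ generates the strongly continuous isometric one-parameter group $(\hU(t))_{t\in\mathbb{R}}$ of unitary channels. Since $V \in \mathcal{B}(\mathcal{H})$ and the trace class is a two-sided ideal in $\mathcal{B}(\mathcal{H})$, the estimate $\|\hGA_V(\rho)\|_1 \leq 2\|V\|_{\mathrm{op}}^2 \|\rho\|_1$ shows that $\hGA_V$ is a bounded operator on $\mathcal{B}^1(\mathcal{H})$. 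Phillips' perturbation theorem then guarantees that $-i\hH - \hGA_V$ with domain $D(\hH)$ generates a strongly continuous semigroup $(\hhG(t))_{t\in\mathbb{R}_+}$ on $\mathcal{B}^1(\mathcal{H})$.

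For trace-preservation I would work on the core $D(\hH)$: for $\rho \in D(\hH)$ one has $H\rho, \rho H \in \mathcal{B}^1(\mathcal{H})$ by definition, so cyclicity gives $\operatorname{tr}(\hH(\rho)) = \operatorname{tr}(H\rho - \rho H) = 0$, and a direct computation yields $\operatorname{tr}(\hGA_V(\rho)) = \operatorname{tr}(V^\dagger V \rho) - \operatorname{tr}(V\rho V^\dagger) = 0$ by cyclicity of the trace for products with bounded and trace-class factors. Hence $\tfrac{d}{dt}\operatorname{tr}(\hhG(t)\rho) = \operatorname{tr}((-i\hH - \hGA_V)\hhG(t)\rho) = 0$ on $D(\hH)$, and continuity of the trace together with density of $D(\hH)$ extends $\operatorname{tr}(\hhG(t)\rho) = \operatorname{tr}(\rho)$ to all of $\mathcal{B}^1(\mathcal{H})$.

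The most delicate step is positivity, and here I would invoke a Trotter-type product formula. Since $-i\hH$ generates $(\hU(t))$ and $\hGA_V$ is a bounded perturbation, the Chernoff/Trotter product formula gives
\begin{equation*}
\hhG(t)\rho = \lim_{n\to\infty}\bigl(\hU(t/n)\circ e^{-(t/n)\hGA_V}\bigr)^n \rho
\end{equation*}
strongly on $\mathcal{B}^1(\mathcal{H})$, uniformly on compact intervals of $t \geq 0$. Each $\hU(t/n)$ is a unitary channel, and each $e^{-(t/n)\hGA_V}$ is completely positive by Lindblad's theorem applied to the bounded generator $\hGA_V$; hence every approximant is a positive (in fact completely positive) map, and positivity passes to the strong limit. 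The main obstacle I anticipate is a clean justification of this Trotter product formula in the semigroup setting on the non-reflexive Banach space $\mathcal{B}^1(\mathcal{H})$; once this is in place, complete positivity of $(\hhG(t))$ follows for free by the same argument applied on ampliations $\mathcal{B}^1(\mathcal{H}\otimes \mathbb{C}^k)$, which is stronger than what the lemma asks for.
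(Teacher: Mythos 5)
Your route is genuinely different from the paper's. The paper disposes of this lemma in one line by invoking Davies \cite[Thm.~5.2]{Davies76} with $\mathscr J:=V(\cdot)V^\dagger$ and $\mathscr J^*(\mathbf 1)=V^\dagger V$; that theorem is tailored to exactly this situation and delivers existence, positivity and trace preservation of the perturbed semigroup in a single stroke. Your decomposition---Phillips' bounded perturbation theorem for existence, a core argument for trace preservation, and a product formula for positivity---is more self-contained and even yields complete positivity, but two of the three steps need repair as written. For trace preservation: the domain characterization quoted in the paper (from \cite[Ch.~5, Lemma~5.1]{Davies76}) only guarantees that the \emph{commutator} $H\rho-\rho H$ extends to a trace-class operator; it does not assert that $H\rho$ and $\rho H$ are individually trace class, so for unbounded $H$ you may not split the trace and invoke cyclicity. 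The clean fix is to observe that $t\mapsto\operatorname{tr}\big(\hU(t)(\rho)\big)=\operatorname{tr}(\rho)$ is constant and differentiate at $t=0$ to get $\operatorname{tr}(\hH(\rho))=0$ on $D(\hH)$; the $\hGA_V$ part is genuinely handled by cyclicity since $V$ is bounded.

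The more serious point is the Trotter step. The contraction-semigroup product formula that the paper itself uses in Appendix D (Reed--Simon, Thm.~X.51) presupposes that the closure of the sum generates a \emph{contraction} semigroup---but contractivity of $(e^{-it\hH-t\hGA_V})_{t\geq0}$ on $\mathcal B^1(\mathcal H)$ is precisely what one deduces \emph{after} positivity and trace preservation are known; Phillips only gives the a priori bound $e^{t\|\hGA_V\|_{\mathrm{op}}}$, so quoting that version here would be circular. You must instead use the bounded-perturbation (Chernoff-type) version of the product formula, whose stability hypothesis is satisfied because $\hU(t/n)$ is isometric and $\|e^{-(t/n)\hGA_V}\|_{\mathrm{op}}\leq e^{(t/n)\|\hGA_V\|_{\mathrm{op}}}$, so that $\|(\hU(t/n)\circ e^{-(t/n)\hGA_V})^n\|_{\mathrm{op}}\leq e^{t\|\hGA_V\|_{\mathrm{op}}}$ uniformly in $n$. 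With that reference in place your argument closes: each factor is completely positive and trace-preserving ($\hU(t/n)$ is a unitary channel, $e^{-(t/n)\hGA_V}$ is CPTP since $-\hGA_V$ is a bounded GKSL generator), and positivity survives trace-norm limits, giving in fact more than the lemma claims.
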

\begin{proof}
Apply \cite[Thm.~5.2]{Davies76} with $\mathscr J:= V(\cdot) V^\dagger $ so $\mathscr J^*(\mathbf 1)=V^
\dagger V$.
\end{proof}
\noindent
The corresponding semigroup is completely positive by its {\sc gksl} form, 
yet positivity and trace-preservation would suffice for what follows.

Thus even if $H_0$ is unbounded, for every choice of constant controls % $u_j$ 
Eq.~\eqref{eq:GA} gives rise to a strongly continuous one-parameter semigroup $(e^{-it\hH-t\hGA_V})_{t\in\mathbb R_+}$
and therefore we obtain a well-defined reachable sets $\reach_{\Sigma_V}(\rho_0)$ in the sense of 
Eq.~\eqref{eq:reach_def} for all initial values $\rho_0\in\mathbb D(\mathcal H)$. The fact, that
the ODE \eqref{eq:GA} allows classical solutions only on a dense domain 
% $D(\hH)\subset\mathcal B^1(\mathcal H)$)
of initial values, may be neglect when specifying $\reach_{\Sigma_V}(\rho_0)$.

With the stage being set, we only need the Trotter product formula for contraction 
semigroups on Banach spaces before we can highlight how the proof of Theorem~\ref{thm_normal_V} 
changes to the new frame.

\begin{lemma}[\cite{ReedSimonII}, Thm.~X.51]\label{lemma_trotter_banach}
Let $\mathbf{A}_1$ and $\mathbf{A}_2$ be generators of contraction semigroups on 
$\mathcal B^1(\mathcal H)$, i.e.~strongly continuous semigroups of operator norm less or
equal one for all $t\in\mathbb R_+$. Suppose that the closure of $(\mathbf{A}_1+\mathbf{A}_2)$
% $\upharpoonright D(A)\cap D(B)$ 
generates a contraction semigroup on $\mathcal B^1(\mathcal H)$ and denote by 
$\overline{(\mathbf{A}_1+\mathbf{A}_2)}$ its closure. Then for all $\rho\in\mathcal B^1(\mathcal H)$
and all (fixed) $t \geq 0$
$$
\lim_{n\to\infty}\Vert  (e^{-t\mathbf{A}_1/n}e^{-t\mathbf{A}_2/n})^n(\rho)-e^{-t\overline{(\mathbf{A}_1+\mathbf{A}_2)}}  (\rho)\Vert_1=0\,.
$$
\end{lemma}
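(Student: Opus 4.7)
The plan is to reduce the claim to the classical Chernoff product formula. Set $F(t):=e^{-t\mathbf{A}_1}e^{-t\mathbf{A}_2}$ as an operator on $\mathcal B^1(\mathcal H)$ for each $t\ge 0$. Being a composition of two contractions, $F(t)$ is itself a contraction, and clearly $F(0)=\mathbf 1$. Strong continuity of $t\mapsto F(t)\rho$ is immediate from strong continuity of each factor together with the uniform bound $\|e^{-t\mathbf{A}_j}\|_{\mathrm{op}}\le 1$ on compact $t$-intervals.

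The next step is to identify the pointwise right-derivative of $F$ at $0$. For $\rho\in D(\mathbf{A}_1)\cap D(\mathbf{A}_2)$, I would split
\begin{equation*}
\frac{F(t)\rho-\rho}{t}
\;=\; e^{-t\mathbf{A}_1}\!\Big(\tfrac{e^{-t\mathbf{A}_2}\rho-\rho}{t}+\mathbf{A}_2\rho\Big)
\;-\; e^{-t\mathbf{A}_1}\mathbf{A}_2\rho
\;+\; \tfrac{e^{-t\mathbf{A}_1}\rho-\rho}{t}\,.
\end{equation*}
As $t\to 0^+$, the third summand converges in $\|\cdot\|_1$ to $-\mathbf{A}_1\rho$ by definition of the generator; the second converges to $-\mathbf{A}_2\rho$ by strong continuity of $e^{-t\mathbf{A}_1}$ at $0$; and the first vanishes, since its norm is at most $\|\tfrac{e^{-t\mathbf{A}_2}\rho-\rho}{t}+\mathbf{A}_2\rho\|_1\to 0$ while $\|e^{-t\mathbf{A}_1}\|_{\mathrm{op}}\le 1$. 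Hence $F'(0)\rho=-(\mathbf{A}_1+\mathbf{A}_2)\rho$ on the common domain, which by hypothesis contains, and is a core for, $\overline{\mathbf{A}_1+\mathbf{A}_2}$.

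With the two preliminaries in place, the conclusion follows from the classical Chernoff product formula: whenever $F:[0,\infty)\to B(X)$ is a strongly continuous family of contractions on a Banach space $X$ with $F(0)=\mathbf 1$, and the closure of its pointwise right-derivative at $0$ generates a $C_0$-contraction semigroup $(e^{-t\mathbf{C}})_{t\ge 0}$, then $F(t/n)^n\to e^{-t\mathbf{C}}$ strongly, uniformly on compact $t$-intervals. Applied to the $F$ constructed above with $\mathbf{C}=\overline{\mathbf{A}_1+\mathbf{A}_2}$, which by assumption generates a contraction semigroup, this yields exactly
\begin{equation*}
\lim_{n\to\infty}\big\|(e^{-t\mathbf{A}_1/n}e^{-t\mathbf{A}_2/n})^n(\rho)
-e^{-t\overline{(\mathbf{A}_1+\mathbf{A}_2)}}(\rho)\big\|_1=0
\end{equation*}
for every $\rho\in\mathcal B^1(\mathcal H)$ and every fixed $t\ge 0$.

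The main obstacle is Chernoff's theorem itself, which I would invoke as a black box. Its standard derivation combines the Hille--Yosida/Post--Widder representation $e^{-s\mathbf{C}}\rho=\lim_m(\mathbf 1+(s/m)\mathbf{C})^{-m}\rho$ with the telescoping inequality $\|F(t/n)^n\rho-e^{-t\mathbf{C}}\rho\|_1\le n\,\|F(t/n)\xi-e^{-(t/n)\mathbf{C}}\xi\|_1$, valid because both $F(t/n)$ and $e^{-(t/n)\mathbf{C}}$ are contractions; a Taylor-type estimate on a core of $\mathbf{C}$ then closes the argument. A secondary subtlety is confirming that $D(\mathbf{A}_1)\cap D(\mathbf{A}_2)$ is in fact a core for $\overline{\mathbf{A}_1+\mathbf{A}_2}$; this is implicit in the hypothesis that the closure of the sum generates a contraction semigroup and can be justified, if needed, via the Lumer--Phillips criterion applied on the intersection.
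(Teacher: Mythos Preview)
The paper does not prove this lemma at all: it is quoted verbatim as Thm.~X.51 from Reed and Simon~\cite{ReedSimonII} and used as a black box. So there is nothing to compare on the paper's side.

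Your sketch via Chernoff's product formula is the standard route and is in fact how Reed and Simon themselves derive the Trotter formula, so you are reproducing (in outline) the proof behind the citation. The computation of $F'(0)$ on $D(\mathbf A_1)\cap D(\mathbf A_2)$ is clean, and the observation that this intersection is automatically a core for $\overline{\mathbf A_1+\mathbf A_2}$ is correct by the very definition of closure (the operator $\mathbf A_1+\mathbf A_2$ whose closure is being taken is, by convention, defined on $D(\mathbf A_1)\cap D(\mathbf A_2)$). One small quibble: the telescoping inequality you write, $\|F(t/n)^n\rho-e^{-t\mathbf C}\rho\|_1\le n\,\|F(t/n)\xi-e^{-(t/n)\mathbf C}\xi\|_1$, is not literally correct as stated (the vector $\xi$ on the right is undefined and changes along the telescope); the honest bound is $\sum_{k=0}^{n-1}\|F(t/n)^k(F(t/n)-e^{-(t/n)\mathbf C})e^{-(n-1-k)(t/n)\mathbf C}\rho\|_1$. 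Since you flag Chernoff as a black box anyway, this does not affect the validity of your reduction.
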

\noindent
With all these ingredients we are prepared to
\begin{proof}[Generalizing the proof of Thm.~\ref{thm_normal_V} to unbounded $H_0$]
``$\subseteq$'': As $V$ is assumed to be normal, one has $\hGA_V(\mathbf{1})=0$ and so the 
corresponding one-parameter semigroup is in $\mathbb S(\mathcal H)$, i.e.~it consists of bi-stochastic 
quantum maps. To see  that $e^{-it\hH-t\hGA_V}(\rho)\prec\rho$ for all $\rho\in\mathbb D(\mathcal H)$
and $t\in\mathbb R_+$  we note
\begin{itemize}
\item $e^{-it\hH}(\rho)\prec\rho$ as unitary channels do not change the eigenvalues. Thus, majorization 
cannot increase if the noise $\hGA_V$ is switched off.
\item $e^{-t\hGA_V}(\rho)\prec\rho$ by Lemma \ref{lemma_li}.
\end{itemize}
Therefore and since $\prec$ is a preorder (so in particular transitive), one has
\begin{equation}\label{eq:maj_sequence}
(e^{-it\hH/n}e^{-t\hGA_V/n})^n(\rho)\in \lbrace \omega\in\mathbb D(\mathcal H)\,|\,\omega\prec\rho\rbrace
\end{equation}
for all $n\in
\mathbb N_0$.
Now apply Lemma \ref{lemma_trotter_banach} to conclude that $(e^{-it\hH/n}e^{-t\hGA_V/n})^n(\rho)$ 
converges to $e^{-it\hH-t\hGA_V}(\rho)$ in trace norm for all $t\in \mathbb R_+$ and $n \to \infty$, Then, 
by Eq.~\eqref{eq:maj_sequence} in combination with Lemma \ref{lemma_maj_closed} (the set of majorized states is 
trace-norm closed), we conclude $e^{-it\hH-t\hGA_V}(\rho)\prec \rho$.

We saw earlier that $(e^{-it\hH})_{t\in\mathbb R_+}$ is a contractive semigroup. The same holds for 
$(e^{-t\hGA_V})_{t\in\mathbb R_+}$ by Lemma \ref{lemma_qc_norm} as well as $(e^{-it\hH-t\hGA_V})_{t\in
\mathbb R_+}$ by Lemma \ref{lemma_open_dyn_semigroup} \& \cite[Prop.~2]{vE_dirr_semigroups}
\footnote{The proof only uses positivity and trace-preservation, so we apply the respective 
result.}. The respective generators are all densely defined (on at least $D(\hH)$) and closed, 
so Lemma \ref{lemma_trotter_banach} yields
$$
\lim_{n\to\infty}\Vert  (e^{-it\hH/n}e^{-t\hGA_V/n})^n(\rho)-e^{-it\hH-t\hGA_V} (\rho)\Vert_1=0
$$
for all $\rho\in\mathcal B^1(\mathcal H)$, which shows the inclusion in question.

\medskip

``$\supseteq$'': Generalizing this inclusion to unbounded $H_0$ is easier as 
we only have to modify Lemma \ref{rem_trotter}. By 
the same line of reasoning, Lemma \ref{lemma_trotter_banach} gives
$$
\lim_{n\to\infty}\Vert  (e^{it\hH/n}e^{-(it\hH+t\hGA_V)/n} )^n(\rho)-e^{-t\hGA_V}(\rho)\Vert_1=0
$$
for all $\rho\in\mathbb D(\mathcal H)$ and $t\geq 0$. With this, the original proof 
of ``$\supseteq$'' holds without further changes---since we have never used the Trotter product formula 
explicitly in the uniform or norm topology, but only in the strong topology, i.e.~when applied to 
some density or trace-class operator, see also Lemma \ref{lemma_trotter_approx}.

Thereby all instances of Theorem~\ref{thm_normal_V} are finally proven.
\end{proof}

\bibliographystyle{mystyle}
%\bibliography{control21vJan19}

\end{document}